\theoremstyle{plain} 
\newtheorem{thm}{Theorem}[section] 
\newtheorem{lemma}[thm]{Lemma} 
\newtheorem{prop}[thm]{Proposition}
\newtheorem{proposition}[thm]{Proposition}
\def\duh{{\rm Duh}}
\def\gammaN{\gamma_N}
\def\GammaN{\Gamma_N}
\def\opDelta{\widehat{\Delta}}
\def\opB{B}
\def\opBN{\opB_N}
\def\VN{V_N}
\def\tr{{\rm Tr}}
\def\bra{\big\langle}
\def\ket{\big\rangle}
\def\la{\langle}
\def\ra{\rangle}
\def\N{{\mathbb N}}
\def\R{{\mathbb R}}
\def\ux{{\underline{x}}}
\def\frH{{\mathfrak H}}
\def\cH{{\mathcal H}}
\def\1{{\bf 1}}
\def\eqnn{\begin{eqnarray*}}
\def\eeqnn{\end{eqnarray*}}
\def\eqn{\begin{eqnarray}}
\def\eeqn{\end{eqnarray}}
\def\prf{\begin{proof}}
\def\endprf{\end{proof}}
\def\fvar{u}
\numberwithin{equation}{section}
\title[Derivation and GWP of the GP Hierarchy]
{Derivation in strong topology and global well-posedness of solutions to the Gross-Pitaevskii hierarchy}
\author{Thomas Chen}
\address{T. Chen,  
Department of Mathematics, University of Texas at Austin.}
\email{tc@math.utexas.edu}
\author{Kenneth Taliaferro}
\address{K. Taliaferro,  
Department of Mathematics, University of Texas at Austin.}
\email{ktaliaferro@math.utexas.edu}
\begin{document}

\begin{abstract}
We derive the cubic defocusing GP hierarchy in $\R^3$ from a bosonic $N$-particle Schr\"odinger equation  
as $N\rightarrow\infty$, in the strong topology corresponding to the  
space $\cH_\xi^1$ introduced  in \cite{chpa}.
In particular, we thereby eliminate the requirement of regularity $\cH_\xi^{1+}$  for the initial data used in \cite{CPBBGKY}. 
Moreover, the marginal density matrices obtained in this strong limit are allowed to be of
infinite rank. This contrasts previous results where weak-* limits were derived,
and subsequently enhanced to strong limits
based on the condition that the limiting density matrices have finite rank.
Furthermore, we prove that positive semidefiniteness of marginal density matrices is 
preserved over time, which we combine with results in \cite{CPHE}, 
to obtain the global well-posedness of solutions. 
\end{abstract}
\maketitle

\section{Introduction}
\label{subsec-exsol-GP-0}

The Gross-Pitaevskii (GP) hierarchy emerges, in the limit as $N\rightarrow\infty$,
from an $N$-body Schr\"odinger equation describing an
interacting Bose gas under Gross-Pitaevskii scaling,
via the associated BBGKY (Bogoliubov–-Born–-Green–-Kirkwood–-Yvon) hierarchy.
Factorized solutions to the GP hierarchy are determined by a nonlinear Schr\"odinger equation (NLS).
Through this procedure, one obtains a rigorous derivation of the NLS as a mean field description of 
the dynamics of a Bose-Einstein condensate.   

In this paper, we extend previous results proven in \cite{CPBBGKY,CPHE}
concerning the derivation and global well-posedness of
the cubic GP hierarchy in $\R^3$.
We derive the GP hierarchy from the BBGKY hierarchy as $N\rightarrow\infty$ 
in the strong topology relative to the  space $\cH_\xi^1$ defined in
\eqref{eq-cHxi-def-0-1}, below, 
and we remove the requirement of regularity $\cH_\xi^{1+\delta}$, with an arbitrary $\delta>0$, for the initial data used in \cite{CPBBGKY}.
Moreover, we prove that solutions to the cubic defocusing GP hierarchy 
remain positive semidefinite over time, which we use, in combination with   
the higher order energy functionals and related results from \cite{CPHE}, to prove global well-posedness.

The first derivation of the nonlinear Hartree equation (NLH) as a mean field description of a
quantum manybody theory was given by Hepp in \cite{he1} using the Fock space formalism and coherent states.
Subsequently, Spohn provided a derivation of the NLH using the BBGKY hierarchy  in \cite{sp}.
More recently, this topic was revisited by Fr\"ohlich, Tsai and Yau  in \cite{frtsya}.
In a series of very important works, Erd\"os, Schlein and Yau gave the derivation of 
the NLS and NLH for a wide range of situations  \cite{esy1,esy2,esy3,esy4};
we will outline the main steps of their construction below, and will also mention related works of other authors.

The problems studied here are closely related to the study of Bose-Einstein condensation, 
where fundamental progress was made
in recent years, see \cite{ailisesoyn,lise,lisesoyn,liseyn, liseyn2} and the references therein.

\subsection{Definition of the Model and Background} 
   
We consider a system of $N$ bosons in $\R^3$ described by a wave function 
$\Phi_{N} \in L_{sym}^2(\R^{3N})$ that satisfies the $N$-body Schr\"odinger equation
\begin{align}\label{ham1}
    i\partial_t\Phi_N=H_N\Phi_N, 
\end{align}
where the Hamiltonian $H_N$ is the self-adjoint operator on $L^2(\mathbb{R}^{3N})$ given by
\begin{align}\label{ham2}
    H_{N}=\sum_{j=1}^{N}(-\Delta_{x_{j}})+\frac1N\sum_{1\leq i<j \leq N}V_N(x_{i}-x_{j}).
\end{align}
Here, $L^2_{Sym}(\mathbb{R}^{3N})$ is the subspace of $L^2(\mathbb{R}^{3N})$ that is invariant under permutations \eqref{sym} of the $N$ particle variables,  
\begin{align}\label{sym}
	\Phi_N(x_{\pi( 1)},x_{\pi (2)},...,x_{\pi (N)})=\Phi_N(x_1, x_2,..., x_N) 
	\; \; \; \; \; \; \; \; \forall \pi \in S_{N} \,,
\end{align} 
where  $S_{N}$ is the $N$-th symmetric group.  
We note that permutation symmetry \eqref{sym} is preserved by the $N$-body Schr\"odinger equation \eqref{ham1}.
The potential $V_N$ satisfies 
\eqn\label{betadef}
    V_N(x)=N^{d\beta}V(N^\beta x)\,,
\eeqn
where $V \in \mathcal{S}(\R^3)\backslash\{0\}$ is spherically symmetric and
nonnegative. 
The parameter $\beta$ typically has values in $(0,1]$ (see the discussion below equation \eqref{beta});
in this work, we will have $0<\beta<\frac14$. 

Since the $N$-body Schr\"{o}dinger equation \eqref{ham1} is linear and $H_N$ is self-adjoint,  the global 
well-posedness of solutions in $L^2_{sym}(\mathbb{R}^{3N})$ is clear.

\subsubsection{BBGKY hierarchy}
To begin with, one considers the density matrix 
\eqn
    \gamma_{\Phi_N}^{(N)}=|\Phi_N\rangle\langle\Phi_N| \,,
\eeqn
and its marginals, 
\begin{align}\label{eq-gammaPhiNk-def-1}
 \gamma_{\Phi_N}^{(k)}(t,\underline{x}_k,\underline{x}_k'):=
 \int\Phi_N(t,\underline{x}_k,\underline{x}_{N-k}) 
 \overline{\Phi_N(t,\underline{x}_k',\underline{x}_{N-k})}\,d\underline{x}_{N-k}
 \;,\;\;1\le k\le N\,,
\end{align} 
where $(\underline{x}_k,\underline{x}_{N-k})\in\mathbb{R}^{3k}\times\mathbb{R}^{3(N-k)}$.
The marginal density matrices satisfy the property of {\em admissibility},
\begin{align}\label{eq-admissdef-1}
	\gamma^{(k)}_{\Phi_N}=\tr_{k+1}(\gamma^{(k+1)}_{\Phi_N})
	, \; \; \; \; k= 1,\dots,N-1 \,,
\end{align}
they define positive semidefinite operators on $L^2(\mathbb{R}^{3k})$,
and  $\tr \gamma_{\Phi_N}^{(k)}=\|\Phi_N\|_{L_{s}^{2}(\mathbb R^{3N})}^2=1$, for an arbitrary $N\in\N$. 

Moreover, each $\gamma_{\Phi_N}^{(k)}$ is completely symmetric under permutation of particle variables,
and hermitean. That is,   
\begin{align}\label{symmetry}
	\gamma^{(k)}(x_{\pi (1)}, ...,x_{\pi (k)};x_{\pi'( 1)}^{\prime},
	 ...,x_{\pi'(k)}^{\prime})&=\gamma^{(k)}( 	x_1, ...,x_{k};x_{1}^{\prime}, ...,x_{k}^{\prime})\qquad\text{and}\\ \gamma^{(k)}(\ux_k;\ux_k')&=\overline{\gamma^{(k)}(\ux_k';\ux_k) }\nonumber
\end{align}
for all permutations $\pi,\pi'\in S_k$.

It follows from the $N$-body Schr\"odinger equation \eqref{ham1} that  
\begin{align}\label{von}
	i\partial_{t}\gamma_{\Phi_N}^{(N)}(t)=[H_{N}, \gamma_{\Phi_N}^{(N)}(t)] \, .
\end{align} 
Accordingly, the $k$-particle marginals satisfy the BBGKY hierarchy 
\begin{align} 
	i\partial_{t}\gamma_{\Phi_N}^{(k)}(t,\ux_k;\ux_k')
	 &=
	- \; (\Delta_{\ux_k}-\Delta_{ \ux_k'})\gamma_{\Phi_N}^{(k)}(t,\ux_k,\ux_k')
	\nonumber\\
	&\hspace{0.5cm}
	+ \frac{1}{N}\sum_{1\leq i<j \leq k}[V_N(x_i-x_j)-V_N(x_i^{\prime}-x_{j}')]
	\gamma_{\Phi_N}^{(k)}(t, \ux_{k};\ux_{k}') 
	 \nonumber\\
	&\hspace{0.5cm}+\frac{N-k}{N}\sum_{i=1}^{k}\int dx_{k+1}[V_N(x_i-x_{k+1})-V_N(x_i^{\prime}-x_{k+1})]
	\label{eq-bbgky-2}\\	&\hspace{5cm}
	\gamma_{\Phi_N}^{(k+1)}(t, \ux_{k},x_{k+1};\ux_{k}',x_{k+1})
	\label{BBGKY-0}
\end{align}
for $1\leq k < N$,
where $\Delta_{\ux_k}:=\sum_{j=1}^{k}\Delta_{x_j}$, and similarly for $\Delta_{\ux_k'}$.  
 
\subsubsection{Derivation of the GP hierarchy.} 
In \cite{esy1,esy2,ey}, the authors consider 
factorizing initial data, i.e.,
\begin{align}
    \gamma_{\Phi_N}^{(k)}(0,\ux_k;\ux_k')
    \,=\,
     \gamma_0^{(k)}(\ux_k;\ux_k')=\prod_{j=1}^{k}\phi_0( x_{j})\overline{\phi_0( x_{j}^{\prime}})\,,\label{fac}
\end{align}
as $N\rightarrow\infty$, where $\phi_0\in H^1(\mathbb{R}^{3k})$.
In particular, they prove that in the limit $N\rightarrow\infty$, solutions to 
the BBGKY hierarchy converge
in the weak-* topology, $\gamma^{(k)}_{\Phi_{N}}\rightharpoonup^*\gamma^{(k)}$ for $k\in\N$, 
on the space of trace class marginal
density matrices. 
Moreover, it is proven in \cite{esy1,esy2,ey} that the marginal density 
matrices obtained in the weak-* limit
satisfy the infinite hierarchy  
\begin{align}
	i\partial_{t}\gamma^{(k)}(t,\ux_k;\ux_k')
	&=
 	- \, (\Delta_{\ux_k}-\Delta_{\ux_k'})\gamma^{(k)}(t,\ux_k;\ux_k')
	\label{eq-GP-0-0}\\ 
	&\hspace{0.5cm}+ \, \kappa_0 \sum_{j=1}^{k}   \left(B_{j, k+1} 
	\gamma^{(k+1)}\right)(t,\ux_k  ; \ux_k' )  \;\;,\;\;\;k\in\N\,,
	\label{eq-GPdef-1-0}
\end{align}
which is referred to as the {\em Gross-Pitaevskii (GP) hierarchy}.
The interaction operator is defined by
\begin{align}
	&(B_{j, k+1} \gamma^{(k+1)})(t,  \ux_k ; \ux_k' )
	\nonumber\\
	&:=
	\int dx_{k+1}dx_{k+1}'[\delta(x_j-x_{k+1})\delta(x_{j}-x_{k+1}^{\prime})-\delta(x_j^{\prime}-x_{k+1})
	\delta(x_{j}^{\prime}-x_{k+1}^{\prime})]
	\nonumber\\	&\quad\quad\quad\quad\quad\quad\quad\quad\quad\quad\quad\quad\quad\quad\quad\quad
	\gamma^{(k+1)}(t,\ux_k, x_{k+1};\ux_k', x'_{k+1}) \,.
	\label{beta} 
\end{align}
In the case $\beta<1$, one obtains a coupling constant $\kappa_0=\int V(x) dx$ 
(corresponding to the Born approximation of the scattering length).
In the case $\beta=1$, the coefficient $\kappa_0$ is the {\em scattering length}; 
the derivation of the GP hierarchy in this case is much more difficult, \cite{esy1,esy2}.
We will have $0<\beta<\frac14$ in this paper, and set $\kappa_0=1$.

In \cite{esy2,esy1,ey}, solutions of the GP hierarchy are studied in   spaces
of $k$-particle marginals 
\eqn\label{eq-frH1-def-1}
    \frH^1\,:=\,\{(\gamma^{(k)})_{k\in\N} \, | \,\tr (|S^{(k,1)}\gamma^{(k)}|) \, < \, M^k
    \;{\rm for\;some\;constant\;}0<M<\infty\}
\eeqn  
where for $\alpha>0$,
\begin{align}
	S^{(k,\alpha)}:= 
	\prod_{j=1}^k (1-\Delta_{x_j})^{\alpha/2}(1-\Delta_{x_j'})^{\alpha/2} \,.
\end{align}
The solutions to the GP hierarchy obtained from the weak-* limit as described above exist  
{\em globally} in $t$, and are positive semidefinite.

\subsubsection{NLS from factorized solutions of GP}

Given factorized initial data \eqref{fac}, one can easily verify that
\begin{align*}
\gamma^{(k)}(t,\ux_k;\ux_k')=\prod_{j=1}^{k}\phi(t,x_{j})\overline{\phi(t,x_{j}^{\prime})}
\end{align*}
is a solution  (referred to as a {\em factorized solution})
of the  GP hierarchy \eqref{eq-GP-0-0}   if $\phi(t)\in H^1(\R^d)$ solves the defocusing cubic NLS,
\begin{align}
	i\partial_t\phi=- \Delta_x \phi + |\phi|^2\phi\,,
\end{align}
for $t\in I\subseteq\R$, and $\phi(0)=\phi_0\in H^1(\R^d)$.  
In this precise sense, the NLS emerges as a mean field 
description of the dynamics of Bose-Einstein condensates.

\subsubsection{Uniqueness of solutions of GP hierarchies.}

The most involved part in this analysis is the proof of
uniqueness of solutions to the GP hierarchy,
and was achieved in Erd\"os-Schlein-Yau in \cite{esy2,esy1,ey}, 
in the space $\frH^1$ using  
high dimensional singular integral estimates organized with 
Feynman graph expansions.

Subsequently, Klainerman and Machedon \cite{KM} presented 
an alternative method for proving uniqueness in a  space of density matrices defined by the 
Hilbert-Schmidt type Sobolev norms 
\begin{align}\label{eq-gamma-norm-def-0-1}
	\| \gamma^{(k)} \|_{H^1_k}:=\| S^{(k,1)} \gamma^{(k)} \|_{L^2(\R^{3k} \times \R^{3k})} \,.
\end{align}
While this is a different (strictly larger) space of marginal density matrices than
the one considered by Erd\"os, Schlein, and Yau, \cite{esy1,esy2},
the authors of \cite{KM} impose an additional a priori condition on  
space-time norms of the form
\begin{align} \label{intro-KMbound} 
     \|B_{j;k+1} \gamma^{(k+1)}\|_{L^2_{t\in[0,T]} H^1_k}<C^k \, , 
\end{align}
for some arbitrary but finite $C$ independent of $k$. 
The method of Klainerman-Machedon in \cite{KM} combines techniques from dispersive nonlinear PDE's
with a reformulation of the combinatorial method introduced in \cite{esy1,esy2,esy3,esy4}, which is referred to 
as a ``boardgame'' argument.

The Klainerman-Machedon framework was used by  Kirkpatrick, Schlein, and Staffilani  in  \cite{kiscst}, to give a derivation of the cubic defocusing NLS in dimensions $d=1,2$, and by Chen and Pavlovi\'c in  \cite{chpa}, to derive the quintic NLS for $d=1,2$.

As another line of research in this area, 
the study of the well-posedness theory of the GP hierarchy was initiated in 
\cite{CP,chpatz1,chpa4,CPHE}.
The authors introduced Banach spaces of sequences of marginal density matrices 
$\Gamma\in\bigoplus_{k=1}^\infty L^2(\R^{3k}\times\R^{3k})$
\eqn\label{eq-cHxi-def-0-1}
    \cH_\xi^\alpha \, := \, \{ {\rm symmetric} \; \Gamma=(\gamma^{(k)})_{k\in\N} \,| \,
    \| \, \Gamma \, \|_{\cH_\xi^\alpha} <\infty \}
\eeqn  
with 
\begin{align}\label{eq-KM-aprioriassumpt-0-1}
	\| \, \Gamma \, \|_{\cH_\xi^\alpha}:=	\sum_{k\in\N} \xi^k \, \| \, \gamma^{(k)} \, \|_{H^\alpha} 
	\;\;,\;\;\;\xi>0\,,
\end{align}
where
\begin{align}\label{eq-gamma-norm-def-0-0}
	\| \gamma^{(k)} \|_{H^\alpha}:=
	\| S^{(k,\alpha)} \gamma^{(k)} \|_{L^2(\mathbb{R}^{3k}\times\mathbb{R}^{3k})}
\end{align} 
is of Hilbert-Schmidt type, as in \eqref{eq-gamma-norm-def-0-1}.   
Those spaces are equivalent to those considered by Klainerman and Machedon in \cite{KM}.
Here, we call $\Gamma =(\gamma^{(k)})_{k\in\N}$ {\em symmetric} if each $\gamma^{(k)}(\ux_k,\ux_k')$
satisfies \eqref{symmetry}.
Moreover, we call $\Gamma$ {\em positive semidefinite} if $\gamma^{(k)}$ defines a positive semidefinite integral
operator on $L^2(\mathbb{R}^{3k})$ for all $k$.

We also define the spaces
\begin{align}\label{frakH-def-1}
 	\frH_\xi^\alpha:=\Big\{ \,\text{symmetric }\Gamma \, \in \, \bigoplus_{k=1}^\infty L^2(\R^{3k}\times\R^{3k}) \, \Big| \, 
 	\|\Gamma\|_{\frH_\xi^\alpha} < \, \infty \, \Big\}
\end{align}
where for $\alpha>0$,
\begin{align}\nonumber 
	\|\Gamma\|_{\frH_\xi^\alpha}=\sum_{k=1}^\infty \xi^{ k} 
	\tr(| S^{(k,\alpha)}  \gamma^{(k)} |) \,
\end{align}
is of trace-norm type.
Clearly, $\frH^1=\cup_{\xi>0}\frH^1_\xi$ corresponding to \eqref{eq-frH1-def-1}.
Moreover, $\frH_\xi^\alpha\subset\cH_\xi^\alpha$ holds for any $\alpha>0$ and $0<\xi<1$.
 
In \cite{CPBBGKY}, Chen and Pavlovi\'c proved that solutions of the $N$-body Schr\"odinger equation 
converge to the solution of the GP hierarchy in $\cH_\xi^1$, 
for values $\beta\in(0,1/4)$, provided that the initial data satisfies $\Gamma_0\in\frH_{\xi'}^{1+\delta}$ for an arbitrary $\delta>0$,
and for $\xi$ sufficiently small (depending on $0<\xi'<1$).  
No factorization of solutions was assumed.

More recently, 
a derivation of the GP hierarchy in Klainerman-Machedon type spaces 
was given by X. Chen and J. Holmer in \cite{ChXHol13}, for
values $\beta\in(0,2/3)$. Assuming a regularity requirement that follows from the condition \eqref{energy condition} in Theorem \ref{derivation} of our paper, 
%that $\sup_N\|\gamma_N^{(k)}\|_{L^\infty_t(\R)H^1(\R^{3k}\times\R^{3k})}<C^k$  is satisfied by the solution to the $N$-body Schr\"odinger system for all $k\in\N$, with $C$ independent of $k$,
they prove that solutions to the BBGKY hierarchy converge to solutions of the GP hierarchy
satisfying the Klainerman-Machedon condition \eqref{intro-KMbound}.
This convergence is shown in the weak-* topology on the space of trace class marginal density matrices.  
See also  \cite{xch3}.

Rodnianski and Schlein in \cite{rosc} investigated the rate of convergence to the NLH, based on the approach 
of Hepp \cite{he1}, which led to many further developments, including works of 
Grillakis-Machedon and G-M-Margetis  \cite{grma2012,grma,grmama2,grmama},
X.Chen \cite{xch1}, and Lee-Li-Schlein \cite{LiLeeSchlein11}. 
Many authors have contributed to this very active research field, and introduced a variety of different
approaches; see for instance \cite{adgote,anasig,eesy,frgrsc,frknpi,frknsc,pick}.  

\subsection{Outline of Main Results}

The main results of this paper can be summarized as follows: 
\begin{itemize}
\item
\underline{\em Derivation in $\cH^1_\xi$:}
We show that solutions to the $N$-BBGKY hierarchy with initial data $\Gamma_{0,N}$ converge to those of the GP hierarchy strongly in $C([0,T],\mathcal{H}^1_\xi)$ as $N\rightarrow\infty$ 
when the intial data is in $\mathfrak{H}^1_\xi$, see \eqref{frakH-def-1}.
In \cite{CPBBGKY}, this convergence is obtained with initial data in $\mathcal{H}^{1+\delta}_{\xi'}$, 
for an arbitrary, small $\delta>0$ extra regularity.
In this paper, we eliminate this condition, and provide the derivation of the 
GP hierarchy in the energy space.  
The detailed discussion is given in Section \ref{positivity_proof_section}.
\\
\item
\underline{\em Strong convergence for limits of infinite rank:}
The convergence proven in the work at hand  is established in the 
{\em strong} topology on $\mathcal{H}^1_\xi$.  
% where previous works have shown convergence in the {\em weak-*} topology on the space of trace class marginal density matrices.
%In the case where the initial data isn't necessarily a finite sum of factorized states \eqref{fac}, previous works 

We note that in previous works following the BBGKY approach  (except for \cite{CPBBGKY}), including \cite{esy1,esy2,esy3,esy4,ChXHol13}, 
convergence along a subsequence
is shown in the {\em weak-*} topology on the space of trace class marginal density matrices,
using a compactness argument.  Subsequently, this weak-* convergence is enhanced to
strong convergence in trace norm, due to the special case of 
the limiting density matrices being of {\em finite rank} (since factorized initial data are considered).
In the finite rank case
(i.e., the rank of $\gamma^{(k)}$ is bounded uniformly in $k$), the trace norm, corresponding to  $\frH_\xi^1$, 
is equivalent to the Hilbert-Schmidt norm, corresponding to $\cH_\xi^1$.
In contrast, 
we obtain a strong limit in $\cH_\xi^1$ without any finite rank requirement.
\\
\item
\underline{\em Global well-posedness:}
Combining the higher order energy functional introduced in  \cite{CPHE}, combined with the quantum de Finetti theorem
as formulated by Lewin-Nam-Rougerie in \cite{lenaro}, we prove that solutions to the cubic defocusing GP hierarchy
are globally well-posed. To this end, we prove that those solutions remain positive semidefinite over time if 
the initial data are positive semidefinite; this allows us to invoke a key result in \cite{CPHE} to arrive at global well-posedness.
This is carried out in Section \ref{sec-gwp-1}.
\\
\item
\underline{\em Global in time derivation of GP hierarchy:}
By combining our derivation of the GP hierarchy locally in time, and global well-posedness of the GP hierarchy, 
we arrive at a derivation of the GP hierarchy on arbitrarily large time intervals $[0,T]$; 
the details are presented in Section \ref{global derivation}.   
\\
\end{itemize}

%\section{Main Theorems and Proof Strategy}
\section{Statement of Main Theorems}
  
In this section, we present the main theorems proven in this paper. Our first main result provides the derivation of the GP hierarchy
from a bosonic $N$-body Schr\"odinger system via the associated BBGKY hierarchy as $N\rightarrow\infty$,
in the {\em  strong topology} with respect to the energy space $L^\infty_{t\in[0,T]}\cH_\xi^1$, for a suitable $0<\xi<1$.
In particular, this strong limit yields solutions $\Gamma(t)=(\gamma^{(k)}(t))_k$ to the GP hierarchy where
$\gamma^{(k)}(t)$ does not need to be of finite rank. As noted above, a strong limit was obtained in the previous works
\cite{esy1,esy2,esy3,esy4,ChXHol13} only for the special case where the 
density matrices $(\gamma^{(k)}(t))_k$  have finite rank.

Moreover, our result removes an extra regularity condition on the initial data which was
assumed in \cite{CPBBGKY}. In \cite{CPBBGKY}, solutions to the GP hierarchy were derived from the
BBGKY hierarchy under
the requirement that $\Gamma_0\in\mathfrak{H}^{1+\delta}_{\xi'}$
for an arbitrarily small, but positive $\delta>0$. Here, we assume that $\Gamma_0\in\mathfrak{H}^{1}_{\xi'}$.

\begin{thm}
\label{derivation}
Let $(\Phi_N)_N$ be a sequence of solutions to the N-body Schr\"odinger equation \eqref{ham1}
with the corresponding marginal density matrices 
$\gamma_{\Phi_N}^{(k)}(t)$ given by \eqref{eq-gammaPhiNk-def-1}.
Suppose that
\begin{align}
    \la\Phi_N(0),H_N^k\Phi_N(0)\ra<C^kN^k
    \label{energy condition}
\end{align}
and $\|\Phi_N(0)\|_{L^2}=1$ for all $N\in\mathbb{N}$ and $k\le N$, where $C$ does
not depend on $k$ or $N$.  
Moreover, assume that for some $0<\xi'<1$, and every $N\in\mathbb{N}$, we have
\begin{align*}
    \Gamma^{\Phi_N}(0)=(\gamma_{\Phi_N}^{(1)}(0),\dots,\gamma_{\Phi_N}^{(N)}(0),0,\cdots)\in\cH_{\xi'}^{1}
\end{align*}
and that
\begin{align*}
    \Gamma_0:=\lim_{N\rightarrow\infty}\Gamma^{\Phi_N}(0)
\end{align*}
exists in $\cH_{\xi'}^1$.
Define the truncation operator $P_{\le K}$ by
\begin{align*}
    P_{\le K(N)}\Gamma=(\gamma^{(1)},\dots,\gamma^{(K(N))},0,\dots),
\end{align*}
where $\tfrac{1}{2}b_1\log N\le K(N)\le b_1\log N$ for some $b_1>0$.
Then, for sufficently small $b_1>0$ (depending only on $\beta$ (see \eqref{betadef})) and sufficiently small $\xi>0$ (depending on only on $\xi'$) and sufficiently small $T>0$ (depending only on $\xi$), the limit
\begin{align*}
    \Gamma:=\lim_{N\rightarrow\infty}P_{\le K(N)}\Gamma^{\Phi_N}
\end{align*}
exists in $L^\infty_{t\in [0,T]}\mathcal{H}^1_\xi$ and satisfies the GP hierarchy with initial data $\Gamma_0\in\frH^1_{\xi'}$.  Moreover,
\begin{align*}
    B\Gamma=\lim_{N\rightarrow\infty}B_NP_{K\le N}\Gamma^{\Phi_N}
\end{align*}
holds in $L^2_{t\in [0,T]}\mathcal{H}^1_\xi$.
The abbreviated notations $B$ and $B_N$ for the interaction operators
are defined in \eqref{chpa2-pGP} and \eqref{eq-BBGKY-condensed-B}, below.
\end{thm}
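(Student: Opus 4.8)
The plan is to follow the Klainerman-Machedon / Chen-Pavlović a priori strategy but to extract a \emph{strong} limit in $C([0,T],\cH^1_\xi)$ directly, rather than passing to a weak-$*$ subsequential limit. First I would use the energy bound \eqref{energy condition} to obtain, via the well-known energy-estimate machinery (propagating $\la\Phi_N,H_N^k\Phi_N\ra<C^kN^k$ to Sobolev bounds on the marginals $\gamma^{(k)}_{\Phi_N}(t)$ uniformly in $t$), an a priori bound of the form $\|P_{\le K(N)}\Gamma^{\Phi_N}(t)\|_{\cH^1_\xi}\le C_0$ uniformly in $t\in[0,T]$ and $N$, together with the space-time bound $\|B_N P_{\le K(N)}\Gamma^{\Phi_N}\|_{L^2_{t\in[0,T]}\cH^1_\xi}\le C_0$ for $\xi$ small depending on $\xi'$ and $T$ small depending on $\xi$; here the restriction $0<\beta<\tfrac14$ and the choice $\tfrac12 b_1\log N\le K(N)\le b_1\log N$ enter to control the error terms coming from the truncation of the BBGKY hierarchy (the $\frac1N\sum_{i<j\le k}$ term, the $\frac{N-k}{N}$ versus $1$ discrepancy in the coupling, and the difference between the smoothed potential $V_N$ and the delta in $B_{j,k+1}$).

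Next I would write the Duhmel (integral) form of both hierarchies: $P_{\le K(N)}\Gamma^{\Phi_N}(t)$ solves the truncated BBGKY Duhamel equation $\Gamma_N(t)=\mathcal{U}(t)\Gamma_N(0)+i\int_0^t\mathcal{U}(t-s)B_N P_{\le K(N)}\Gamma^{\Phi_N}(s)\,ds + (\text{error}_N)(t)$, where $\mathcal U$ is the free evolution group, which is an isometry on $\cH^1_\xi$. Using the space-time bound on $B_N P_{\le K(N)}\Gamma^{\Phi_N}$ and the contraction estimate for the operator $\Gamma\mapsto i\int_0^t\mathcal U(t-s)B\Gamma(s)\,ds$ in the Strichartz-type norm (which holds for $T$ small depending on $\xi$, exactly as in the local well-posedness theory for the GP hierarchy in $\cH^1_\xi$), I would show that the sequence $(P_{\le K(N)}\Gamma^{\Phi_N})_N$ is \emph{Cauchy} in $L^\infty_{t\in[0,T]}\cH^1_\xi$: the difference of two members at levels $N,M$ is controlled by $\|\Gamma^{\Phi_N}(0)-\Gamma^{\Phi_M}(0)\|_{\cH^1_{\xi'}}$ (which $\to0$ by hypothesis that $\Gamma_0$ exists in $\cH^1_{\xi'}$), plus the error terms $\|\text{error}_N\|$, $\|\text{error}_M\|$, plus a term $\le \tfrac12\|\Gamma_N-\Gamma_M\|_{L^\infty_t\cH^1_\xi}$ absorbed by the contraction. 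Once the errors are shown to vanish as $N,M\to\infty$ (this is where $K(N)\sim b_1\log N$ and $\beta<\tfrac14$ are used quantitatively), completeness of $L^\infty_{t\in[0,T]}\cH^1_\xi$ gives the strong limit $\Gamma$, and the same Cauchy estimate applied to $B_N P_{\le K(N)}\Gamma^{\Phi_N}$ in the $L^2_{t}\cH^1_\xi$ norm gives $B\Gamma=\lim_N B_N P_{\le K(N)}\Gamma^{\Phi_N}$ there. Passing to the limit in the Duhamel identity term by term then shows $\Gamma$ solves the GP-hierarchy Duhamel equation with data $\Gamma_0$, hence the GP hierarchy, and $\Gamma_0\in\frH^1_{\xi'}$ because trace-class convergence of the $\gamma^{(k)}_{\Phi_N}(0)$ (which hold individually) upgrades the $\cH^1_{\xi'}$ convergence of the initial data to $\frH^1_{\xi'}$.

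The main obstacle is the control of the BBGKY-to-GP error terms uniformly over the $\sim b_1\log N$ retained levels, and showing they tend to zero in the \emph{summed} $\cH^1_\xi$ norm as $N\to\infty$: each of the three error sources produces a sum over $k\le K(N)$ weighted by $\xi^k$, and one must show the combinatorial growth of the GP-type interaction estimates at level $k$ is beaten by the gain from the small parameters ($N^{-1+c\beta}$ from the potential-vs-delta comparison and the $\tfrac{k}{N}$ and $\tfrac1N$ prefactors), which forces both $b_1$ small (so that $N^{b_1}$-type combinatorial factors stay subpolynomial) and $\beta<\tfrac14$. A secondary delicate point is that $B$ and $B_N$ are unbounded on $\cH^1_\xi$, so every step above must be carried out at the level of the Strichartz / space-time norms where these operators become bounded after composition with the free evolution, exactly the framework of \cite{KM,CPBBGKY,chpatz1}; I would cite those estimates rather than re-derive them.
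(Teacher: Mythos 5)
Your roadmap identifies the right ingredients (the ESY energy estimate to tame the level--$K(N)$ truncation error, the Klainerman--Machedon spacetime bounds to make sense of $B$, $B_N$, the constraints $\beta<\tfrac14$ and $K\sim\log N$), but it is organized quite differently from the paper and leaves two non-trivial points under-specified. The paper does \emph{not} argue that $(P_{\le K(N)}\Gamma^{\Phi_N})_N$ is Cauchy. Instead it introduces two anchors: the solution $\Gamma^K$ of the $K$-truncated GP hierarchy with data $P_{\le K}\Gamma_0$, constructed in \cite{chpa4} (and known there to satisfy $\Gamma^K\to\Gamma$, $\opB\Gamma^K\to\opB\Gamma$ as $K\to\infty$), and the solution $\Gamma_N^K$ of the $(K,N)$-BBGKY hierarchy with the \emph{same} truncated initial data $P_{\le K}\Gamma_0$. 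It then proves $\Gamma_N^{K(N)}\to\Gamma^{K(N)}$ (Proposition~\ref{prp-GammaKN-conv-1}, via Lemmas~\ref{lm-R1bound-1-new}, \ref{lm-R1bound-2-new} and the boardgame estimate Lemma~\ref{lm-BGamma-Cauchy-1}), proves $\|\opB_N\Gamma_N^{K(N)}-P_{\le K(N)-1}\opB_N\Gamma^{\Phi_N}\|\to 0$ via the energy bound (Propositions~\ref{energyprop}--\ref{7.1}), and finishes by a three-term triangle inequality. This anchoring buys you two things your Cauchy scheme does not get for free: (a) an $N$-independent object in the middle, so you never have to handle $(B_N-B_M)\Gamma_M$ cross terms or the mismatch $K(N)\ne K(M)$ inside the Duhamel iterate; (b) direct identification of the limiting interaction, since $\opB\Gamma$ is obtained as $\lim_K \opB\Gamma^K$, whereas in your scheme you would separately have to show that the Cauchy limit $\Theta:=\lim_N \opB_NP_{\le K(N)}\Gamma^{\Phi_N}$ coincides with $\opB$ applied to $\lim_N P_{\le K(N)}\Gamma^{\Phi_N}$ --- not automatic, since $\opB$ is unbounded on $\cH^1_\xi$, and requires a distributional or spacetime commutation argument.

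Two further soft spots. What you call ``the contraction estimate for $\Gamma\mapsto i\int_0^tU(t-s)\opB\Gamma(s)\,ds$'' is not an honest contraction on $L^\infty_t\cH^1_\xi$: at fixed time $\opB$, $\opB_N$ are unbounded, and the absorption you want only holds after a full iterated-Duhamel expansion with the boardgame recombination (this is precisely Lemma~\ref{lm-BGamma-Cauchy-1} in the appendix), so the Cauchy argument would have to be carried out at the level of the whole Duhamel tree, not a single Picard step. And the claim $\Gamma_0\in\frH^1_{\xi'}$ does not follow from ``trace-class convergence of the $\gamma^{(k)}_{\Phi_N}(0)$'' --- Hilbert--Schmidt convergence gives no control on the trace norm; what the paper actually uses is the energy hypothesis \eqref{energy condition} together with Proposition~\ref{energyprop}, which gives the uniform bound $\tr\,S^{(k,1)}\gamma^{(k)}_{\Phi_N}(t)\le C^k$ and hence membership in $\frH^1_{\xi'}$ for $\xi'<1/C$.
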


Our second main result establishes global well-posedness for solutions to the GP hierarchy, 
where our proof uses the quantum de Finetti theorem in the formulation presented 
in a
recent paper by Lewin, Nam and Rougerie \cite{lenaro}, which we quote here:

\begin{thm}\label{qdf} 
{\em (Quantum de Finetti Theorem)}
Let $\cH$ be a separable Hilbert space and let 
$\cH^k = \bigotimes_{sym}^k\cH$ denote the corresponding bosonic $k$-particle space. 
Let $\Gamma$ denote a collection of   
bosonic density matrices on  $\cH$, i.e.,
$$
\Gamma = (\gamma^{(1)},\gamma^{(2)},\dots)
$$
with $\gamma^{(k)}$ a non-negative trace class operator on $\cH^k$.
Then, the following hold:
\begin{itemize}
\item
(Strong Quantum de Finetti theorem, \cite{HudsonMoody,Stormer-69,lenaro})
Assume that $\Gamma$ is admissible, i.e., $\gamma^{(k)}=\tr_{k+1} \gamma^{(k+1)}$, 
where $\tr_{k+1}$ denotes the partial trace over the $(k+1)$-th factor, $\forall k\in\N$. 
Then, there exists a unique Borel probability measure $\mu$, 
supported on the unit sphere in $\cH$, and invariant under multiplication of 
$\phi \in \cH$ by complex numbers of modulus one, such that 
\begin{equation}\label{gkdf}
    \gamma^{(k)} = \int d\mu(\phi)  (  | \phi  \rangle \langle \phi |  )^{\otimes k}
    \;\;\;,\;\;\;\forall k\in\N\,.
\end{equation} 
\item 
(Weak Quantum de Finetti theorem, \cite{lenaro,AmmariNier-2008,AmmariNier-2011}) 
Assume that $\gamma_N^{(N)}$ is an arbitrary sequence of mixed states on $\cH^N$, $N\in\N$,
satisfying $\gamma_N^{(N)}\geq 0$ and $\tr_{\cH^N}(\gamma_N^{(N)})=1$, and assume
that its $k$-particle marginals have weak-* limits 
\eqn 
    \gamma^{(k)}_{N}:=\tr_{k+1,\cdots,N}(\gamma^{(N)}_N)
    \; \rightharpoonup^* \; \gamma^{(k)} \;\;\;\; (N\rightarrow\infty)\,,
\eeqn
in the trace class on $\cH^k$ for all $k\geq1$ (here, $\tr_{k+1,\cdots,N}(\gamma^{(N)}_N)$ 
denotes the partial trace in the $(k+1)$-st up to $N$-th component). 
Then, there exists a unique Borel probability measure $\mu$ on the unit ball in
$\cH$, and invariant under multiplication of 
$\phi \in \cH$ by complex numbers of modulus one, 
such that
\eqref{gkdf} holds
for all $k\geq0$. 
\end{itemize}
\end{thm}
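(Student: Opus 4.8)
The plan is to treat Theorem~\ref{qdf} as a known result established in \cite{HudsonMoody,Stormer-69,AmmariNier-2008,AmmariNier-2011,lenaro} — it is quoted here, not reproven — and I will only indicate the structure of a proof. The unifying idea is a reduction to finite dimensions followed by a passage to the limit. Fix an increasing family of finite-rank orthogonal projections $P_d$ on $\cH$ with $P_d\to\1$ strongly. The finite-dimensional input is a quantitative de Finetti estimate: if $\dim\cH=d$ and $\rho_N$ is a symmetric (bosonic) state on $\cH^N=\bigotimes_{sym}^N\cH$, then the lower (Husimi) symbol
\[
    d\mu_{\rho_N}(u)\;:=\;\dim\Big(\bigotimes\nolimits_{sym}^N\cH\Big)\,\big\langle u^{\otimes N},\,\rho_N\,u^{\otimes N}\big\rangle\;du\,,
\]
with $du$ the uniform probability measure on the unit sphere of $\cH$, is a Borel probability measure on that sphere (it integrates to $\tr\rho_N=1$ by the coherent-state resolution of the identity), and its de-quantization reproduces the $k$-particle marginal up to a trace-norm error of order $dk/N$, i.e. $\big\| \rho_N^{(k)} - \int |u\rangle\langle u|^{\otimes k}\, d\mu_{\rho_N}(u) \big\|_{\tr} = O(dk/N)$. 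Applied to an admissible chain of bosonic density matrices on a $d$-dimensional space with $\rho_N=\gamma^{(N)}$ and $N\to\infty$, this yields an exact integral representation on the (compact) unit sphere of that space. This is the content of the Chiribella / Christandl--K\"onig--Mitchison--Renner circle of results.

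For the strong statement, admissibility of $\Gamma$ forces $\tr\gamma^{(k)}=\tr\gamma^{(1)}$ to be independent of $k$, so I normalize $\tr\gamma^{(1)}=1$. I would localize, setting $\gamma^{(k)}_d:=P_d^{\otimes k}\gamma^{(k)}P_d^{\otimes k}$ — still an admissible chain, now living on $P_d\cH$ — apply the finite-dimensional de Finetti theorem to obtain a probability measure $\mu_d$ on the unit sphere of $P_d\cH$ with $\gamma^{(k)}_d=\int |u\rangle\langle u|^{\otimes k}\,d\mu_d(u)$ for all $k$, and then pass to $d\to\infty$. Since $\cH$ is separable, its closed unit ball $B_\cH$ is weakly compact and weakly metrizable, so the $\mu_d$, viewed as measures on $B_\cH$, have a weak-$*$ convergent subsequence $\mu_d\rightharpoonup\mu$. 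Using $\gamma^{(k)}_d\to\gamma^{(k)}$ in trace norm (from $P_d\to\1$ strongly and $\gamma^{(k)}$ trace class) together with conservation of the total mass along the admissible chain, one identifies $\gamma^{(k)}=\int_{B_\cH}|u\rangle\langle u|^{\otimes k}\,d\mu(u)$ and shows that $\mu$ is in fact supported on the unit sphere. Invariance of $\mu$ under multiplication of $\phi\in\cH$ by unit-modulus complex numbers is automatic from the phase invariance of the symbol construction.

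For the weak statement one runs the same scheme on the states $\gamma_N^{(N)}$, but with the limits $N\to\infty$ and $d\to\infty$ taken in that order: fix $d$, localize $\gamma_N^{(N)}$ to $P_d\cH$ via the geometric (Fock-space) localization of \cite{lenaro} — whose point is that the $k$-particle marginals of the localized state differ from $P_d^{\otimes k}\gamma_N^{(k)}P_d^{\otimes k}$ by negligible amounts — apply the quantitative finite-dimensional de Finetti estimate, let $N\to\infty$ to get a measure $\mu_d$ on the compact unit sphere of $P_d\cH$ with $P_d^{\otimes k}\gamma^{(k)}P_d^{\otimes k}=\int |u\rangle\langle u|^{\otimes k}\,d\mu_d(u)$, verify that $\{\mu_d\}_d$ form a consistent (projective) family under the maps induced by the $P_{d'}$, $d'<d$, and assemble them into one Borel measure $\mu$ on $\cH$ by a Prokhorov/Kolmogorov argument. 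In this last limit mass may genuinely be lost — a sequence of unit vectors can escape into ever higher codimension — which is precisely why $\mu$ is supported on the unit \emph{ball} $B_\cH$ rather than the sphere. Uniqueness of $\mu$ in both cases follows because $(\gamma^{(k)})_{k\in\N}$ determines all integrals $\int \langle \phi_1,u\rangle\cdots\langle\phi_k,u\rangle\,\overline{\langle\psi_1,u\rangle\cdots\langle\psi_k,u\rangle}\,d\mu(u)$, hence all phase-balanced polynomials in $u$ and $\bar u$; these are dense in $C(B_\cH)$ for the weak topology by Stone--Weierstrass, and a phase-invariant Borel measure on $B_\cH$ is pinned down by them.

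The step I expect to be the main obstacle is the passage $d\to\infty$ in the weak theorem: the map $u\mapsto|u\rangle\langle u|^{\otimes k}$ is \emph{not} continuous from $B_\cH$ with the weak topology into the trace-class operators with the norm topology — an orthonormal sequence converges weakly to $0$ while the corresponding rank-one projections do not converge in trace norm. Quantifying this failure — equivalently, tracking exactly how much mass leaks out under localization and certifying that what remains is faithfully represented by a ball-supported measure — is the technical heart of \cite{lenaro}, and it is what separates the weak version from the strong (admissible) one, where conservation of the trace rules out any such loss.
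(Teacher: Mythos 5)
The paper does not prove Theorem~\ref{qdf}: it is quoted verbatim as a result of Lewin--Nam--Rougerie~\cite{lenaro} (building on \cite{HudsonMoody,Stormer-69,AmmariNier-2008,AmmariNier-2011}), and your framing of it as a cited theorem to be used rather than reproved is exactly the paper's stance, so there is nothing to compare at the level of proof.

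One technical slip in your sketch is worth flagging, since it touches the logic of the strong case. Compressing an admissible chain by projections, $\gamma^{(k)}_d := P_d^{\otimes k}\gamma^{(k)}P_d^{\otimes k}$, does \emph{not} produce an admissible chain: by cyclicity of the partial trace in the $(k+1)$-th slot,
\begin{align*}
\tr_{k+1}\gamma^{(k+1)}_d \,=\, P_d^{\otimes k}\,\tr_{k+1}\!\big(\gamma^{(k+1)}(\1^{\otimes k}\otimes P_d)\big)\,P_d^{\otimes k}\,,
\end{align*}
which differs from $P_d^{\otimes k}\gamma^{(k)}P_d^{\otimes k}$ unless $P_d=\1$. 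This is precisely why \cite{lenaro} needs the Fock-space geometric localization even for the strong theorem, not merely for the weak one as you suggest: it is the device that replaces naive compression by a construction yielding a genuine (sub-probability) hierarchy on $P_d\cH$, with quantified mass loss. The strong version then follows from the weak one plus the observation that admissibility, i.e.\ constancy of $\tr\gamma^{(k)}$ in $k$, forbids any mass from escaping to infinity, which is what promotes the support of $\mu$ from the ball to the sphere. The rest of your sketch -- the finite-dimensional quantitative estimate with $O(dk/N)$ error, the Prokhorov argument, the Stone--Weierstrass/polynomial uniqueness, and the identification of the weak continuity failure of $u\mapsto|u\rangle\langle u|^{\otimes k}$ as the core difficulty -- accurately reflects the structure of the cited proofs.
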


In the context of Theorem \ref{derivation} proven here, 
strong convergence in Hilbert Schmidt norm implies weak-* convergence in the trace norm topology, 
provided that the limit point is trace class, see Proposition \ref{strong weak} in the Appendix.
Therefore, the solutions to the GP hierarchy obtained 
in Theorem \ref{derivation} from the BBGKY hierarchy \eqref{BBGKY-0},
in the limit as $N\rightarrow\infty$, satisfy the 
conditions of the quantum de Finetti theorem, either in its strong or its weak form.

Moreover, due to the bound \eqref{energy condition}, it follows that the initial data for the
GP hiearchy satisfies $\Gamma_0\in\frH_{\xi'}^1$ for some $0<\xi'<1$. This implies that
the higher order energy functionals  introduced in \cite{CPHE}, which correspond to 
$\Gamma_0$, are well-defined; see Section \ref{sec-gwp-1}.
We are therefore able to combine an application of 
the quantum de Finetti theorem with  
the higher order energy functionals for the cubic GP hierarchy that
were introduced in \cite{CPHE}, to prove the following result on the global well-posedness of solutions to the
defocusing cubic GP hierarchy.

\begin{thm}\label{gwp}
Assume that 
\eqn\label{eq-gamma0-deF-1}
    \gamma_0^{(k)} = \int d\mu(\phi)(|\phi\rangle\langle\phi|)^{\otimes k}
    \;,\;\;k\in\N\,
\eeqn
satisfies $\Gamma_0=(\gamma_0^{(k)})_{k=1}^\infty\in\mathfrak{H}^1_{\xi'}$ for some $0<\xi'<1$,
where $d\mu$ is a probability measure supported either on the unit sphere, or on the unit ball in $L^2(\R^3)$.
For $I\subseteq \mathbb{R}$, we denote by
\begin{align}\label{eq-Wxialph-def-1}
    \mathcal{W}^\alpha_\xi(I):=\{\Gamma\in C(I,\mathcal{H}^\alpha_\xi)\,|\, 
    B^+\Gamma,B^-\Gamma\in L^2_{loc}(I,\mathcal{H}^\alpha_\xi)\} \,,
\end{align} 
the space of local in time solutions of the GP hierarchy, with $t\in I$, following \cite{CP}.

Then, for a sufficiently small $\xi_1>0$ (depending only on $\xi'$), 
there is a unique global solution $\Gamma\in\mathcal{W}^1_{\xi_1}(\mathbb{R})$ to the cubic 
defocusing GP hierarchy \eqref{gp} in $\mathbb{R}^3$ with initial data $\Gamma_0$.  
Moreover, $\Gamma(t)$ is positive semidefinite and satisfies
\begin{align*}
\|\Gamma(t)\|_{\mathcal{H}_{\xi_1}^1}\le \|\Gamma_0\|_{\mathfrak{H}_{\xi'}^1}
\end{align*}
for all $t\in\mathbb{R}$.
The dependence of $\xi_1$ on $\xi'$ is detailed in Section \ref{ssec-constants-1}, below.
\end{thm}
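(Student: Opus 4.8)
The plan is to run the standard local well-posedness theory for the cubic GP hierarchy in the Klainerman--Machedon spaces $\cW^1_\xi(I)$ (following \cite{CP,chpatz1}), and then to upgrade local solutions to a global solution by means of an a priori bound that controls $\|\Gamma(t)\|_{\cH^1_\xi}$ uniformly in $t$. The key point, and the only genuinely new ingredient here, is that this a priori bound comes from the higher order energy functionals of \cite{CPHE}, whose applicability in turn requires that the solution $\Gamma(t)$ remain positive semidefinite for all $t$. So the proof splits into three parts: (i) local existence and uniqueness in $\cW^1_{\xi}$, (ii) propagation of positive semidefiniteness, and (iii) the energy a priori bound plus the continuation/bootstrap argument giving the global solution with the stated norm inequality.

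For (i), I would invoke the contraction-mapping/Strichartz-type argument of \cite{CP,chpatz1}: writing the GP hierarchy in Duhamel form $\Gamma(t) = \widehat U(t)\Gamma_0 - i\int_0^t \widehat U(t-s)\,\opB\Gamma(s)\,ds$, one uses the spacetime ``board-game'' estimates on $\opB$ together with the free evolution $\widehat U$ to close a fixed point in $\cW^1_\xi([0,T])$ for $T=T(\|\Gamma_0\|_{\cH^1_{\xi'}})$ small and $\xi<\xi'$; since $\Gamma_0\in\frH^1_{\xi'}\subset\cH^1_{\xi'}$, the hypotheses are met, and uniqueness in $\cW^1_\xi$ follows from the same estimates. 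For (ii), the mechanism is that each $\gamma^{(k)}(t)$ can be realized (via the de Finetti representation of $\Gamma_0$ and the flow property) as a superposition of factorized states evolved by the cubic NLS: concretely, $\gamma^{(k)}(t) = \int d\mu(\phi)\, (|\phi_t\rangle\langle\phi_t|)^{\otimes k}$ where $\phi_t$ solves the defocusing cubic NLS with data $\phi$, so positivity is manifest. The slightly delicate point is justifying that this formula indeed solves the GP hierarchy in $\cW^1_{\xi_1}$ and coincides with the fixed-point solution — this is where one needs the uniqueness statement from (i), matched with the regularity afforded by $\Gamma_0\in\frH^1_{\xi'}$ so that the NLS data $\phi$ lie in $H^1$ $\mu$-a.e.\ and the global NLS theory applies.

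For (iii), with positivity of $\Gamma(t)$ in hand I would apply the key result of \cite{CPHE}: the higher order energy functional $\mathrm{Tr}(S^{(k,1)}\gamma^{(k)}(t))$ (suitably defined, using $\Gamma_0\in\frH^1_{\xi'}$ to ensure it is finite at $t=0$) is conserved, or at least controlled, along the flow of the defocusing cubic GP hierarchy, yielding $\|\Gamma(t)\|_{\cH^1_{\xi_1}} \le \|\Gamma_0\|_{\frH^1_{\xi'}}$ for a sufficiently small $\xi_1$ depending on $\xi'$. Since the local existence time in (i) depends only on the $\cH^1$-type norm of the data, this uniform bound lets one iterate the local construction on successive intervals of a fixed length, patching the solutions together (consistency on overlaps by uniqueness) to obtain a global solution $\Gamma\in\cW^1_{\xi_1}(\R)$; the quantitative dependence of $\xi_1$ on $\xi'$ is the bookkeeping deferred to Section \ref{ssec-constants-1}.

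The main obstacle I anticipate is step (ii) — showing that positive semidefiniteness is genuinely preserved by the GP flow at the level of the abstract hierarchy (not merely for nice factorized data). The subtlety is that $\opB$ is not sign-definite, so one cannot argue positivity directly from the equation; instead one must go through the de Finetti representation, which requires knowing that the solution at each time is itself admissible and trace class so that the (weak or strong) quantum de Finetti theorem, Theorem \ref{qdf}, applies, and then that the reconstructed superposition-of-NLS-orbits object is the \emph{same} solution as the one produced by the Klainerman--Machedon fixed point. Reconciling the Hilbert--Schmidt-based uniqueness class $\cW^1_{\xi_1}$ with the trace-class setting of de Finetti is exactly the technical bridge that must be built carefully, using admissibility of the BBGKY marginals and Proposition \ref{strong weak}.
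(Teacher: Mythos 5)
Your three-step plan — local well-posedness via \cite{CP}, propagation of positivity by transporting the de Finetti measure with the NLS flow and invoking uniqueness of solutions in $\mathcal{W}^1_\xi$, and the higher order energy functionals of \cite{CPHE} supplying a time-uniform $\cH^1_{\xi_1}$ bound that lets you iterate the local theory — is exactly the paper's proof, which is split there into Theorem \ref{pos_again} (local well-posedness and positivity) and Theorem \ref{gwp_again} (the iteration). One correction to your anticipated obstacle: the paper never re-applies the de Finetti theorem at later times $t>0$, so it does not need to establish that $\Gamma(t)$ is admissible or trace class for $t>0$; the measure $\mu$ is already furnished by hypothesis \eqref{eq-gamma0-deF-1} at $t=0$ (with $H^1$-support via Lemma \ref{ae}), and positivity of $\tilde\gamma^{(k)}(t):=\int d\mu(\phi)(|S_t\phi\rangle\langle S_t\phi|)^{\otimes k}$ is manifest by construction with no inverse-de Finetti reconstruction required. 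The genuinely nontrivial check is instead that $\tilde\Gamma\in\mathcal{W}^1_\xi([0,T])$ so that uniqueness identifies it with the fixed-point solution, and that check is carried out precisely via the conserved functionals $\langle K^{(m)}\rangle$ together with Strichartz estimates for the NLS — not via $\mathrm{Tr}(S^{(k,1)}\gamma^{(k)})$ directly, which is bounded by, but not equal to, the conserved quantity.
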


\subsection{Remarks}
\begin{itemize}
\item
We note that, by combining Theorem \ref{derivation} and \ref{gwp}, one can show that Theorem \ref{derivation} actually holds for $T$ arbitrarily large, provided that $\Gamma_0\in\mathfrak{H}^1_{\xi'}$, %$\Gamma_0$ is admissible, 
and that $\xi$ is sufficiently small.  This is addressed in detail in Section \ref{global derivation}.
\item
Although we only address the the cubic defocusing GP hierarchy in $\mathbb{R}^d$ for $d=3$, we note that Theorem \ref{gwp} can be proved in the same way for the more general cases considered in Theorem 7.2 of \cite{CPHE}.  Let $\kappa_0$ be the constant in \eqref{eq-GPdef-1-0}, and let $p=2,4$ correspond to the cubic and quintic GP hierarchies, respectively.  Then, we have global well-posedness for the following cases:
\begin{itemize}
\item Energy subcritical, defocusing GP hierarchy with $p<\tfrac{4}{d-2}$ and $\kappa_0=+1$.
\item $L^2$ subcritical, focusing GP hierarchy with $p<\tfrac{4}{d}$ and $\kappa_0<0$ with $|\kappa_0|$ sufficiently small (see Theorem 7.2 in \cite{CPHE} for an explicit bound on $|\kappa_0|$).
\end{itemize}
\item
A solution to the GP hierarchy obtained in a weak-* limit does not necessarily satisfy admissibility,
even if the system at finite $N$ does.
However, we note that solutions to the GP hierarchy preserve   admissibility \eqref{eq-admissdef-1},
provided that it holds at the initial time $t=0$; see Proposition \ref{adm_prop} in the appendix.
\end{itemize}

\section{Notations for the GP and BBGKY Hierarchy} 
\label{sec-not}

For convenience, we introduce additional notations for the GP and the BBGKY hierarchy in this section,
mostly adopted from \cite{CP},
which allow us to discuss them both on the same setting.

Let $0<\xi<1$. We recall that
\begin{align}\label{eq-cHalpha-def-1} 
 	\cH_\xi^\alpha=\Big\{ \,\text{symmetric }\Gamma \, \in \, \bigoplus_{k=1}^\infty L^2(\R^{3k}\times\R^{3k}) 
 	\, \Big| \, \|\Gamma\|_{\cH_\xi^\alpha} < \, \infty \, \Big\}
\end{align}
where
\begin{align}\nonumber 
	\|\Gamma\|_{\cH_\xi^\alpha}=\sum_{k=1}^\infty \xi^{ k} 
	\| \,  \gamma^{(k)} \, \|_{H^\alpha(\R^{3k}\times\R^{3k})} \,,
\end{align}
with
\begin{align}\label{eq-gamma-norm-def-1}
	\| \gamma^{(k)} \|_{H^\alpha} & :=
	\| S^{(k,\alpha)}  \gamma^{(k)} \|_{L^2(\mathbb{R}^{3k}\times\mathbb{R}^{3k})}
\end{align}
where $S^{(k,\alpha)}=\prod_{j=1}^k(1-\Delta_{x_j})^{\alpha/2}(1-\Delta_{x_j'})^{\alpha/2}$.

We also recall the spaces
\begin{align}\label{frakH-def-1}
 	\frH_\xi^\alpha=\Big\{ \,\text{symmetric }\Gamma \, \in \, \bigoplus_{k=1}^\infty L^2(\R^{3k}\times\R^{3k}) \, \Big| \, \|\Gamma\|_{\frH_\xi^\alpha} < \, \infty \, \Big\}
\end{align}
where
\begin{align}\nonumber 
	\|\Gamma\|_{\frH_\xi^\alpha}=\sum_{k=1}^\infty \xi^{ k} 
	\tr(| S^{(k,\alpha)}  \gamma^{(k)} |) \,.
\end{align}
is of trace-norm type.

%For $I\subseteq \mathbb{R}$, we will consider solutions in the spaces
%\begin{align*}
%\mathcal{W}^\alpha_\xi(I):=\{\Gamma\in C(I,\mathcal{H}^\alpha_\xi)\,|\, B^+\Gamma,B^-\Gamma\in L^2_{loc}(I,\mathcal{H}^\alpha_\xi)\},
%\end{align*}
%where $B^+$ and $B^-$ are defined by \eqref{eq-Bplus-GP-def-1} below.

\subsection{The GP hierarchy}

The cubic defocusing GP hierarchy is given by
\begin{align}%\label{eq-def-GP}
	i\partial_t \gamma^{(k)}=\sum_{j=1}^k\, [-\Delta_{x_j},\gamma^{(k)}]   
	+ B_{k+1} \gamma^{(k+1)}\label{gp}
\end{align}
for $k\in\N$, where
\begin{align} \label{eq-def-b}
	B_{k+1}\gamma^{(k+1)}
	=B^+_{k+1}\gamma^{(k+1)}
        - B^-_{k+1}\gamma^{(k+1)} \, ,
\end{align}
where 
\begin{align}\label{eq-Bplus-GP-def-1}
	B^+_{k+1}\gamma^{(k+1)}
   = \sum_{j=1}^k B^+_{j;k+1 }\gamma^{(k+1)},
\end{align}
and 
\begin{align} 
	B^-_{k+1}\gamma^{(k+1)}
   = \sum_{j=1}^k B^-_{j;k+1 }\gamma^{(k+1)},
\end{align}     
with 
\begin{align*} 
& \left(B^+_{j;k+1}\gamma^{(k+1)}\right)
(t,x_1,\dots,x_k;x_1',\dots,x_k') \\
& \quad \quad = \int dx_{k+1}  dx_{k+1}'  \\
& \quad\quad\quad\quad 
	\delta(x_j-x_{k+1})\delta(x_j-x_{k+1}' )
        \gamma^{(k+1)}(t,x_1,\dots,x_{k+1};x_1',\dots,x_{k+1}'),
\end{align*} 
and 
\begin{align*} 
& \left(B^-_{j;k+1}\gamma^{(k+1)}\right)
(t,x_1,\dots,x_k;x_1',\dots,x_k') \\
& \quad \quad = \int dx_{k+1} dx_{k+1}'  \\
& \quad\quad\quad\quad 
	  \delta(x'_j-x_{k+1})\delta(x'_j-x_{k+1}' )
        \gamma^{(k+1)}(t,x_1,\dots,x_{k+1};x_1',\dots,x_{k+1}').
\end{align*}

The GP hierarchy can be rewritten in the following compact manner:
\begin{align} \label{chpa2-pGP}
        i\partial_t \Gamma + \opDelta_\pm \Gamma & =   \opB \Gamma 
        \nonumber\\
        \Gamma(0) &= \Gamma_0 \,,
\end{align}
where
$$
	\opDelta_\pm \Gamma := ( \, \Delta^{(k)}_\pm \gamma^{(k)} \, )_{k\in\N} \, ,
        \; \; \; \; \mbox{ with }
        \Delta_{\pm}^{(k)} = \sum_{j=1}^{k} \left( \Delta_{x_j} - \Delta_{x'_j} \right)\, ,
$$
and 
\begin{align} \label{chpa2-B} 
	\opB \Gamma:=( \, B_{k+1} \gamma^{(k+1)} \, )_{k\in\N} \,.
\end{align}
We will also use the notation 
\begin{align*} 
	\opB^+ \Gamma :=( \, B^+_{k+1} \gamma^{(k+1)} \, )_{k\in\N}
	\;\;\;,\;\;\;
	\opB^- \Gamma :=( \, B^-_{k+1} \gamma^{(k+1)} \, )_{k\in\N} \,.
\end{align*}
Moreover, we define the free evolution operator $U(t)$ by
\begin{align*}
(U(t)\Gamma)^{(k)}=U^{(k)}(t)\gamma^{(k)},
\end{align*}
where
\begin{align*}
    (U^{(k)}(t)\gamma^{(k)})(\underline{x}_k,\underline{x}_k')
    =e^{it\Delta_{\underline{x}_k}}e^{-it\Delta_{\underline{x}_k'}}\gamma^{(k)}(\underline{x}_k,\underline{x}_k')
\end{align*}
corresponds to the $k$-th component.

\subsection{The BBGKY hierarchy} 
\label{ssec-BBGKY-1}
 
The cubic defocusing BBGKY hierarchy in $\mathbb{R}^3$ is given by
\begin{align}
	i\partial_t \gammaN^{(k)}(t) & =\sum_{j=1}^k [-\Delta_{x_j},\gammaN^{(k)}(t)]
	+ \frac{1}{N }\sum_{1\leq j< k} [\VN(x_j-x_k  ) , \gammaN^{(k)}(t) ] 
	\nonumber\\
	&\hspace{0.5cm}+\frac{(N-k)}{N }
	\sum_{1\leq  j\leq k}\tr_{k+1} [\VN(x_j-x_{k+1} ) , \gammaN^{(k+1)}(t) ] \,,
	\label{bbgky}
\end{align}  
for $k=1,\dots,N$. 

We extend this finite hierarchy trivially to an infinite hierarchy by
adding the terms  $\gammaN^{(k)}=0$ for $k > N$, and write
\begin{align} \label{eq-def-BBGKYN-1}
	i\partial_t \gammaN^{(k)} = \sum_{j=1}^k [-\Delta_{x_j},\gammaN^{(k)}]   
	+  (B_{N} \GammaN)^{(k)}
\end{align}
for $k\in\N$. Here, we have $\gammaN^{(k)}=0$ for $k>N$, and we define
\begin{align} \label{eq-def-BBGKYN-Bop-1}
	(B_N\Gamma_N)^{(k)}:=
	\begin{cases}
	B_{N;k+1}^{main} \gammaN^{(k+1)} + B_{N;k}^{error} \gammaN^{(k)} & {\rm if} \; k\leq N \\
	& \\
	0 & {\rm if} \; k>N.
	\end{cases}
\end{align}
The interaction terms on the right hand side are defined by
\begin{align}
	B_{N;k+1}^{main}\gamma_N^{(k+1)}
	= B^{+,main}_{N;k+1}\gamma_N^{(k+1)}
        - B^{-,main}_{N;k+1}\gamma_N^{(k+1)} \, ,
\end{align}
and 
\begin{align} 
	B_{N;k}^{error}\gamma_N^{(k)}
	= B^{+,error}_{N;k}\gamma_N^{(k)}
        - B^{-,error}_{N;k}\gamma_N^{(k)} \, ,
\end{align}
where 
\begin{align}
	B^{\pm,main}_{N;k+1}\gamma_N^{(k+1)}:= 
	\frac{N-k}{N} \sum_{j=1}^k B^{\pm,main}_{N;j;k+1}\gamma_N^{(k+1)},
\end{align}
and 
\begin{align}
	B^{\pm,error}_{N;k}\gamma_N^{(k)}:= 
	\frac{1}{N} \sum_{i<j}^{k} B^{\pm,error}_{N;i,j;k}\gamma_N^{(k)},
\end{align}                  
with 
\begin{align}	 
	 &\Big( B^{+,main}_{N;j;k+1 }\gamma_N^{(k+1)}\Big)
	(t,x_1,\dots,x_k;x_1',\dots,x_k') 
	\nonumber\\
	&
	= \int dx_{k+1}  
	 \VN(x_j-x_{k+1}) 
        \gamma_N^{(k+1)} (t,x_1,\dots,x_{k },x_{k+1};x_1',\dots,x_{k }',x_{k+1})
        \quad\quad
        \label{eq-Bmain-def-1}
\end{align}
and 
\begin{align}
	&\Big(B^{+,error}_{N;i,j;k}\gamma_N^{(k)}\Big)
	(t,x_1,\dots,x_k;x_1',\dots,x_k') 
	\nonumber\\
	&\hspace{2cm}
	= \VN(x_i-x_{j}) 
        \gamma^{(k)}(t,x_1,\dots,x_{k };x_1',\dots,x_{k }') \,.
        \label{eq-Berror-def-1}
\end{align}
Moreover,
\begin{align} 
& \left(B^{-,main}_{N;j;k+1}\gamma_N^{(k+1)}\right)
(t,x_1,\dots,x_k;x_1',\dots,x_k') \\
& \quad \quad = \int dx_{k+1} 
	 \VN(x'_j-x_{k+1}) 
        \gamma_N^{(k+1)}(t,x_1,\dots,x_{k},x_{k+1};x_1',\dots,x_{k}',x_{k+1}).
        \label{eq-Berror-def-1-2}
\end{align} 
and 
\begin{align*} 
& \left(B^{-,error}_{N;i,j;k}\gamma_N^{(k)}\right)
(t,x_1,\dots,x_k;x_1',\dots,x_k') \\
& \quad \quad =  \VN(x'_i-x'_{j}) 
        \gamma^{(k)}(t,x_1,\dots,x_{k };x_1',\dots,x_{k }')\, .
\end{align*}
We remark that in all of the above definitions, we have that 
$B^{\pm,main}_{N;k}$, $B^{\pm,error}_{N;k}$, etc. are 
defined to be given by multiplication with zero for $k>N$.
 
Then, we can write the BBGKY hierarchy compactly in the form
\begin{align} \label{eq-BBGKY-condensed}
        i\partial_t \GammaN + \opDelta_\pm \GammaN & = \opBN \GammaN 
        \nonumber\\
        \GammaN(0) &\in \cH^\alpha_\xi  \,,
\end{align}
where
$$
	\opDelta_\pm \GammaN := ( \, \Delta^{(k)}_\pm \gammaN^{(k)} \, )_{k\in\N} \, ,
        \; \; \; \; \mbox{ with }
        \Delta_{\pm}^{(k)} = \sum_{j=1}^{k} \left( \Delta_{x_j} - \Delta_{x'_j} \right)\, ,
$$
and 
\begin{align} \label{eq-BBGKY-condensed-B} 
	\opBN \GammaN := ( \, B_{N;k+1} \gammaN^{(k+1)} \, )_{k\in\N} \,.
\end{align}
In addition, we introduce the notation
\begin{align*} 
	& \opBN^+ \GammaN := \, ( \, B^+_{N;k+1} \gammaN^{(k+1)} \, )_{k\in\N}
        \nonumber \\  
	& \opBN^- \GammaN := \, ( \, B^-_{N;k+1} \gammaN^{(k+1)} \, )_{k\in\N} \,
\end{align*}
adapted to \eqref{eq-Berror-def-1} and \eqref{eq-Berror-def-1-2}.

$\;$\\

\section{Derivation of GP from BBGKY hierarchy}
\label{positivity_proof_section}

In this section, we prove Theorem \ref{derivation}.

\subsection{Local well-posedness of the BBGKY hierarchy in $\mathcal{H}^1_\xi$}

Taking $\delta=0$ in Lemma 4.1 of \cite{CPBBGKY} gives us local well-posedness of the $K$-truncated $N$-BBGKY hierarchy:
\begin{lemma}\label{KN_BB_LWP}
Let $K<b_1 \log N$, for some constant $b_1>0$.  
Let 
\eqn 
    \Gamma_{0,N}^K:=P_{\le K}\Gamma_{0,N}=(\gamma_{0,N}^{(k)})_{k=1}^K\in \cH_{\xi'}^1\,,
\eeqn 
and 
\begin{align}\label{T_0_def}
T_0(\xi):=\xi^2/c_0\;,\;\;\xi\in\R_+\,,
\end{align}
where the constant $c_0>0$ is defined in Lemma \ref{lm-boardgame-est-1}.
%for $\alpha\in\alphaset(d)$.
Then, for $0<\xi'<1$ and $\xi$ satisfying \eqref{xi_relationship}, the following holds:

For $0<T<T_0(\xi)$, and for $b_1>0$ sufficiently small (see \eqref{eq-b1-def-1} below), 
there exists a unique solution 
$\Gamma_N^K\in L^\infty_{t\in I}\cH_\xi^1$ of the BBGKY hierarchy \eqref{bbgky}
for $I=[0,T]$ such that $\opB_N\Gamma_N^K\in L^2_{t\in I}\cH_\xi^1$.
Moreover,
\begin{align}
	\|\Gamma_N^K\|_{L^\infty_{t\in I}\cH_\xi^1}
	\, \leq \,
	 C_0(T,\xi,\xi') \, \|\Gamma_{0,N}^K\|_{\cH_{\xi'}^1}
\end{align}
and
\begin{align}\label{eq-ThetaNK-L2Hxi-bound-0-1}
	\|\opB_N\Gamma_N^K\|_{L^2_{t\in I}\cH_\xi^1} 
 	\, \leq \, C_0(T,\xi,\xi') \, \|\Gamma_{0,N}^K\|_{\cH_{\xi'}^1}
\end{align}
hold. The constant $C_0=C_0(T,\xi,\xi')$ is
independent of $N$.

Furthermore, $(\Gamma_N^K)^{(k)}=0$ for all $K<k\le N$, and $t\in I$.
\end{lemma}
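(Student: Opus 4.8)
The plan is to deduce Lemma \ref{KN_BB_LWP} directly from Lemma 4.1 of \cite{CPBBGKY} by specializing the parameter $\delta$ to $0$. In \cite{CPBBGKY}, local well-posedness of the $K$-truncated $N$-BBGKY hierarchy is established for initial data $\Gamma_{0,N}^K\in\cH_{\xi'}^{1+\delta}$ with $\delta>0$, but the proof there is set up so that the regularity index $1+\delta$ enters only through the fixed-point (Picard iteration) scheme in $L^\infty_{t\in I}\cH_\xi^{1+\delta}$ intersected with the space in which $\opB_N\Gamma$ is square-integrable. The first step is to inspect which estimates in that argument genuinely require $\delta>0$: the boardgame-type bilinear estimate of the form $\|\opB_N\Gamma\|_{L^2_{t\in I}\cH_\xi^{\alpha}}\le C(T,\xi,\xi')\|\Gamma\|_{L^\infty_{t\in I}\cH_{\xi'}^{\alpha}}$, which is the engine of the contraction, holds at $\alpha=1$ as well (this is exactly the content of the board-game estimate Lemma \ref{lm-boardgame-est-1}, whose constant $c_0$ defines $T_0(\xi)=\xi^2/c_0$). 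The extra regularity $\delta>0$ in \cite{CPBBGKY} was imposed not for the local theory but to control the passage to the limit $N\to\infty$; hence at the level of a \emph{fixed} $N$ and $K<b_1\log N$, nothing is lost by taking $\delta=0$.

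Concretely, I would set up the Duhamel formulation
\begin{align*}
\Gamma_N^K(t)=U(t)\Gamma_{0,N}^K-i\int_0^t U(t-s)\,\opB_N\Gamma_N^K(s)\,ds
\end{align*}
on the truncated hierarchy, restricted to components $k\le K$ (with all higher components identically zero, which is consistent since the truncated interaction $\opB_N$ only couples $\gamma^{(k)}$ to $\gamma^{(k\pm1)}$ and the error term to $\gamma^{(k)}$, and $B^{main}_{N;K+1}$ hits $\gamma^{(K+1)}=0$). Then apply the Strichartz-type estimate for the free evolution $U(t)$ together with the board-game bilinear bound from Lemma \ref{lm-boardgame-est-1} to show that the Duhamel map is a contraction on a ball in the space $\{\Gamma\in L^\infty_{t\in I}\cH_\xi^1:\opB_N\Gamma\in L^2_{t\in I}\cH_\xi^1\}$ for $0<T<T_0(\xi)=\xi^2/c_0$, with the smallness encoded in $T$ being absorbed into the factor $T_0(\xi)$; the gain of $\xi^k$ versus $\xi'^k$ (with $\xi<\xi'$, as in the constraint \eqref{xi_relationship}) produces the geometric convergence of the $k$-sum uniformly in $N$, giving a constant $C_0(T,\xi,\xi')$ independent of $N$. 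The uniqueness follows from the same contraction estimate applied to a difference of two solutions, and the a priori bounds on $\|\Gamma_N^K\|_{L^\infty_{t\in I}\cH_\xi^1}$ and $\|\opB_N\Gamma_N^K\|_{L^2_{t\in I}\cH_\xi^1}$ are read off from the fixed-point ball. The role of the constraint $K<b_1\log N$ with $b_1$ small (as in \eqref{eq-b1-def-1}) is to ensure that the $N$-dependent combinatorial factors $\tfrac{N-k}{N}$ and $\tfrac1N\binom k2$ appearing in $B^{main}_{N;k+1}$ and $B^{error}_{N;k}$, accumulated over $k\le K$, stay bounded (essentially, $(1+\tfrac CN)^K$ and $\tfrac{K^2}{N}$ remain $O(1)$ for $K=O(\log N)$), so that the $N$-uniformity of $C_0$ is not destroyed.

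Since the statement explicitly says this is obtained by \emph{``taking $\delta=0$ in Lemma 4.1 of \cite{CPBBGKY}''}, the cleanest writeup is to verify that every inequality invoked in the proof of that lemma is available at $\delta=0$ and then quote the conclusion verbatim with $\delta=0$ substituted; the only genuine check is that the board-game estimate Lemma \ref{lm-boardgame-est-1} is stated (as it is in this paper) at regularity $\alpha=1$ rather than $\alpha=1+\delta$. The main (and essentially only) obstacle is therefore bookkeeping: confirming that the $\delta>0$ in \cite{CPBBGKY} is used \emph{solely} for the uniform-in-$N$ limiting argument (where extra regularity is traded for decay in the tail of the $k$-series as $K\to\infty$) and nowhere in the fixed-point construction at fixed $N$, so that the local well-posedness and the $N$-independent bounds survive intact at the endpoint $\delta=0$. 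The claim $(\Gamma_N^K)^{(k)}=0$ for $K<k\le N$ is then immediate from uniqueness, since the zero extension of the $K$-component solution solves the truncated hierarchy with the same initial data.
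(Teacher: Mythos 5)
Your proposal matches the paper's own (one-line) proof, which simply asserts that Lemma \ref{KN_BB_LWP} follows by taking $\delta=0$ in Lemma 4.1 of \cite{CPBBGKY}. Your additional bookkeeping---that the boardgame estimate of Lemma \ref{lm-boardgame-est-1} is already stated at regularity $\alpha=1$, that the fixed-point contraction on $\{\Gamma\in L^\infty_{t\in I}\cH_\xi^1:\opB_N\Gamma\in L^2_{t\in I}\cH_\xi^1\}$ at fixed $N$ never invokes $\delta>0$, and that the constraint $K<b_1\log N$ keeps the accumulated $N$-dependent combinatorial factors $O(1)$---correctly supplies the verification the paper leaves implicit.
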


\subsection{Constants}\label{ssec-constants-1}
For convenience, we collect here the interdependence of various constants that appear in the
formulation of the main theorems above. 
Throughout this paper, we will require that, given $\xi'>0$, 
the real, positive constants $\xi$ and $\xi_1$ satisfy
\begin{align}
\label{xi_relationship}
\begin{cases}
\xi<\eta\min\left\{\frac{1}{\xi'}\,e^{-2\beta/b_1}, e^{-24\beta/b_1}\right\}\qquad\text{and}\\%\label{xi min}
0<\xi_1<\theta^3\xi<\theta^6\xi',
\end{cases}
\end{align}
where $\theta:=\min\{\eta,(1+\tfrac{2}{5}C_{Sob})^{-2/5}\}$;
the constant $\eta>0$ is defined in Lemma \ref{lm-BGamma-Cauchy-1}, and $C_{Sob}>0$ is the constant in the 
trace Sobolev inequality 
\begin{align} 
  \Big(\int dx |f(x,x)|^2\Big)^{\frac12} \, \leq \, C_{Sob}
  \Big(\int dx_1 \, dx_2 \, 
	\left| \, \bra\nabla_{x_1}\ket  \bra\nabla_{x_2}\ket  \,
	f(x_1, x_2)\right|^2\Big)^{\frac12} 
	%\, = \, C_{Sob}\| \, f \, \|_{ H^\alpha_{x_1, x_2}} 
\end{align}
for  $x_{1,2}\in\R^3$, see   \cite{CPHE}.  This will ensure that $\xi_1$ and $\xi$ are small enough so that the results of both  \cite{CPBBGKY} and \cite{CPHE} hold.

In \eqref{xi_relationship},  $b_1>0$ is a constant chosen sufficiently small that Lemma \ref{lm-BGamma-Cauchy-1}  holds for all $K,N$
satisfying
\eqn\label{eq-b1-def-1}
    K \le b_1 \log N\,.
\eeqn
To satisfy this requirement, $b_1$ only depends on $\beta$ (see \eqref{betadef}).

$\;$ \\

\subsection{From $(K,N)$-BBGKY to $K$-truncated GP hierarchy}
In this section, we show that solutions to the $(K,N)$-BBGKY hierarchy approach those of the $K$-Truncated GP hierarchy as $N\rightarrow\infty$.
 
\begin{proposition}\label{prp-GammaKN-conv-1} 
Suppose that $\Gamma_0=(\gamma_0^{(k)})_{k=1}^\infty\in\mathfrak{H}^1_{\xi'}$.  % is admissible and satisfies $\rm{Tr}\,\gamma_0^{(1)}=1$.
Moreover, let 
$\Gamma^K\in \{\Gamma\in L^\infty_{t\in [0,T]}\mathcal{H}^1_\xi \, | \, 
B\Gamma\in L^2_{t\in [0,T]}\mathcal{H}^1_\xi\}$ be the solution of the GP hierarchy \eqref{gp} with truncated initial data
$\Gamma_0^K=P_{\leq K}\Gamma_0$ constructed in \cite{chpa4}, where $0<\xi'<1$ and $\xi$ satisfy \eqref{xi_relationship}, and $0<T<T_0(\xi)$  (see \eqref{T_0_def}).  Let $\Gamma_N^K$ solve the $(K,N)$-BBGKY hierarchy \eqref{bbgky} with the same initial data
$\Gamma_{0,N}^K:=P_{\le K}\Gamma_0$.  Let
\begin{align}
K(N)\le b_1\log N\label{K def}
\end{align}
as in Lemma \ref{KN_BB_LWP}.  Then,
\begin{align}\label{eq-GammaK-Nlim-1}
	\lim_{N\rightarrow\infty}\| \, \Gamma_N^{K(N)} 
	- \Gamma^{K(N)} \, \|_{L^\infty_{t\in[0,T]}\cH_\xi^1} = 0
\end{align}
and
\begin{align}\label{eq-BGammaK-Nlim-1}
	 \lim_{N\rightarrow\infty} \| \, \opB_{N}\Gamma_N^{K(N)} 
	 \,  - \, \opB\Gamma^{K(N)} \, \|_{L^2_{t\in[0,T]}\cH_\xi^1}
	 = 0.
\end{align}
\end{proposition}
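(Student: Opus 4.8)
## Proof Proposal for Proposition \ref{prp-GammaKN-conv-1}

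The plan is to estimate the difference $\Gamma_N^{K(N)} - \Gamma^{K(N)}$ by writing both as fixed points of Duhamel-type integral equations and iterating, exploiting that the two hierarchies differ only by (i) the error terms $B_{N;k}^{error}$ present in the BBGKY hierarchy but absent in the GP hierarchy, (ii) the replacement of the singular interaction $B_{j;k+1}$ by the smoothed-out $V_N$-convolution operator $B_{N;j;k+1}^{main}$, and (iii) the prefactor $\tfrac{N-k}{N}$ versus $1$. Setting $\Theta_N := \Gamma_N^{K(N)} - \Gamma^{K(N)}$, which has zero initial data since both solutions start from $P_{\le K}\Gamma_0$, one finds that $\Theta_N$ satisfies an integral equation of the schematic form
\begin{align*}
\Theta_N = \int_0^t U(t-s)\Big[ \opB_N \Theta_N + (\opB_N - \opB)\Gamma^{K(N)} \Big]\, ds,
\end{align*}
where $(\opB_N - \opB)\Gamma^{K(N)}$ collects all three discrepancies above acting on the \emph{known} GP solution $\Gamma^{K(N)}$. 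First I would set up this equation carefully, keeping track of the truncation level, and then apply the contraction/boardgame estimates underlying Lemma \ref{KN_BB_LWP} (i.e. the key Lemma \ref{lm-boardgame-est-1} / Lemma \ref{lm-BGamma-Cauchy-1}) to the homogeneous part $\int_0^t U(t-s)\opB_N\Theta_N\,ds$, absorbing it into the left side for $T<T_0(\xi)$ at the cost of a harmless constant. This reduces matters to showing that the inhomogeneous ``source'' term tends to zero in $L^\infty_{t\in[0,T]}\cH^1_\xi$ (and, after applying $\opB_N$, in $L^2_{t\in[0,T]}\cH^1_\xi$).

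The heart of the argument is then the estimate of the source term $(\opB_N - \opB)\Gamma^{K(N)}$. I would split it into three pieces and treat each on the space $\cH^1_\xi$ using the a priori bound $\Gamma^{K(N)} \in L^\infty_t\cH^1_\xi$ with $\opB\Gamma^{K(N)} \in L^2_t\cH^1_\xi$ (uniformly in the truncation, from \cite{chpa4} and Lemma \ref{KN_BB_LWP}): (a) the \emph{error terms} $\tfrac{1}{N}\sum_{i<j}^k B^{\pm,error}_{N;i,j;k}\gamma^{(k)}$ carry a prefactor $\tfrac1N \binom{k}{2} \le \tfrac{k^2}{2N}$, and since $\gamma^{(k)}$ is supported on $k \le K(N) \le b_1\log N$, the combinatorial loss is at most $(\log N)^2$, which is overwhelmed by $N^{-1}$ even after accounting for the $N^{d\beta}$ growth of $\|V_N\|$ — here is precisely where $\beta<\tfrac14$ and the logarithmic truncation rate $K(N)\sim b_1\log N$ enter, and the weight $\xi^k$ in $\cH^1_\xi$ combined with the geometric room from $\xi<\xi'$ handles the shift in index; (b) the prefactor discrepancy $\big(\tfrac{N-k}{N} - 1\big) = -\tfrac kN$ on the main term is handled identically, gaining $k/N \lesssim (\log N)/N$; (c) the genuinely delicate piece is $(B^{main}_{N;j;k+1} - B_{j;k+1})\gamma^{(k+1)}$, the replacement of $\delta(x_j - x_{k+1})\delta(x_j - x_{k+1}')$ by $V_N(x_j - x_{k+1})\delta(x_{k+1} - x_{k+1}')$ — i.e. $V_N \to \delta$. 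For this I would pass to Fourier variables and use that $\widehat{V_N}(\xi) = \widehat V(\xi/N^\beta) \to \widehat V(0) = \int V = \kappa_0 = 1$ pointwise, with $|\widehat V(\xi/N^\beta) - \widehat V(0)| \lesssim N^{-\beta}\langle\xi\rangle^{s}$ for small $s$ by smoothness of $V$; combined with the regularizing effect built into $\opB$ acting with one derivative to spare (the trace Sobolev inequality with constant $C_{Sob}$, and the fact that $\opB\Gamma^{K(N)}\in\cH^1_\xi$ provides the needed $H^1$ control on $\gamma^{(k+1)}$ near the diagonal), this yields a bound of order $N^{-\beta}$ times the $\cH^1_\xi$ norm of the GP solution, uniform in $k$ and in the truncation.

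Collecting the three pieces gives $\|(\opB_N - \opB)\Gamma^{K(N)}\|_{L^\infty_t\cH^1_\xi} \lesssim N^{-\min\{\beta, 1-\epsilon\}} \to 0$ as $N\to\infty$, and feeding this through the contraction estimate yields \eqref{eq-GammaK-Nlim-1}; applying $\opB_N$ to the integral equation for $\Theta_N$ and using the $L^2_t$ smoothing estimate of Lemma \ref{KN_BB_LWP} on the homogeneous part, together with the triangle inequality $\|\opB_N\Theta_N\|_{L^2_t\cH^1_\xi} + \|(\opB_N - \opB)\Gamma^{K(N)}\|_{L^2_t\cH^1_\xi}$ for the remainder, gives \eqref{eq-BGammaK-Nlim-1}. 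The main obstacle I anticipate is item (c): making the $V_N \to \delta$ replacement quantitative in the Hilbert–Schmidt-based $\cH^1_\xi$ norm rather than the trace norm, while keeping all constants uniform in the truncation level $k \le K(N)$ — this is exactly where the interplay between the $\beta < 1/4$ restriction, the logarithmic cutoff $K(N)$, and the available one-derivative margin from $\opB\Gamma \in \cH^1_\xi$ must be balanced, and where the argument genuinely departs from the weak-$*$ compactness approach used in \cite{esy1,esy2,ChXHol13}.
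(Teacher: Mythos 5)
Your overall skeleton --- Duhamel, absorb the homogeneous $\opB_N\Theta_N$ part via the boardgame contraction, then show the inhomogeneous source tends to zero --- is the same as the paper's, and you correctly identify the three discrepancies (error terms, the $\tfrac{N-k}{N}$ prefactor, and the $V_N\to\delta$ replacement). But there is a genuine gap in the way you propose to estimate the source.

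You claim that $(\opB_N - \opB)\Gamma^{K(N)}$ can be bounded in $L^\infty_{t\in[0,T]}\cH^1_\xi$ using ``the a priori bound $\Gamma^{K(N)}\in L^\infty_t\cH^1_\xi$'' together with the trace Sobolev inequality. This cannot work as a fixed-time operator estimate: the collapse operator $B_{j;k+1}$ restricts to a diagonal and loses a derivative, so it does not map $\cH^1_{\xi}$ to $\cH^1_{\xi}$ boundedly. The only way to control $\opB\Gamma$ in an $H^1$-type norm is through the $L^2_t$ Klainerman--Machedon collapsing estimate, which explicitly exploits the free propagator (the $\delta(\tau+|\cdot|^2-|\cdot|^2)$ constraint, with uniformly bounded kernel $J$ in Proposition 2.1 of \cite{KM}). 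Your invocation of ``$\opB\Gamma^{K(N)}\in\cH^1_\xi$'' only holds in $L^2_t$, not $L^\infty_t$, and so does not rescue the $L^\infty_t$ claim. The paper sidesteps this precisely by first setting up the closed integral equation for $\widetilde\Theta_N^K:=\opB_N\Gamma_N^K-\opB\Gamma^K$ (not for $\Gamma_N^K-\Gamma^K$ directly), so that after substituting the Duhamel representation of $\Gamma^K$, the operator $(\opB_N-\opB)$ is applied only to the free evolution $U(t)\Gamma_{0,N}^K$ and $U(t-s)\opB\Gamma^K(s)$, and the KM $L^2_t$ estimate is available; the $L^\infty_t$ bound \eqref{eq-GammaK-Nlim-1} is then recovered a posteriori by one Duhamel/Cauchy--Schwarz step from the $L^2_t$ bound \eqref{eq-BGammaK-Nlim-1}.

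A secondary point: your proposed Fourier bound $|\widehat V(\xi/N^\beta)-\widehat V(0)|\lesssim N^{-\beta}\langle\xi\rangle^s$ is not what is used, nor is it clear how to compensate the $\langle\xi\rangle^s$ loss. The paper's Lemma \ref{lm-R1bound-1-new} instead splits the frequency domain at $|\fvar_{k+1}+\fvar_{k+1}'|\sim N^\delta$; on the low-frequency part it uses $\nabla\widehat V(0)=0$ (spherical symmetry of $V$) and $\widehat V\in C^2$ to get a \emph{quadratic} Taylor remainder, yielding rate $N^{2(\delta-\beta)}$; on the high-frequency part there is no rate at all --- the tail $a_{k,N}$ goes to zero only by monotone convergence because the region of integration shrinks, requiring no smoothness of $\widehat V$ beyond boundedness. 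This distinction matters: without the second-order Taylor input, a first-order bound would not close against the combinatorial factors from $k\le K(N)\sim b_1\log N$.
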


\prf
In \cite{chpa4}, the authors constructed a solution $\Gamma^K$ of the full GP hierarchy
with truncated initial data, $\Gamma(0)=\Gamma_0^K\in\cH_{\xi'}^{1}$, such that
for an arbitrary fixed $K$,  $\Gamma^K$ satisfies the GP-hierarchy  in  integral 
representation, 
\begin{align} \label{eq-outl-GP-Duhamel-special-1}
	\Gamma^K(t) = U(t)\Gamma^K_0 + i \,  \int_0^t U(t-s) \, \opB\Gamma^K(s) \, ds \,.
\end{align}
and, in particular, $(\Gamma^K)^{(k)}(t)=0$ for all $k>K$.

Accordingly, we have
\begin{align}
	&\opB_N\Gamma_N^K-\opB\Gamma^K
	\nonumber\\
	&=
	\opB_N U(t)\Gamma_{0,N}^K-\opB U(t) \Gamma_0^K 
	\nonumber\\
	&\hspace{0.5cm}
	+\, i   \int_0^t \, \big( \, \opB_N U(t-s) \opB_N\Gamma_N^K
	- \opB U(t-s) \opB \Gamma^K  \, \big)(s) ds \, 
	\nonumber\\
	&=
	(\opB_N - \opB) U(t) \Gamma_{0,N}^K 
	+  \opB U(t)(\Gamma_{0,N}^K- \Gamma_0^K) 
	\nonumber\\
	&\hspace{0.5cm}
	+\,  i  \, \int_0^t \big( \, \opB_N -\opB \big) U(t-s) \opB\Gamma^K (s) \, ds 
	\nonumber\\
	&\hspace{0.5cm}
	+\, i  \int_0^t \opB_N  U(t-s) \big( \opB_N\Gamma_N^K -  
	\opB\Gamma^K  \, \big)(s) \, ds \,.
\end{align}
Here, we observe that we can apply Lemma \ref{lm-BGamma-Cauchy-1} with
\begin{align}
	\widetilde\Theta_N^K := \opB_N\Gamma_N^K-\opB\Gamma^K
\end{align}
and
\begin{align}
	\Xi_N^K &:= (\opB_N - \opB) U(t) \Gamma_{0,N}^K 
	+  \opB U(t)(\Gamma_{0,N}^K- \Gamma_0^K) 
	\nonumber\\
	&\hspace{0.5cm}
	+i\int_0^t \big( \, \opB_N -\opB \big) U(t-s) \opB\Gamma^K (s) \, ds \,.
\end{align}
Given $\xi'$, we introduce  parameters $\xi,\xi'',\xi'''$ satisfying
\begin{align}\label{eq-xiparm-def-1}
	\xi \, < \, \theta \, \xi'' \, <  \, \theta^2 \, \xi''' \, <  \, \theta^3 \xi' 
\end{align}
where the constant $\theta$ is defined as in \eqref{xi_relationship}, so that $0<\theta\le \eta$, where $\eta$ is defined as in  Lemma \ref{lm-BGamma-Cauchy-1}.
Accordingly,  Lemma \ref{lm-BGamma-Cauchy-1} implies that
\begin{align}
	&\|\opB_N\Gamma_N^K-\opB\Gamma^K\|_{L^2_{t\in [0,T]}\cH^1_\xi}
	\nonumber\\ 
	&\leq
	C_0(T,\xi,\xi'') 
	\Big( \,  \|B U(t)(\Gamma_{0,N}^K- \Gamma_0^K)\|_{L^2_{t\in [0,T]}\cH^1_{\xi''}}
	+ R_1(N)+R_2(N) \, \Big)
	\nonumber\\ 
	&\leq
	C_1(T,\xi,\xi',\xi'') 
	\Big( \, \| \Gamma_{0,N}^K- \Gamma_0^K\|_{L^2_{t\in [0,T]}\cH^1_{\xi'}}
	+ R_1(N)+R_2(N) \, \Big)  \,,\label{terms}
\end{align}
where we used Lemma A.1 in \cite{CPBBGKY} to pass to the last line. Here,
\begin{align}
	R_1(N) :=
	\|(\opB_N - \opB) U(t) \Gamma_{0,N}^K\|_{L^2_{t\in [0,T]}\cH^1_{\xi''}}
\end{align}
and
\begin{align}
	R_2(N) :=  
	\Big\| \, \int_0^t \big( \, \opB_N -\opB \big) U(t-s) \opB\Gamma^K (s) \, ds
	\, \Big\|_{L^2_{t\in [0,T]}\cH^1_{\xi''}} \,.
\end{align}
Next, we consider the limit $N\rightarrow\infty$ with $K(N)$ as given in \eqref{K def}.

We have 
\begin{align}
	%\lefteqn{
	\lim_{N\rightarrow \infty}
	\| \Gamma_{0,N}^{K(N)}- \Gamma_0^{K(N)}\|_{\cH^{1}_{\xi'}}
	%}
	%\nonumber\\
	&=
	\lim_{N\rightarrow \infty}
	\| \, P_{\leq K(N)} \, ( \, \Gamma_{0,N}- \Gamma_0 \, ) \, \|_{\cH^{1}_{\xi'}}
	\nonumber\\
	&\leq
	\lim_{N\rightarrow \infty}
	\| \,   \Gamma_{0,N}- \Gamma_0 \, \|_{\cH^{1}_{\xi'}}
	\nonumber\\
	&=
	0.
\end{align}
By Lemmas \ref{lm-R1bound-1-new} and \ref{lm-R1bound-2-new} below, we have that

\begin{align*}%\label{eq-tildeR1-def-1}
	\lim_{N\rightarrow\infty} R_1(N)&=0
\end{align*}
and
\begin{align*}
  \lim_{N\rightarrow\infty} R_2(N)&=0.
\end{align*}
Thus $\eqref{terms}\rightarrow 0$ as $N\rightarrow\infty$, and hence the limit \eqref{eq-BGammaK-Nlim-1} holds.  To prove \eqref{eq-GammaK-Nlim-1}, we observe that
\begin{align*}
&\Gamma_N^{K(N)}(t)-\Gamma^{K(N)}(t)\\
&=U(t)\left(\Gamma_N^{K(N)}(0)-\Gamma^{K(N)}(0)\right)+i\int_0^tU(t-s)\left(B_N\Gamma_N^{K(N)}(s)-B\Gamma^{K(N)}(s)\right)ds,
\end{align*}
and hence, for $0<t<T$,
\begin{align}
&\|\Gamma_N^{K(N)}(t)-\Gamma^{K(N)}(t)\|_{\mathcal{H}^1_\xi}\nonumber\\
&\le\|U(t)\left(\Gamma_N^{K(N)}(0)+\Gamma^{K(N)}(0)\right)\|_{\mathcal{H}^1_\xi}
\nonumber\\
&\hspace{3cm}
+t^{1/2}\|U(t-s)\left(B_N\Gamma_N^{K(N)}(s)-B\Gamma^{K(N)}(s)\right)\|_{L^2_{s\in [0,t]}\mathcal{H}^1_\xi}\nonumber\\
&=\|\Gamma_N^{K(N)}(0)+\Gamma^{K(N)}(0)\|_{\mathcal{H}^1_\xi}+t^{1/2}\|B_N\Gamma_N^{K(N)}-B\Gamma^{K(N)}\|_{L^2_{[0,t]}\mathcal{H}^1_\xi}\nonumber\\
&\le\|\Gamma_N^{K(N)}(0)+\Gamma^{K(N)}(0)\|_{\mathcal{H}^1_\xi}+T^{1/2}\|B_N\Gamma_N^{K(N)}-B\Gamma^{K(N)}\|_{L^2_{[0,T]}\mathcal{H}^1_\xi}\label{last}\\
&\rightarrow 0\text{ as }N\rightarrow\infty\text{ by }\eqref{eq-BGammaK-Nlim-1}.\nonumber
\end{align}
Since the last line \eqref{last} is independent of $t$, the result \eqref{eq-GammaK-Nlim-1} follows.
\endprf

\begin{lemma}\label{lm-R1bound-1-new}
Under the same assumptions as in Proposition \ref{prp-GammaKN-conv-1},
\begin{align}
\lim_{N\rightarrow\infty}\left\|(B_N-B)U(t)\,\Gamma_{0,N}^{K(N)}\right\|_{L^2_{t\in\mathbb{R}}\mathcal{H}^1_{\xi''}}=0.\label{quantity}
\end{align}
\end{lemma}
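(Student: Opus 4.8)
The plan is to estimate $(B_N - B)U(t)\Gamma_{0,N}^{K(N)}$ componentwise and exploit the fact that $B_N$ differs from $B$ in two ways: first, the BBGKY interaction uses the rescaled potential $V_N(x) = N^{3\beta}V(N^\beta x)$ in place of the delta function $\delta(x)\delta(x')$ appearing in $B$; second, $B_N$ carries the prefactor $\frac{N-k}{N}$ on the main term together with the $O(1/N)$ error term $B_{N;k}^{error}$. Accordingly I would write
\begin{align*}
	(B_N - B)U(t)\Gamma_{0,N}^{K(N)}
	= \big(B_N^{main} - B\big)U(t)\Gamma_{0,N}^{K(N)}
	+ B_N^{error}U(t)\Gamma_{0,N}^{K(N)} \,,
\end{align*}
and split the first term further as $\big(\tfrac{N-k}{N}-1\big)$ times the main operator (which on the $k$-th component costs a factor $\tfrac{k}{N}\le \tfrac{K(N)}{N}$, hence is harmless since $K(N)\le b_1\log N$), plus the genuine difference $B_{N;k+1}^{V_N} - B_{k+1}^{\delta}$ coming from replacing $V_N$ by $\delta$.

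The key estimate is on this last difference. In Fourier variables, the operator $B^{\pm}_{j;k+1}$ corresponds to convolving the relevant momenta with $\widehat V_N = \widehat V(\,\cdot/N^\beta)$ instead of with the constant $\widehat{\delta} = 1$ (more precisely, with $\kappa_0 = 1$). Since $V\in\mathcal S(\R^3)$, one has $\widehat V(\zeta/N^\beta) \to \widehat V(0) = \int V = 1$ with quantitative rate: $|\widehat V(\zeta/N^\beta) - 1| \lesssim N^{-\beta}\langle\zeta\rangle$, or more robustly $|\widehat V(\zeta/N^\beta)-1|\lesssim \min\{1, N^{-\beta}|\zeta|\}$. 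This gain of $N^{-\beta}$ is exactly what should drive $R_1(N)\to 0$; the possible loss of derivatives in $\zeta$ is the price, and it is here that one needs the free evolution $U(t)$ and the space-time $L^2_t$ norm: a Strichartz/$X^{s,b}$-type smoothing estimate (as used in \cite{chpa,KM,CPBBGKY}) converts the extra $\langle\zeta\rangle$ into an $L^2_{t}\mathcal H^1_{\xi''}$ bound on $B U(t)\Gamma_0^K$ at a slightly larger $\xi$-parameter, which is why the statement is phrased with $\xi''$ rather than $\xi$ and why \eqref{eq-xiparm-def-1} leaves room. Concretely I would bound
\begin{align*}
	\big\|(B_{N;k+1}^{V_N} - B_{k+1})U^{(k)}(t)\gamma_{0,N}^{(k+1)}\big\|_{L^2_t H^1_k}
	\,\lesssim\, N^{-\beta}\,\big\|\gamma_{0,N}^{(k+1)}\big\|_{H^{1}_{k+1}}\cdot(\text{geometric factor in }\xi)\,,
\end{align*}
sum over $j=1,\dots,k$ and over $k\le K(N)$ with weights $\xi''^{\,k}$, and use $\Gamma_{0,N}\to\Gamma_0$ in $\mathcal H^1_{\xi'}$ together with the summability $\sum_k (\xi''/\xi')^k < \infty$ (guaranteed by \eqref{eq-xiparm-def-1}) to conclude that the whole sum is $O(N^{-\beta})\,\|\Gamma_{0,N}\|_{\mathcal H^1_{\xi'}}\to 0$.

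The error term is comparatively easy: $B_N^{error}$ on the $k$-th component is $\frac1N\sum_{i<j}^k B^{\pm,error}_{N;i,j;k}$, a multiplication operator by $V_N(x_i-x_j)$ (no contraction of variables), so it is bounded on $H^1_k$ by $\|V_N\|_{L^\infty}\binom{k}{2}\frac1N \lesssim N^{3\beta}\,k^2/N$; since $\gamma_{0,N}^{(k)} = 0$ for $k > K(N)$ and $K(N)\le b_1\log N$, summing against $\xi''^{\,k}$ gives a factor $N^{3\beta-1}(\log N)^2 \to 0$ because $\beta < 1/4 < 1/3$ — this is where the hypothesis $0<\beta<\tfrac14$ (in fact $\beta<\tfrac13$ would suffice here) is used, and it is consistent with the range in which Lemma \ref{KN_BB_LWP} and \eqref{eq-b1-def-1} operate. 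I expect the main obstacle to be the first difference estimate: making rigorous the claim that replacing $V_N$ by $\delta$ in $B$ costs only $N^{-\beta}$ at the expense of one extra derivative, and then absorbing that derivative via the dispersive space-time estimate for $U(t)$ — this requires invoking the Klainerman–Machedon-type board-game/collapsing estimates (the constants $c_0$, $C_{Sob}$, $\eta$ of Section \ref{ssec-constants-1}) at the shifted parameter $\xi''$, exactly as in \cite{CPBBGKY}. The remaining bookkeeping — the $\frac{N-k}{N}$ prefactor and the telescoping of the $\xi$-parameters through \eqref{eq-xiparm-def-1} — is routine.
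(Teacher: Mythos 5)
Your overall decomposition (prefactor $\tfrac{N-k}{N}-1$, error term $B_N^{error}$, and the genuine replacement $V_N\rightsquigarrow\delta$ in the main term) is the right bookkeeping, and your treatment of the error term is essentially what the paper does (it cites Proposition A.2 of \cite{CPBBGKY}; note that in $H^1$ you also pick up $\|\nabla V_N\|_\infty\sim N^{4\beta}$, so the right exponent threshold is $\beta<\tfrac14$, not $\tfrac13$). However, your plan for the core step — the difference $B_N^{main}-B$ — has a genuine gap.

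You propose to bound $|\widehat V(\zeta/N^\beta)-1|\lesssim N^{-\beta}\langle\zeta\rangle$ (with $\zeta=\fvar_{k+1}+\fvar_{k+1}'$ the contracted Fourier variable) and to \emph{absorb the extra factor $\langle\zeta\rangle$} by appealing to a Strichartz/$X^{s,b}$-type smoothing bound for $U(t)$. This does not go through: the Klainerman--Machedon collapsing estimate is precisely what makes the kernel
\begin{align*}
\frac{\delta(\ldots)\,\langle\fvar_1\rangle^2}{\langle\fvar_1-\fvar_{k+1}-\fvar_{k+1}'\rangle^2\,\langle\fvar_{k+1}\rangle^2\,\langle\fvar_{k+1}'\rangle^2}
\end{align*}
integrable over the $5$-dimensional surface cut out by $\delta(\ldots)$, and that integral is only barely convergent (decay of order $-6$ in $5$ dimensions). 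Multiplying by $\langle\fvar_{k+1}+\fvar_{k+1}'\rangle^2$ raises the effective decay to $-4$, which is divergent, so no $L^2_t H^1$ smoothing estimate based on this kernel will save you. In other words, the extra derivative your Taylor bound costs is exactly the derivative the KM estimate cannot afford.

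What the paper actually does, and what your proposal is missing, is a \emph{frequency region splitting}. Fix $0<\delta<\beta$ and separate the $(\fvar_{k+1},\fvar_{k+1}')$-integration into $\{|\fvar_{k+1}+\fvar_{k+1}'|<N^\delta\}$ and its complement. On the low-frequency region you never introduce an extra $\langle\zeta\rangle$ at all: the Taylor remainder is bounded \emph{uniformly} on that region by $C_V N^{4(\delta-\beta)}$ (using $\nabla\widehat V(0)=0$, which follows from spherical symmetry, and $\widehat V\in C^2$), so the unweighted KM integral applies and you gain a power of $N$ across the board. On the high-frequency region no Taylor expansion is used; one keeps the uniform bound $|1-\widehat V_N|\le 1+\|\widehat V\|_\infty$ and observes that the resulting quantities $a_{k,N}$ are integrals of a \emph{fixed, $N$-independent} integrand over a region that shrinks monotonically to the empty set as $N\to\infty$; then monotone convergence (applied both in each $a_{k,N}$ and in the $k$-sum against $(\xi')^k$) gives $\sum_k(\xi')^k a_{k,N}\searrow 0$. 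This two-region argument, together with the factor $\sup_k k(\xi''/\xi')^k<\infty$ to convert the weight $\xi''$ to $\xi'$, is the heart of the paper's proof and cannot be replaced by a smoothing-absorption argument of the kind you sketch.
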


\begin{proof} 
We recall that for $g:\mathbb{R}^n\rightarrow\mathbb{C}$ of the form $g(x)=f(x,x)$ for some Schwartz class fuction 
$f:\mathbb{R}^n\times\mathbb{R}^n\rightarrow\mathbb{C}$, one has
\begin{align}
\widehat g(\xi)=\int\widehat f(\xi-\eta,\eta)\,d\eta.\label{ft}
\end{align} 
We note that the Fourier transform of $\left(U^{(k+1)}(t)\gamma^{(k+1)}_0\right)(\underline{x}_{k+1},\underline{x}_{k+1}')$ with respect to the variables $(t,\underline{x}_{k+1},\underline{x}_{k+1}')$ is given by
\begin{align}
\delta(\tau+|\underline{\xi}_{k+1}|^2-|\underline{\xi}_{k+1}'|^2)\widehat{\gamma}_0^{(k+1)}(\underline{\xi}_{k+1},\underline{\xi}_{k+1}').\label{free_Fourier}
\end{align}

Recall that $B_{N,1,k+1}^{+,main}U(t)(\gamma_{0,N}^{(k+1)})$ is given by
\begin{align*}
&\int V_N(x_1-x_{k+1})\gamma_N^{(k+1)}(t,x_1,...,x_k,x_{k+1};x_1',...,x_k',x_{k+1})dx_{k+1}
\end{align*}
and hence its Fourier transform with respect to the variables $(t,\underbar{x}_k,\underbar{x}_k')$ is given by
\begin{align}
&\int  e^{-ix_{k+1}\fvar_1}\widehat{V}_N(\fvar_1)*_{\fvar_1}(F\gamma_N^{(k+1)})(\tau,\fvar_1,...,\fvar_k,x_{k+1};\fvar_1',...,\fvar_k',x_{k+1})\,dx_{k+1}\nonumber\\
&=\int\int e^{-ix_{k+1}\eta}\widehat{V}_N(\eta)(F\gamma_N^{(k+1)})(\tau,\fvar_1-\eta,\fvar_2,...,\fvar_k,x_{k+1};\fvar_1',...,\fvar_k',x_{k+1})\,d\eta \,dx_{k+1}\nonumber\\
&=\int\int\widehat{V}_N(\eta)\widehat{\gamma}_N^{(k+1)}(\tau,\fvar_1-\eta,\fvar_2,...,\fvar_k,\eta-\nu;\fvar_1',...,\fvar_k',\nu)\,d\nu\,d\eta\qquad\text{(by \eqref{ft})}\nonumber\\
&=\int\int\widehat{V}_N(\eta+\nu)\widehat{\gamma}_N^{(k+1)}(\tau,\fvar_1-\eta-\nu,\fvar_2,...,\fvar_k,\eta;\fvar_1',...,\fvar_k',\nu)\,d\nu\,d\eta\nonumber
\end{align}
where we substituted $\eta\rightarrow\eta+\nu$. Thus, the above equals
\begin{align}
&=\int\int\widehat{V}_N(\fvar_{k+1}+\fvar_{k+1}') \nonumber\\
&\hspace{1cm}
\widehat{\gamma}_N^{(k+1)}(\tau,\fvar_1-\fvar_{k+1}-\fvar_{k+1}',\fvar_2,...,\fvar_k,\fvar_{k+1};\fvar_1',...,\fvar_k',\fvar_{k+1}')\,d\fvar_{k+1}'\,d\fvar_{k+1}\nonumber\\
&=\int\int\widehat{V}_N(\fvar_{k+1}+\fvar_{k+1}')\delta(\cdots)\nonumber\\
&\hspace{1cm}
\widehat{\gamma}^{(k+1)}_{0,N}(\fvar_1-\fvar_{k+1}-\fvar_{k+1}',\fvar_2,...,\fvar_k,\fvar_{k+1};\underline{\fvar}_{k+1}')\,d\fvar_{k+1}'\,d\fvar_{k+1}\label{lastt}
\end{align}
where the operator $F$ is the Fourier transform with respect to the variables $(t,\underbar{x}_k,\underbar{x}_k')$ and  %For large $N$, $V_N\sim\delta_0$ and so $\widehat{V}_N\sim 1$.
\begin{align*}
\delta(...):=\delta(\tau+|\fvar_1-\fvar_{k+1}-\fvar_{k+1}'|^2+|\underline{\fvar}_{k+1}|^2-|\fvar_1|^2-|\underline{\fvar}_{k+1}'|^2).
\end{align*}
Equation \eqref{free_Fourier} was used to pass to the last line \eqref{lastt}.  Similarly, the Fourier transform of $B_{k+1}^+U(t)(\gamma_{0,N}^{(k+1)})$ with respect to the variables $(t,\underbar{x}_k,\underbar{x}_k')$ is given by
\begin{align*}
\int\int\delta(...)\widehat\gamma_{0,N}^{(k+1)}(\fvar_1-\fvar_{k+1}-\fvar_{k+1}',\fvar_2,...,\fvar_{k+1},\underline{\fvar}_{k+1}')d\fvar_{k+1}d\fvar_{k+1}'.
\end{align*}

Thus,
\begin{align}
&\|(B_{N;1;k+1}^{+,main}-B_{1;k+1}^+)U(t)\gamma_{0,N}^{(k+1)}\|_{L^2_t{H}^1}^2\nonumber\\
&=\int\int\int\prod_{j=1}^k\langle\fvar_j\rangle^2\prod_{j=1}^k\langle\fvar_j'\rangle^2\bigg(\int\int(1-\widehat{V}_N(\fvar_{k+1}+\fvar_{k+1}'))\delta(...)\nonumber\\
&\hspace{1cm}\widehat{\gamma}^{(k+1)}_{0,N}(\fvar_1-\fvar_{k+1}-\fvar_{k+1}',\fvar_2,...,\fvar_k,\fvar_{k+1};\underline{\fvar}_{k+1}')\,d\fvar_{k+1}'\,d\fvar_{k+1}\bigg)^2\,d\underline{\fvar}_k\,d\underline{\fvar}_k'\,d\tau\nonumber\\
&\le\int\int\int J(\tau,\underline{\fvar}_k,\underline{\fvar}_k')\int\int\delta(...)\langle \fvar_1-\fvar_{k+1}-\fvar_{k+1}'\rangle^2\langle \fvar_{k+1}\rangle^2\langle \fvar_{k+1}'\rangle^2
\nonumber\\
&\hspace{1cm}\prod_{j=2}^k\langle\fvar_j\rangle^2\prod_{j'=1}^k\langle\fvar_j'\rangle^2|1-\widehat{V}_N(\fvar_{k+1}+\fvar_{k+1}')|^2
\nonumber\\
&\hspace{3cm}
|\widehat{\gamma}^{(k+1)}_{0,N}(\fvar_1-\fvar_{k+1}-\fvar_{k+1}',\fvar_2,...,\fvar_k,\fvar_{k+1};\underline{\fvar}_{k+1}')|^2\nonumber\\
&\hspace{7cm}\,d\fvar_{k+1}'\,d\fvar_{k+1}\,d\underline{\fvar}_k\,d\underline{\fvar}_k'\,d\tau\label{integral}
\end{align}
where
\begin{align*}
J(\tau,\underline{\fvar}_k,\underline{\fvar}_k')
:=\int\int\frac{\delta(...)\langle \fvar_1\rangle^2}{\langle\fvar_1-\fvar_{k+1}-\fvar_{k+1}'\rangle^2\langle\fvar_{k+1}\rangle^2\langle\fvar_{k+1}'\rangle^2}\,d\fvar_{k+1} d\fvar_{k+1}'
\end{align*}
and $J(\tau,\underline{\fvar}_k,\underline{\fvar}_k')$ is  bounded uniformly in $\tau,\underline{\fvar}_k,\underline{\fvar}_k'$,
see Proposition 2.1 of \cite{KM}.  
 
Let $\delta$ satisfy $0<\delta<\beta$.  Recall that $\widehat{V}_N(u)=\widehat V(N^{-\beta}u)$.  The integral \eqref{integral} can now be separated into the regions $\{|\fvar_{k+1}+\fvar_{k+1}'|<N^\delta\}$ and $\{|\fvar_{k+1}+\fvar_{k+1}'|\ge N^\delta\}$.

The portion of the integral \eqref{integral} over $\{|\fvar_{k+1}+\fvar_{k+1}'|<N^\delta\}$ is bounded by
\begin{align}
C_VN^{4(\delta-\beta)}\|\gamma_{0,N}^{(k+1)}\|_{H^1}^2\label{bound1}
\end{align}
because $\nabla \widehat V(0)=0$ and $\widehat V\in C^2$, so by bounding the Taylor remainder term,
\begin{align} &\sup_{|\fvar_{k+1}+\fvar_{k+1}'|<N^\delta}|1-\widehat{V}_N(\fvar_{k+1}+\fvar_{k+1}')|^2\nonumber\\
&=\sup_{|\fvar_{k+1}+\fvar_{k+1}'|<N^\delta}|1-\widehat{V}(N^{-\beta}(\fvar_{k+1}+\fvar_{k+1}'))|^2\nonumber\\
&\le\sup_{|\fvar_{k+1}+\fvar_{k+1}'|<N^\delta}C_V(N^{-\beta}(\fvar_{k+1}+\fvar_{k+1}'))^4\nonumber\\
&\le C_VN^{4(\delta-\beta)},\nonumber
\end{align}
where $C_V$ is the $L^\infty$ norm of the second derivative of $V$.

The portion of the integral \eqref{integral} over $\{|\fvar_{k+1}+\fvar_{k+1}'|\ge N^\delta\}$ is bounded by
\begin{align}
&
a_{k,N}^2
\nonumber\\
&:=\int\int\int J(\tau,\underline{\fvar}_k,\underline{\fvar}_k')\int_{|\fvar_{k+1}|\ge N^\delta}\int\delta(...)\langle \fvar_1-\fvar_{k+1}-\fvar_{k+1}'\rangle^2\nonumber\\
&\hspace{1cm}
\langle \fvar_{k+1}\rangle^2\langle \fvar_{k+1}'\rangle^2
\prod_{j=2}^k\langle\fvar_j\rangle^2\prod_{j'=1}^k\langle\fvar_j'\rangle^2(1+\|\widehat{V}\|_\infty)^2
\nonumber\\
&\hspace{3cm}
\Big|\widehat{\gamma}^{(k+1)}_{0,N}(\fvar_1-\fvar_{k+1}-\fvar_{k+1}',\fvar_2,...,\fvar_k,\fvar_{k+1};\underline{\fvar}_{k+1}')
\Big|^2\nonumber\\
&\hspace{7cm}\,d\fvar_{k+1}'\,d\fvar_{k+1}\,d\underline{\fvar}_k\,d\underline{\fvar}_k'\,d\tau\label{akn}\\
&\le C\|\gamma_{0,N}^{(k+1)}\|_{H^1}^2\nonumber\\
&= C\|\gamma_{0}^{(k+1)}\|_{H^1}^2.\label{bound2}
\end{align} 

We are now ready to bound the desired qauntity \eqref{quantity} in the statement of the lemma.  

Let $\Omega_{k,N}=\{|\fvar_{k+1}+\fvar_{k+1}'|<N^\delta\}$.  Then,
\begin{align}
&\|(B_N^+-B^+)U(t)\Gamma_{0,N}^K\|_{L^2_{t\in \mathbb{R}}\mathcal{H}^1_{\xi''}}-\underbrace{\|B_N^{+,error}U(t)\Gamma_{0,N}^K\|_{L^2_{t\in\mathbb{R}}\mathcal{H}^1_{\xi''}}}_{\rightarrow 0\text{ as }N\rightarrow\infty\text{ by Proposition A.2 in \cite{CPBBGKY}}}\nonumber\\
&\le\sum_{k=1}^K\sum_{j=1}^k(\xi'')^k\|(B_{N;j;k+1}^{+,main}-B_{j,k+1}^+)U^{(k+1)}(t)\gamma_{0,N}^{(k+1)}\|_{L^2_{t\in\mathbb{R}}H^1}\nonumber\\
&\le\sum_{k=1}^Kk(\xi'')^k\|(B_{N;1;k+1}^{+,main}-B_{1;k+1}^+)U^{(k+1)}(t)\gamma_{0,N}^{(k+1)}\|_{L^2_{t\in\mathbb{R}}H^1}\nonumber\\
&\le\sum_{k=1}^Kk(\xi')^k(\xi''/\xi')^k\|(B_{N;1;k+1}^{+,main}-B_{1;k+1}^+)U^{(k+1)}(t)\gamma_{0,N}^{(k+1)}\|_{L^2_{t\in\mathbb{R}}H^1}\nonumber\\
&\le\left(\sup_k k(\xi''/\xi')^k\right)\bigg(\sum_{k=1}^K(\xi')^k\|(B_{N;1;k+1}^{+,main}-B_{1;k+1}^+)U^{(k+1)}(t)\gamma_{0,N}^{(k+1)}\|_{L^2_{t\in\mathbb{R}}H^1(\Omega_{k,N})}\nonumber\\
&\hspace{.5cm}+\sum_{k=1}^K(\xi')^k\|(B_{N;1;k+1}^{+,main}-B_{1;k+1}^+)U^{(k+1)}(t)\gamma_{0,N}^{(k+1)}\|_{L^2_{t\in\mathbb{R}}H^1(\Omega_{k,N}^c)}\bigg)\nonumber\\
&\le\left(\sup_k k(\xi''/\xi')^k\right)\bigg( C_V(1+\|\widehat{V}\|_\infty)\sum_{k=1}^K(\xi')^kN^{2(\delta-\beta)}\|\gamma_{0,N}^{k+1}\|_{H^1}+\sum_{k=1}^K (\xi')^k a_{k,N}\bigg),\label{sum}
\end{align}
where \eqref{bound1} and \eqref{bound2} were used to pass to the last line \eqref{sum}.

Now, for $(k,N)\in\mathbb{N}\times\mathbb{N}$, we define
\begin{align}
&\widetilde {a_{k,N}}^2\nonumber\\
&:=\int\int\int J(\tau,\underline{\fvar}_k,\underline{\fvar}_k')\int_{|\fvar_{k+1}+\fvar_{k+1}'|\ge N^\delta}\int\delta(...)\langle \fvar_1-\fvar_{k+1}-\fvar_{k+1}'\rangle^2\langle \fvar_{k+1}\rangle^2\langle \fvar_{k+1}'\rangle^2\nonumber\\
&\hspace{1cm}\prod_{j=2}^k\langle\fvar_j\rangle^2\prod_{j'=1}^k\langle\fvar_j'\rangle^2(1+\|\widehat{V}\|_\infty)^2\nonumber\\
&\hspace{2cm}
\Big|\widehat{\gamma}^{(k+1)}_{0}(\fvar_1-\fvar_{k+1}-\fvar_{k+1}',\fvar_2,...,\fvar_k,\fvar_{k+1};\underline{\fvar}_{k+1}')\Big|^2\nonumber\\
&\hspace{7cm}\,d\fvar_{k+1}'\,d\fvar_{k+1}\,d\underline{\fvar}_k\,d\underline{\fvar}_k'\,d\tau,\label{akn2}
\end{align}
and observe that   $\widetilde {a_{k,N}}=a_{k,N}$ (as defined in \eqref{akn}), for $k\le N$, because $\gamma_{0,N}^{(k)}=\gamma_{0}^{(k)}$
for $k\le N$.  Thus we have that
\begin{align}
\eqref{sum}
&\le\left(\sup_k k(\xi''/\xi')^k\right)
\nonumber\\
&\hspace{2cm}\bigg( C_V(1+\|\widehat{V}\|_\infty)\sum_{k=1}^\infty(\xi')^kN^{2(\delta-\beta)}\|\gamma_{0}^{(k+1)}\|_{H^1}+\sum_{k=1}^\infty (\xi')^k a_{k,N}\bigg)\nonumber\\
&\le\left(\sup_k k(\xi''/\xi')^k\right)\bigg( C_V(1+\|\widehat{V}\|_\infty)N^{2(\delta-\beta)}\|\Gamma_{0}^{(k+1)}\|_{\mathcal{H}^1_{\xi'}}+\sum_{k=1}^\infty (\xi')^k a_{k,N}\bigg).\label{sum2}
\end{align}

It follows from the definition \eqref{akn2} of 
$a_{k,N}^2$ that $\sum_{k=1}^\infty (\xi')^k a_{k,N}\le C\|\Gamma_0\|_{\mathcal{H}^1_{\xi'}}$ 
and that, for fixed $k$, $a_{k,N}\searrow0$ monotonically as $N\rightarrow\infty$.  
This is because $a_{k,N}^2$ is an integral where the integrand is independent of $N$ and the region of integration shrinks as $N$ grows.  Thus, by the monotone convergence theorem, 
$\sum_{k=1}^\infty (\xi')^k a_{k,N}\searrow 0$ as $N\rightarrow\infty$.  
Therefore $\eqref{sum2}\rightarrow 0$ as $N\rightarrow\infty$. 
\end{proof}

\begin{lemma}\label{lm-R1bound-2-new}
Under the same assumptions as in Proposition \ref{prp-GammaKN-conv-1},
\begin{align*}
\lim_{N\rightarrow\infty}\left\|\int_0^t(B_N-B)U(t-s)B\Gamma^{K}(s)\,ds\right\|_{L^2_{t\in I}\mathcal{H}^1_{\xi''}} =0.
\end{align*}
\end{lemma}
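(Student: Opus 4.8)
The strategy is to reduce this Duhamel-type quantity, via Minkowski's integral inequality, to the free-evolution estimate already established in Lemma~\ref{lm-R1bound-1-new}, and then to remove the $K$-dependence by comparing $\Gamma^K$ with the solution $\Gamma^\infty$ of the \emph{untruncated} GP hierarchy with data $\Gamma_0$. Two preliminary observations make this work. First, the local well-posedness construction of \cite{chpa4} applies equally with the target weight $\xi'''$ of \eqref{eq-xiparm-def-1} in place of $\xi$ --- legitimate because $\xi'''<\xi'$ and $T<T_0(\xi)<T_0(\xi''')$ (see \eqref{T_0_def}) --- so that, besides $\Gamma^K\in L^\infty_{t\in I}\mathcal{H}^1_\xi$, one has $\Gamma^K\in L^\infty_{t\in I}\mathcal{H}^1_{\xi'''}$ with $B\Gamma^K\in L^2_{t\in I}\mathcal{H}^1_{\xi'''}$ and $\sup_K\|B\Gamma^K\|_{L^2_{t\in I}\mathcal{H}^1_{\xi'''}}<\infty$; by uniqueness in $\mathcal{H}^1_\xi$ (note $\mathcal{H}^1_{\xi'''}\subset\mathcal{H}^1_\xi$ since $\xi'''>\xi$) this is the same solution. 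Second, since $\|P_{\le K}\Gamma_0-\Gamma_0\|_{\mathcal{H}^1_{\xi'}}\to0$ and the GP solution map is continuous from $\mathcal{H}^1_{\xi'}$-data into $\{\Gamma\in L^\infty_{t\in I}\mathcal{H}^1_{\xi'''}\,|\,B\Gamma\in L^2_{t\in I}\mathcal{H}^1_{\xi'''}\}$, the solution $\Gamma^\infty$ exists on $I=[0,T]$ and
\begin{align*}
\|B\Gamma^K-B\Gamma^\infty\|_{L^2_{t\in I}\mathcal{H}^1_{\xi'''}}\longrightarrow0\qquad(K\to\infty).
\end{align*}
The purpose of replacing $\xi$ by $\xi'''$ is that it puts the datum of the interaction term into a space with strictly larger exponential weight than the target space $\mathcal{H}^1_{\xi''}$, which is precisely the configuration in which the Klainerman--Machedon-type estimates, and hence the argument of Lemma~\ref{lm-R1bound-1-new}, are available.

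With this in place I would write
\begin{align*}
\int_0^t(B_N-B)U(t-s)B\Gamma^K(s)\,ds
&=\int_0^t(B_N-B)U(t-s)B\Gamma^\infty(s)\,ds\\
&\quad+\int_0^t(B_N-B)U(t-s)B(\Gamma^K-\Gamma^\infty)(s)\,ds
\end{align*}
and estimate the two summands in $L^2_{t\in I}\mathcal{H}^1_{\xi''}$. For the second, Minkowski's integral inequality, translation invariance in $t$, and the $N$-uniform mapping bound $\|(B_N-B)U(\cdot)\Lambda\|_{L^2_{t\in\mathbb{R}}\mathcal{H}^1_{\xi''}}\le C\|\Lambda\|_{\mathcal{H}^1_{\xi'''}}$ (which follows from the Strichartz/Klainerman--Machedon estimates for $B$ and $B_N$ in \cite{KM,CP,CPBBGKY}, the error parts of $B_N$ contributing an $o(1)$ amount on the range $k\le K(N)$ by Proposition~A.2 of \cite{CPBBGKY}) bound it by $CT^{1/2}\|B(\Gamma^K-\Gamma^\infty)\|_{L^2_{t\in I}\mathcal{H}^1_{\xi'''}}$, which tends to $0$ as $K\to\infty$ uniformly in $N$, hence vanishes along $K=K(N)\to\infty$. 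For the first summand --- where $B\Gamma^\infty$ is fixed, independent of $N$ and $K$ --- Minkowski and translation invariance bound the $L^2_{t\in I}\mathcal{H}^1_{\xi''}$ norm by $\int_0^T\|(B_N-B)U(\cdot)(B\Gamma^\infty)(s)\|_{L^2_{t\in\mathbb{R}}\mathcal{H}^1_{\xi''}}\,ds$; for a.e.\ $s$ the integrand tends to $0$ as $N\to\infty$ by the argument of Lemma~\ref{lm-R1bound-1-new} applied to the fixed datum $(B\Gamma^\infty)(s)\in\mathcal{H}^1_{\xi'''}$ (that argument used only that the datum lies in a space of weight larger than the target, and here the datum does not depend on $N$, so the monotone-convergence step is immediate), while the integrand is dominated by $C\|(B\Gamma^\infty)(s)\|_{\mathcal{H}^1_{\xi'''}}\in L^1_s(I)$; dominated convergence then gives that this summand tends to $0$ as $N\to\infty$. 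Adding the two bounds proves the lemma.

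The principal obstacle is the bookkeeping of the first paragraph: one must verify that $\Gamma^K$ and $\Gamma^\infty$ can genuinely be realized in $\mathcal{H}^1_{\xi'''}$ with a bound uniform in $K$, using only $\Gamma_0\in\mathcal{H}^1_{\xi'}$ and the smallness of $T$ already imposed in \eqref{xi_relationship} and Proposition~\ref{prp-GammaKN-conv-1}, and that $\xi''$ and $\xi'''$ stay consistent with all the constraints on the constants. The analytic core --- the Taylor bound $|1-\widehat V_N(u)|\le C_V N^{2(\delta-\beta)}$ for $|u|<N^\delta$ (using $\nabla\widehat V(0)=0$ and $\widehat V\in C^2$) on the low-frequency part, together with the monotone vanishing of the high-frequency tail integral over $\{|u|\ge N^\delta\}$ --- is inherited unchanged from Lemma~\ref{lm-R1bound-1-new} once Minkowski's inequality has reduced the matter to free evolutions of the fixed datum $(B\Gamma^\infty)(s)$.
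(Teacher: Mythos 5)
Your proof is correct and its overall strategy — Minkowski's integral inequality to pull the $s$-integral outside the norm, then reduce to the free-evolution estimate of Lemma~\ref{lm-R1bound-1-new} with dominated convergence in $s$ — is the same as the paper's. However, you add one step that the paper's very terse proof omits, and it is worth flagging because the paper's argument as written has a gap that your proof actually closes. The paper simply says that the conclusion follows ``by the same arguments as in the proof of Lemma~\ref{lm-R1bound-1-new}, provided that $\|U(t-s)B\Gamma^K(s)\|_{L^2_{t}\mathcal{H}^1_{\xi''}}$ is uniformly bounded in $N$.'' But a uniform bound alone is not sufficient to re-run that argument: the crucial monotone-convergence step in Lemma~\ref{lm-R1bound-1-new} (the quantity $a_{k,N}\searrow 0$) hinges on the datum $\gamma_{0,N}^{(k)}=\gamma_0^{(k)}$ being \emph{independent of $N$} on the relevant components, whereas here the datum $(B\Gamma^{K(N)})(s)$ genuinely depends on $N$ through $K(N)$ since $\Gamma^{K}$ solves the hierarchy with $K$-truncated data and is not merely a truncation of a fixed object. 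Your decomposition $B\Gamma^{K}=B\Gamma^\infty+B(\Gamma^{K}-\Gamma^\infty)$ is exactly the device needed: the $\Gamma^\infty$-part has an $N$-independent datum so the monotone-convergence step of Lemma~\ref{lm-R1bound-1-new} applies verbatim (then dominated convergence in $s$), while the remainder is handled by the $N$-uniform mapping bound together with $\|B\Gamma^{K}-B\Gamma^\infty\|_{L^2\mathcal{H}^1_{\xi'''}}\to0$, which the paper proves elsewhere (the limit \eqref{eq-Thetalim-GammaK-1}). The weight bookkeeping $\xi<\theta\xi''<\theta^2\xi'''<\theta^3\xi'$ in \eqref{eq-xiparm-def-1} provides exactly the room you need, and since $T<T_0(\xi)<T_0(\xi''')$ the well-posedness of \cite{chpa4} gives the $K$-uniform bound on $\|B\Gamma^K\|_{L^2\mathcal{H}^1_{\xi'''}}$ you invoke. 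In short: same route as the paper, but you supply a decomposition that the paper silently assumes and that is in fact necessary.
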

\begin{proof}
We have that
\begin{align*}
&\left\|\int_0^t(B_N-B)U(t-s)B\Gamma^{K}(s)\,ds\right\|_{L^2_{t\in I}\mathcal{H}^1_{\xi''}}\\
&\le\int_0^T\left\|(B_N-B)U(t-s)B\Gamma^{K}(s)\right\|_{L^2_{t\in I}\mathcal{H}^1_{\xi''}}\,ds.
\end{align*}
By the same arguments as in the proof of Lemma \ref{lm-R1bound-1-new} above, the integral above goes to zero as $N\rightarrow\infty$ provided that $\|U(t-s)B\Gamma^K(s)\|_{L^2_{t\in I}\mathcal{H}^1_{\xi''
}}$ is uniformly bounded in $N$.  See \cite{chpa4} for a proof of the boundedness of $\|U(t-s)B\Gamma^K(s)\|_{L^2_{t\in I}\mathcal{H}^1_{\xi''}}$.
\end{proof}

\subsection{Control of $\Gamma^{\Phi_N}$ and $\Gamma_N^K$ as $N\rightarrow\infty$}\label{control}

We begin by stating an energy estimate used by Erd\"os, Schlein, and Yau in \cite{esy2}.  
We define the notation $R^{(k,\alpha)}:=\prod_{j=1}^k(1-\Delta_{x_j})^{\alpha/2}$.
\begin{prop}\label{energyprop}
Suppose that $\psi$ is symmetric with respect to permutations of its $N$ variables.  Fix $k\in\mathbb{N}$ and $0<C<1$.  Then there is $N_0=N_0(k,C)$ such that
\begin{align*}
\la\psi,(H_N+N)^k\psi\ra\ge C^kN^k\la\psi,R^{(k,2)}\psi\ra
\end{align*}
for all $N>N_0$.
\end{prop}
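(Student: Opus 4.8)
The statement to prove is Proposition~\ref{energyprop}: for symmetric $\psi$ on $L^2(\R^{3N})$, fixed $k$, and $0<C<1$, there is $N_0(k,C)$ with
\[
\langle\psi,(H_N+N)^k\psi\rangle \ge C^k N^k \langle\psi, R^{(k,2)}\psi\rangle
\]
for $N>N_0$, where $R^{(k,2)}=\prod_{j=1}^k(1-\Delta_{x_j})$.

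\textbf{Proof plan.}

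The plan is to follow the now-classical energy estimate of Erd\H{o}s--Schlein--Yau (Proposition~5.1 in \cite{esy2}), proceeding by induction on $k$. First I would record the key structural facts: $H_N+N\ge 0$ (since $V\ge 0$ and $-\Delta_{x_j}\ge 0$, and the added $+N$ more than compensates for the $k$ zero-modes we will lose), and, more importantly, the operator inequality that for the one-particle piece $h_j:=-\Delta_{x_j}$ and the full Hamiltonian one has $H_N + N \ge \tfrac12(1-\Delta_{x_j})$ on the symmetric subspace modulo lower-order terms — actually the cleaner route is: by permutation symmetry of $\psi$, it suffices to control $\langle\psi,(1-\Delta_{x_1})\cdots(1-\Delta_{x_k})\psi\rangle$, and one shows $(H_N+N)^k \ge c\, N^k (1-\Delta_{x_1})\cdots(1-\Delta_{x_k})$ after restricting to the symmetric sector.

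The inductive step is the heart of the matter. Assume the bound (with a slightly better constant $C'$ with $C<C'<1$) holds at level $k-1$ for all sufficiently large $N$, applied to the vector $\psi$; one wants to upgrade to level $k$. The trick is to write $(H_N+N)^k = (H_N+N)^{(k-1)/1}\cdots$ and insert a resolvent-type localization: split $H_N+N = \sum_{j} h_j + \text{(interaction)} + N$, and use that for the symmetric function the $k$-th factor of $(H_N+N)$ can be made to ``act like'' $(1-\Delta_{x_k})$ on a copy of $\psi$ to which $(1-\Delta_{x_1})\cdots(1-\Delta_{x_{k-1}})$ has already been applied. Concretely, one introduces the operator $\widetilde\psi := [(1-\Delta_{x_1})\cdots(1-\Delta_{x_{k-1}})]^{1/2}\psi$ (or works with $P_k := \prod_{j>k}$-type projections as in ESY), applies the $k=1$ base estimate $\langle\phi,(H_N+N)\phi\rangle\ge cN\langle\phi,(1-\Delta_{x_k})\phi\rangle$ to $\phi=\widetilde\psi$, and then commutes $(H_N+N)$ past the already-applied derivatives. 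The commutators $[(1-\Delta_{x_i}), V_N(x_i-x_j)]$ generate the error terms; these are controlled using that $V_N(x)=N^{3\beta}V(N^\beta x)$ has $\|V_N\|_{L^1}$ bounded and that the $\frac1N$ prefactor on the interaction, together with the Sobolev-type bound $\| |x_i-x_j|^{-1}\|$-estimates (or directly $\|V_N * |\psi|^2\|_\infty$-type bounds), makes each such error term $O(1/N)$ relative to the main term, hence absorbable for $N$ large by shrinking the constant from $C'$ to $C$. The base case $k=1$ is: $\langle\psi,(H_N+N)\psi\rangle = \sum_j\langle\psi,-\Delta_{x_j}\psi\rangle + \frac1N\sum_{i<j}\langle\psi,V_N(x_i-x_j)\psi\rangle + N\|\psi\|^2 \ge \langle\psi,-\Delta_{x_1}\psi\rangle + \|\psi\|^2 = \langle\psi,(1-\Delta_{x_1})\psi\rangle$ using $V_N\ge0$ and symmetry to isolate the $j=1$ kinetic term, which already gives constant $1$; the loss of constant only enters in the induction.

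\textbf{Main obstacle.} The delicate point is the inductive commutator estimate — specifically, showing that when $(H_N+N)$ is moved through the product $\prod_{j=1}^{k-1}(1-\Delta_{x_j})$, the resulting commutator terms involving $V_N$ are genuinely of order $N^{k-1}$ (not $N^k$) so that the $\tfrac1N$ interaction prefactor renders them negligible against the main $N^k$ term. This requires the ESY-type observation that $\sum_{i<j}$ has $\sim N^2$ terms but each carries $\tfrac1N$, and each individual $V_N(x_i-x_j)$ sandwiched between high derivatives is controlled uniformly in $N$ (for $0<\beta<1$, or here $0<\beta<\tfrac14$) by a Sobolev trace estimate; propagating this through $k$ levels of induction while keeping the accumulated constant above $C^k$ is the bookkeeping that must be done carefully. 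Since this is precisely Proposition~5.1 of \cite{esy2}, I would present the argument concisely and cite \cite{esy2} for the detailed commutator bounds, emphasizing only the points (symmetry reduction, base case, induction with constant loss $C'\to C$ absorbed for $N>N_0(k,C)$) that are needed in the sequel.
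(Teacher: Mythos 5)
Your proposal is correct and takes essentially the same approach as the paper: the paper states this proposition as a quotation of the Erd\H{o}s--Schlein--Yau energy estimate (Proposition~5.1 of \cite{esy2}) and simply cites \cite{esy2} rather than reproving it, and you correctly identify the source, the inductive structure, and that the detailed commutator bounds should be cited. One small slip worth flagging: in your displayed base-case chain you write $\langle\psi,(H_N+N)\psi\rangle \ge \langle\psi,-\Delta_{x_1}\psi\rangle + \|\psi\|^2$, which loses the factor $N$; using permutation symmetry the correct step is $\sum_{j=1}^{N}\langle\psi,-\Delta_{x_j}\psi\rangle = N\langle\psi,-\Delta_{x_1}\psi\rangle$, giving $\langle\psi,(H_N+N)\psi\rangle \ge N\langle\psi,(1-\Delta_{x_1})\psi\rangle$ (constant $1$), consistent with what your surrounding text says.
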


\begin{prop}\label{7.1}
Suppose that $b_1>0$, $b_1 \log(N)\ge K(N)\ge \tfrac{1}{2}b_1 \log(N)$, and that $\xi>0$ satisfies
\begin{align}
\xi&<\eta\min\left\{\frac{1}{C}\,e^{-8\beta/b_1}, e^{-24\beta/b_1}\right\},\label{xi min}
\end{align}
where
\begin{align}
{\rm Tr}\,S^{(k,1)}\gamma_{N}^{(K)}(0)<C^K.\label{trace}
\end{align}
Then
\begin{align*}
\lim_{N\rightarrow\infty}\|B_N\Gamma_N^{K(N)}-P_{\le K(N)-1}B_N\Gamma^{\Phi_N}\|_{L^2_{t\in I}\mathcal{H}^1_\xi}=0.
\end{align*}
\end{prop}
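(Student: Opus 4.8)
The plan is to compare the solution $\Gamma_N^{K(N)}$ of the $(K,N)$-truncated BBGKY hierarchy with the marginals $\Gamma^{\Phi_N}$ coming directly from the $N$-body Schr\"odinger flow. Both satisfy BBGKY-type equations (the same equation \eqref{bbgky} for $k \le K(N)$), but they differ in two respects: $\Gamma^{\Phi_N}$ has components for all $k \le N$ while $\Gamma_N^{K(N)}$ is cut off at $K(N)$, and the initial data $\Gamma_{0,N}^{K(N)} = P_{\le K(N)}\Gamma_{0,N}$ agree. So first I would write the Duhamel (integral) representation for both $\Gamma_N^{K(N)}$ and $P_{\le K(N)-1}\Gamma^{\Phi_N}$, subtract them, and observe that the difference of the interaction terms $B_N \Gamma_N^{K(N)} - P_{\le K(N)-1} B_N \Gamma^{\Phi_N}$ satisfies an integral inequality with a forcing term controlled by the ``mismatch'' at level $k = K(N)$: in the equation for the $k$-th component with $k = K(N)-1$, the term $B_{N;K(N)}^{main}\gamma_N^{(K(N))}$ appears, coupling to level $K(N)$, and for $\Gamma_N^{K(N)}$ the level-$K(N)$ component is cut off by $P_{\le K(N)}$ whereas for $\Gamma^{\Phi_N}$ it is not. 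Hence the difference is driven precisely by the tail $(\gamma_{\Phi_N}^{(k)})_{k > K(N)-1}$ entering through one application of $B_N$.

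The key step is then to bootstrap via Lemma \ref{lm-BGamma-Cauchy-1} (the same abstract a priori / Cauchy-type estimate used in Proposition \ref{prp-GammaKN-conv-1}): setting $\widetilde\Theta_N := B_N\Gamma_N^{K(N)} - P_{\le K(N)-1}B_N\Gamma^{\Phi_N}$ and absorbing the Duhamel self-coupling term $\int_0^t B_N U(t-s)\widetilde\Theta_N(s)\,ds$ into the left side, one reduces the estimate to bounding the inhomogeneous term, which is a sum over levels $k$ near $K(N)$ of $(\xi)^k \| B_{N} U(t-s) (\text{tail of } \Gamma^{\Phi_N})\|$. To control this I would invoke Proposition \ref{energyprop} (the Erd\H{o}s--Schlein--Yau energy estimate) together with the energy hypothesis \eqref{energy condition}: $\la \Phi_N(0), H_N^k\Phi_N(0)\ra < C^k N^k$ propagates in time (since $H_N$ commutes with the flow), so $\la \Phi_N(t), (H_N+N)^k\Phi_N(t)\ra \lesssim C^k N^k$ uniformly in $t$, and hence $\tr S^{(k,1)}\gamma_{\Phi_N}^{(k)}(t) \le C^k$ for $N > N_0(k,C)$ by Proposition \ref{energyprop}. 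Combined with the relation between $K(N)$ and $\log N$ (so that $N_0(K(N),C) < N$ eventually, which is where $b_1$ small is used) and with the condition \eqref{xi min} on $\xi$ ensuring $\sum_k \xi^k C^k$-type sums converge with room to spare, the inhomogeneous term is bounded by $C^{K(N)} \xi^{K(N)} \times (\text{const})$ times a geometric factor, which tends to $0$ as $N \to \infty$ because $\xi C < 1$ and $K(N) \to \infty$. The boardgame estimate of \cite{KM} (Lemma \ref{lm-boardgame-est-1}) handles the $L^2_t H^1$ space-time norm of $B_N U(t-s)$ applied to the tail.

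The main obstacle I expect is making the ``tail truncation'' argument quantitative in a way that is uniform in $N$: one must show that replacing $\Gamma^{\Phi_N}$ by its truncation $P_{\le K(N)-1}\Gamma^{\Phi_N}$ — i.e. discarding all marginals of order $> K(N)-1$ — costs only something that vanishes as $N\to\infty$, and this requires simultaneously (i) the energy bound to hold at the $k = K(N)$ level with $K(N)$ growing, which forces the $N_0(k,C)$ in Proposition \ref{energyprop} to be beaten by the $\log$-growth of $K(N)$ (hence $b_1$ small depending on $\beta$ as in \eqref{eq-b1-def-1}), and (ii) the weight $\xi^{K(N)}$ times the combinatorial factors from the boardgame expansion (which grow like $C^{K(N)}$) to still produce a net decaying bound, which is exactly what the stringent smallness of $\xi$ in \eqref{xi min} is engineered to guarantee. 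A secondary technical point is that $\Gamma^{\Phi_N}$ solves the \emph{full} BBGKY hierarchy including the error terms $B_{N;k}^{error}$, so one must check that those error contributions (weighted by $1/N$ times $k^2$-many terms) are also controlled on the relevant range $k \le K(N) \sim \log N$; this is standard since $1/N$ beats any power of $\log N$, and follows the pattern of Proposition A.2 in \cite{CPBBGKY} already cited in Lemma \ref{lm-R1bound-1-new}.
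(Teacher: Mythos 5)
Your overall architecture is right: reduce, via the Cauchy-type / boardgame machinery (which is precisely what Lemma 6.1 of \cite{CPBBGKY} packages, equivalently Lemma~\ref{lm-BGamma-Cauchy-1} applied with the mismatch as the forcing term), to controlling a single top-level forcing term $\|(B_N\Gamma^{\Phi_N})^{(K)}\|_{L^2_{t\in I}H^1}$, and then control this using Proposition~\ref{energyprop} together with the energy hypothesis~\eqref{energy condition}. That matches the paper's route.

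However, there is a genuine quantitative gap in the last step. You assert that the forcing term is bounded by ``$C^{K(N)}\xi^{K(N)}\times(\mathrm{const})$,'' and that it vanishes because ``$\xi C<1$ and $K(N)\to\infty$.'' This omits the crucial fact that passing from the Hilbert--Schmidt $H^1$ norm of $B_N$ applied to a density matrix down to a trace quantity controllable by the energy estimate costs a power of $N$. Concretely, the paper bounds
\begin{align*}
\|(B_N^+\Gamma^{\Phi_N})^{(K)}\|_{L^2_{t\in I}H^1}\;\lesssim\; T^{1/2}\,\bigl(\|V_N\|_{L^\infty}+\|\widehat{\nabla V_N}\|_{L^1}\bigr)\,K\,\sup_{t\in I}\mathrm{Tr}\bigl(S^{(K,1)}\gamma_N^{(K)}(t)\bigr)\;\lesssim\; T^{1/2}N^{4\beta}K\,C^K,
\end{align*}
where the $N^{4\beta}$ comes from $\|\widehat{\nabla V_N}\|_{L^1}\lesssim N^{4\beta}$ (the $H^1$ norm sees one derivative on $V_N$). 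Thus the net bound is of size $(\eta^{-1}\xi)^K K^2 N^{4\beta} C^K$, and making this vanish requires not merely $\eta^{-1}\xi C<1$, but the stronger inequality $\eta^{-1}\xi C<e^{-8\beta/b_1}$ so that $\xi^{K(N)}$, with $K(N)\ge\tfrac12 b_1\log N$, can overwhelm $N^{4\beta}$. This is exactly the content and purpose of the lower bound $K(N)\ge\tfrac12 b_1\log N$ and of the specific form of~\eqref{xi min}; your argument uses only the upper bound $K(N)\le b_1\log N$ and a generic geometric-decay condition, so it does not close. A secondary, more minor slip: the boardgame estimate (Lemma~\ref{lm-boardgame-est-1}) is used inside the Cauchy reduction, not to bound the top-level forcing term, which is handled by the direct multiplier / trace-Sobolev argument just described; attributing that role to the boardgame lemma obscures where $N^{4\beta}$ actually enters.
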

\begin{proof}
From Lemma 6.1 in \cite{CPBBGKY}, we have that
\begin{align}
\|B_N\Gamma_N^K-P_{\le K-1}B_N\Gamma^{\Phi_N}\|_{L^2_{t\in I}\mathcal{H}^1_\xi}\le C(T,\xi)(\eta^{-1}\xi)^K K\|(B_N\Gamma^{\Phi_N})^{(K)}\|_{L^2_{t \in I}H^1}\label{bound1b}
\end{align}
holds for a finite constant $C(T,\xi)$ independent of $K$, $N$.

It follows immediately from the definition of $V_N$ that
\begin{align*}
\|\widehat{\nabla V_N}\|_{L^1}\le C N^{4\beta}.
\end{align*}
Thus we have that

\begin{align}
&\|(B_N^+\Gamma^{\Phi_N})^{(K)}\|_{L^2_{t\in I}H^1}^2\nonumber\\
&=\int_I dt \int d\underbar{x}_K \,d\underbar{x}_K'\bigg |\sum_{\ell=1}^K\int\left[\prod_{j=1}^k\langle\nabla_{x_j}\rangle\langle\nabla_{x_j'}\rangle\right]V_N(x_\ell-x_{K+1})\nonumber\\
&\hspace{2cm}\Phi_N(t,\underbar{x}_N)\overline{\Phi_N(t,\underbar{x}_K',x_{K+1},\ldots,x_N)}\,dx_{K+1}\ldots dx_{N}\bigg|^2\nonumber\\ 
&\le CT(\|V_N\|_{L^{\infty}}^2+\|\widehat{\nabla V_N}\|_{L^1}^2)K^2\sup_{t\in I}\left(\|R^{(k,1)}\Phi_N\|_{L^2}\|R^{(k,1)}\Phi_N\|_{L^2}\right)^2\nonumber\\
%&=CT(\|V_N\|_{L^{\infty}}^2+\|\widehat{\nabla V_N}\|_{L^1}^2)K^2\sup_{t\in I}\left(\|R^{(k,1)}\Phi_N\|_{L^2}\|R^{(k,1)}\Phi_N\|_{L^2}\right)^2\nonumber\\
&=CTN^{8\beta}K^2\sup_{t \in I}\left(\text{Tr}(S^{(K,1)}\gamma_N^{(K)}(t))\right)^2.\label{bound2b}
\end{align}

Since $\la\Phi_N(0),H_N^K,\Phi_N(0)\ra<C^kN^K$, it follows from Proposition \ref{energyprop}, that
\begin{align}
{\rm Tr}(\, S^{(K,1)}\gamma_N^{(K)}(t) \,)
&=\la\Phi_N(t),R^{(K,2)}\Phi_N(t)\ra\nonumber\\
&\le\frac{1}{N^kC^k}\la\Phi_N(t),(H_N+N)^k\Phi_N(t)\ra\nonumber\\
&=\frac{1}{N^kC^k}\la\Phi_N(0),(H_N+N)^k\Phi_N(0)\ra\nonumber\\
&\le\frac{1}{N^kC^k}(2^k\la\Phi_N(0),H_N^k\Phi_N(0)\ra+2^kN^k\la\Phi_N(0),\Phi_N(0)\ra)\nonumber\\
&\le C^k.\label{bound3b}
\end{align}

Combining \eqref{bound1b}, \eqref{bound2b}, and \eqref{bound3b} yields
 
\begin{align*}
&\|B_N\Gamma_N^K-P_{\le K-1}B_N\Gamma^{\Phi_N}\|_{L^2_{t\in I}H^1_\xi}\\
&\le C(T,\xi)(\eta^{-1}\xi)^KK\|(B_N\Gamma^{\Phi_N})^{(K)}\|_{L^2_{t\in I}H^1}&\text{by \eqref{bound1b}}\\
&\le C(T,\xi)(\eta^{-1}\xi)^KKCT^{1/2}N^{4\beta}K\sup_{t\in I}\text{Tr}(S^{(K,1)}\gamma_N^{(K)}(t))&\text{by \eqref{bound2b}}\\
&\le C(T,\xi)(\eta^{-1}\xi)^KKCT^{1/2}N^{4\beta}C^K&\text{by \eqref{bound3b}}\\
&\le\widetilde{C}(T,\xi)(\eta^{-1}\xi)^KKN^{4\beta}C^{K}\\
&\rightarrow 0\text{ as }N\rightarrow\infty
\end{align*}
because $K(N)\ge \tfrac{1}{2}b_1 \log(N)$ and $\xi$ satisfies \eqref{xi min}.
\end{proof}

\subsection{Proof of Theorem \ref{derivation}}
We are now ready to conclude the proof of Theorem \ref{derivation}.   
To this end, we recall again the solution $\Gamma^K$ of the GP hierarchy with 
truncated initial data, $\Gamma^K(t=0)=P_{\leq K}\Gamma_0\in\cH_\xi^1$.
In \cite{chpa4}, the authors proved the existence of a solution  $\Gamma^K$
that satisfies the $K$-truncated GP-hierarchy in  integral form, 
\begin{align} \label{eq-outl-GP-Duhamel-special-3}
	\Gamma^K(t) = U(t)\Gamma^K(0) + i \,  \int_0^t U(t-s) \, \opB\Gamma^K(s) \, ds
\end{align}
where $(\Gamma^K)^{(k)}(t) = 0$ for all $k > K$. Moreover, 
it is shown in \cite{chpa4} that this solution satisfies
$\opB\Gamma^K\in L^2_{t\in I}\cH^{1}_{\xi}$, where $I:=[0,T].$

Additionally, the following convergence was proved in \cite{chpa4}:
\begin{enumerate} 
\item[(a)] The limit 
\begin{align}\label{eq-Gamma-GammaK-1}
	\Gamma := \lim_{K\rightarrow \infty} \Gamma^K 
	%\; \; \; \; \in \; L^\infty_t \cH^{1}_{\xi} 
\end{align}
exists in $L^\infty_t \cH^{1}_{\xi}$. \\

\item[(b)] The limit  
\begin{align}\label{eq-Thetalim-GammaK-1}
	\Theta := \lim_{K \rightarrow \infty} \opB \Gamma^K
	%\; \; \; \in \;  L^2_t \cH^{1}_{\xi} \,
\end{align}
exists in $L^2_t \cH^{1}_{\xi}$, and in particular,
\begin{align}\label{eq-Thetalim-GammaK-2}
	\Theta = \opB\Gamma \,.
\end{align}

\item[(c)] The limit $\Gamma$ in equation \eqref{eq-Gamma-GammaK-1} satisfies the full GP hierarchy with initial data $\Gamma_0$.
\end{enumerate}

Clearly, we have that
\begin{align}
&\|B\Gamma-B_NP_{\le K(N)}\Gamma^{\Phi_N}\|_{L^2_{t\in I}\mathcal{H}^1_\xi}\nonumber\\
&\le \|B\Gamma-B\Gamma^{K(N)}\|_{L^2_{t\in I}\mathcal{H}^1_\xi}\label{b1}\\
&\hspace{0.5cm}+\|B\Gamma^{K(N)}-B_N\Gamma_N^{K(N)}\|_{L^2_{t\in I}\mathcal{H}^1_\xi}\label{b2}\\
&\hspace{0.5cm}+\|B\Gamma_N^{K(N)}-B_NP_{\le K(N)}\Gamma^{\Phi_N}\|_{L^2_{t\in I}\mathcal{H}^1_\xi}\label{b3}.
\end{align}
%$\eqref{g3}\rightarrow 0$ as $N\rightarrow\infty$ by 
 
In the limit $N\rightarrow\infty$, we have that $\eqref{b1}\rightarrow0$ from 
\eqref{eq-Thetalim-GammaK-1} and \eqref{eq-Thetalim-GammaK-2}.  By Proposition \ref{prp-GammaKN-conv-1}, $\eqref{b2}\rightarrow0$.  $\eqref{b3}\rightarrow 0$ follows from Proposition \ref{7.1}.  This is because $\Gamma_0\in\mathfrak{H}^1_{\xi'}$ and hence \eqref{trace} holds.  Therefore,
\begin{align*}
	\lim_{N\rightarrow\infty}\|\opB\Gamma  
	-\opB_N  \Gamma^{\Phi_N}\|_{L^2_{t\in I}\cH_\xi^1} = 0. \,
\end{align*}

Moreover, we have that

\begin{align}
&\|P_{\le K(N)}\Gamma^{\Phi_N}-\Gamma\|_{L^\infty_{t\in I}\mathcal{H}^1_\xi}\nonumber\\
&\le \|P_{\le K(N)}\Gamma^{\Phi_N}-\Gamma_N^{K(N)}\|_{L^\infty_{t\in I}\mathcal{H}^1_\xi}\label{g1}\\
&\hspace{0.5cm}+\|\Gamma^{K(N)}-\Gamma\|_{L^\infty_{t\in I}\mathcal{H}^1_\xi}\label{g2}\\
&\hspace{0.5cm}+\|\Gamma_N^{K(N)}-\Gamma^{K(N)}|_{L^\infty_{t\in I}\mathcal{H}^1_\xi}\label{g3}
\end{align}

By the Duhamel formula, and applying the Cauchy-Schwarz inequality in time, we have
\begin{align*}
\eqref{g1}
&=\|\int_0^t U(t-s)B_N(P_{\le K(N)}\Gamma^{\Phi_N}-\Gamma_N^{K(N)})(s)\,ds\|_{L^\infty_{t\in I}\mathcal{H}^1_\xi}\\
&\le T^{1/2}\|B_N\Gamma_N^{K(N)}-B_NP_{\le K(N)}\Gamma^{\Phi_N}\|_{L^2_{t\in I}\mathcal{H}^1_\xi}\\
&\rightarrow 0\text{ as }N\rightarrow\infty\text{ by Proposition \ref{7.1}.}
\end{align*}

$\eqref{g2}\rightarrow 0$ as $N\rightarrow\infty$ by \eqref{eq-Gamma-GammaK-1}.  Finally, $\eqref{g3}\rightarrow 0$ as $N\rightarrow\infty$ follows from proposition \ref{prp-GammaKN-conv-1}.  Thus
\begin{align*}
\lim_{N\rightarrow\infty}\|P_{\le K(N)}\Gamma^{\Phi_N} - \Gamma\|_{L^\infty_{t\in I}\cH_\xi^1}=0.
\end{align*}
This completes the proof of Theorem \ref{derivation}.
\qed

%\newpage

$\;$ \\

\section{Global Well-Posedness}
\label{sec-gwp-1}

In this section, we prove Theorem \ref{gwp}.  
To this end, we first prove positive semidefiniteness of solutions to the GP hieararchy
in  Theorem \ref{pos_again}, below, and 
subsequently global well posedness of the GP hierarchy in Theorem \ref{gwp_again}.

To prove positive semi-definiteness, we recall the quantum de Finetti theorem, Theorem \ref{qdf},
and we invoke the following lemma from \cite{CHPS}.

\begin{lemma}\label{ae}
Let $\mu$ be a Borel probability measure in $L^2(\mathbb{R}^3)$, and assume that
\begin{align}\label{eq-}
     \int d\mu(\phi)\|\phi\|_{H^1}^{2k}\le M^{2k}
\end{align}
holds for some finite constant $M>0$, and all $k\in\mathbb{N}$.  Then,
\begin{align*}
    \mu\big(\big\{\phi\in L^2(\mathbb{R}^3)\big|\|\phi\|_{H^1}>M\big\}\big)=0.
\end{align*}
\end{lemma}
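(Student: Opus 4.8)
The plan is a Chebyshev-type argument that extracts the essential supremum of $\|\cdot\|_{H^1}$ from the uniform control of all of its even moments. First I would fix $\e>0$ and set $A_\e:=\{\phi\in L^2(\R^3)\,:\,\|\phi\|_{H^1}>M+\e\}$. This set is Borel, because $\phi\mapsto\|\phi\|_{H^1}\in[0,\infty]$ is lower semicontinuous on $L^2(\R^3)$: if $\phi_n\to\phi$ in $L^2$, then a subsequence of $\widehat{\phi_n}$ converges a.e., and Fatou's lemma applied to $\int\la\xi\ra^2|\widehat{\phi_n}(\xi)|^2\,d\xi$ gives $\liminf_n\|\phi_n\|_{H^1}\ge\|\phi\|_{H^1}$.

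Restricting the integral in \eqref{eq-} to $A_\e$ then yields, for every $k\in\N$,
\[
	M^{2k}\ \ge\ \int d\mu(\phi)\,\|\phi\|_{H^1}^{2k}\ \ge\ (M+\e)^{2k}\,\mu(A_\e)\,,
\]
so that $\mu(A_\e)\le\big(M/(M+\e)\big)^{2k}$. Since $0<M/(M+\e)<1$, letting $k\to\infty$ forces $\mu(A_\e)=0$.

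Finally I would write $\{\phi\in L^2(\R^3)\,:\,\|\phi\|_{H^1}>M\}=\bigcup_{n\in\N}A_{1/n}$ as a countable union of $\mu$-null sets; by countable subadditivity of $\mu$ the union is itself $\mu$-null, which is the assertion. (In particular, $\{\|\phi\|_{H^1}=+\infty\}$ is contained in every $A_{1/n}$ and is therefore null, which is consistent with the finiteness of the left-hand side of \eqref{eq-} already for $k=1$.)

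The computation is routine; the only points requiring a line of justification are the Borel measurability of the superlevel sets of the $H^1$-norm viewed as a function on $L^2(\R^3)$ (handled via the lower semicontinuity above) and the innocuous possibility that $\|\phi\|_{H^1}=+\infty$ on a set which the first moment bound already forces to have measure zero. I do not expect any genuine obstacle here.
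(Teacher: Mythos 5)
Your proof is correct and follows essentially the same route as the paper: apply Chebyshev's inequality to the $2k$-th moment bound to control $\mu(\{\|\phi\|_{H^1}>\lambda\})$ by $(M/\lambda)^{2k}$ for $\lambda>M$, and send $k\to\infty$. The only additions you make — the lower-semicontinuity argument showing the superlevel sets are Borel, and the explicit countable union over $\e=1/n$ — are sensible small refinements of the paper's more terse presentation, not a different method.
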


\begin{proof}
From Chebyshev's inequality, we have that
\begin{align*}
\mu\big(\big\{\phi\in L^2(\mathbb{R}^3)\big|\|\phi\|_{H^1}>\lambda\big\}\big)
\le\frac{1}{\lambda^{2k}}\int d\mu(\phi)\|\phi\|_{H^1}^{2k}\le\frac{M^{2k}}{\lambda^{2k}}
\end{align*}
for any $k>0$.  For $\lambda>M$, the right hand side tends to zero when $k\rightarrow\infty$.
\end{proof}

We recall that, for $I\subseteq \mathbb{R}$,
\begin{align*}
\mathcal{W}^\alpha_\xi(I)=\{\Gamma\in C(I,\mathcal{H}^\alpha_\xi)\,|\, B^+\Gamma,B^-\Gamma\in L^2_{loc}(I,\mathcal{H}^\alpha_\xi)\}.
\end{align*}

We are now ready to prove positive semidefiniteness of solutions to the GP hierarchy.

\begin{thm}\label{pos_again}
Assume that 
\eqn 
    \gamma_0^{(k)} = \int d\mu(\phi)(|\phi\rangle\langle\phi|)^{\otimes k}
    \;,\;\;k\in\N\,
\eeqn
satisfies $\Gamma_0=(\gamma_0^{(k)})_{k=1}^\infty\in\mathfrak{H}^1_{\xi'}$ for some $0<\xi'<1$,
where $d\mu$ is a probability measure supported either on the unit sphere, or on the unit ball in $L^2(\R^3)$.
Then, for $0<\xi'<1$ and $\xi>0$ satisfying \eqref{xi_relationship}, and for $0<T<\min\{T_0(\xi),T_1(\xi)\}$ (see \eqref{T_0_def} and \eqref{T_1_def}), there is a unique solution $\Gamma\in\mathcal{W}^1_\xi([0,T])$ to the cubic defocusing GP hierarchy \eqref{gp} in 
$\mathbb{R}^3$ with initial data $\Gamma_0$.  Moreover, $\Gamma(t)$ is positive semidefinite for $t\in [0,T]$.
\end{thm}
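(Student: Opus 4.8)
The plan is to combine the local well-posedness theory for the GP hierarchy in $\mathcal{W}^1_\xi([0,T])$ with an approximation argument that carries positive semidefiniteness from the initial data to all later times. First, existence and uniqueness of the solution $\Gamma\in\mathcal{W}^1_\xi([0,T])$ is not the new content here: under the hypothesis $\Gamma_0\in\frH^1_{\xi'}\subset\cH^1_{\xi'}$, and for $\xi,\xi'$ satisfying \eqref{xi_relationship} with $0<T<T_0(\xi)$, the contraction-mapping construction of \cite{chpa4} (reproduced through the $K$-truncated solutions $\Gamma^K$ and their limit in Section \ref{positivity_proof_section}) yields a unique fixed point of the Duhamel map in $L^\infty_{t\in I}\cH^1_\xi$ with $\opB\Gamma\in L^2_{t\in I}\cH^1_\xi$; continuity in time, i.e.\ membership in $C(I,\cH^1_\xi)$, follows from the Duhamel representation \eqref{eq-outl-GP-Duhamel-special-3}. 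So the real task is positivity.

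The key step is to represent $\Gamma(t)$ as a limit of factorized (hence manifestly positive) pieces coming from the de Finetti decomposition of the initial data. By hypothesis $\gamma_0^{(k)}=\int d\mu(\phi)(|\phi\rangle\langle\phi|)^{\otimes k}$, and by Lemma \ref{ae} together with the bound $\Gamma_0\in\frH^1_{\xi'}$, the measure $\mu$ is supported on the ball $\{\|\phi\|_{H^1}\le M\}$ for a suitable finite $M$; in particular $\phi_0\in H^1(\R^3)$ for $\mu$-a.e.\ $\phi_0$. For each such $\phi_0$, local well-posedness of the defocusing cubic NLS in $H^1(\R^3)$ produces $\phi(t)$ on a time interval whose length depends only on $\|\phi_0\|_{H^1}\le M$, hence can be taken to be a fixed $[0,T]$ after possibly shrinking $T$ (this is where the auxiliary time scale $T_1(\xi)$ enters). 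Then $\Gamma_\phi(t):=(|\phi(t)\rangle\langle\phi(t)|^{\otimes k})_k$ solves the GP hierarchy with initial data $(|\phi_0\rangle\langle\phi_0|^{\otimes k})_k$, and each component is a rank-one projection, hence positive semidefinite. The plan is to show
\begin{align*}
    \gamma^{(k)}(t)=\int d\mu(\phi)\,(|\phi(t)\rangle\langle\phi(t)|)^{\otimes k}\,,\qquad k\in\N\,,
\end{align*}
by verifying that the right-hand side lies in $\cH^1_\xi$ (using the uniform $H^1$ bound on $\mu$ and dominated convergence), satisfies the GP-hierarchy Duhamel equation \eqref{eq-outl-GP-Duhamel-special-3} with initial data $\Gamma_0$ — differentiating under the integral sign and using that each $\Gamma_\phi$ solves it — and then invoking the uniqueness part just established. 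Positive semidefiniteness of $\gamma^{(k)}(t)$ is then immediate: it is an average over $\mu$ of positive operators, and the class of positive semidefinite trace-class (or Hilbert–Schmidt) operators is closed under such integration.

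The main obstacle I expect is the interchange of the $\mu$-integration with the nonlinear GP/NLS flow, and in particular justifying that $t\mapsto\int d\mu(\phi)\,(|\phi(t)\rangle\langle\phi(t)|)^{\otimes k}$ genuinely lands in $\mathcal{W}^1_\xi([0,T])$ and satisfies the hierarchy in the required sense. This requires (i) measurable and continuous dependence of the NLS solution $\phi(t)$ on the initial datum $\phi_0$ in $H^1$, uniformly for $\|\phi_0\|_{H^1}\le M$, so that $\phi\mapsto\Gamma_\phi$ is a measurable map into $C([0,T],\cH^1_\xi)$; (ii) a uniform-in-$\phi$ estimate on $\|\opB\Gamma_\phi\|_{L^2_{t\in I}\cH^1_\xi}$ to pass $\opB$ through the integral and land in $L^2_{t\in I}\cH^1_\xi$; and (iii) control of the time of existence of the NLS flow purely in terms of $M$, which dictates the definition of $T_1(\xi)$ and the restriction $0<T<\min\{T_0(\xi),T_1(\xi)\}$. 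Once these are in hand, uniqueness in $\mathcal{W}^1_\xi([0,T])$ forces $\Gamma(t)$ to coincide with the averaged factorized solution, and positivity follows for free.
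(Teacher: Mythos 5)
Your strategy is essentially identical to the paper's: apply the de Finetti theorem and Lemma \ref{ae} to write $\gamma_0^{(k)}=\int d\mu(\phi)\,(|\phi\rangle\langle\phi|)^{\otimes k}$ with $\mu$ concentrated on an $H^1$-ball, push the measure forward under the NLS flow $S_t$ to get the manifestly positive candidate $\tilde\gamma^{(k)}(t)=\int d\mu(\phi)\,(|S_t\phi\rangle\langle S_t\phi|)^{\otimes k}$, show that $\tilde\Gamma\in\mathcal{W}^1_\xi([0,T])$, and then invoke uniqueness in $\mathcal{W}^1_\xi$ to force $\Gamma=\tilde\Gamma$. That is exactly the paper's proof. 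The only substantive difference is how one discharges what you call ``obstacle (ii)'' (the uniform estimates putting $\tilde\Gamma$ into $\mathcal{W}^1_\xi$): you propose to do this directly via uniform $H^1$ bounds on $S_t\phi$ (NLS energy conservation plus the support bound from Lemma \ref{ae}) together with dominated convergence and a uniform estimate on $\|\opB\Gamma_\phi\|_{L^2_t\cH^1_\xi}$. The paper instead routes this through the conserved higher-order energy functionals $\langle K^{(m)}\rangle$ from \cite{CPHE}, expressing $\langle K^{(m)}\rangle_{\tilde\Gamma(t)}=\int d\mu(\phi)\,(\tfrac12+E[S_t\phi])^m$ and summing in $m$, and then using Strichartz estimates for the factorized pieces to bound $\|B\tilde\Gamma\|_{L^2_t\cH^1_\xi}$; this yields the quantitative bound $\|\tilde\Gamma(t)\|_{\frH^1_\xi}\le\|\Gamma_0\|_{\frH^1_{\xi'}}$ that is then reused in the global well-posedness iteration. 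Your more elementary route would also close, but note two small corrections: the defocusing cubic NLS in $\R^3$ is in fact globally well-posed in $H^1$, so the restriction $T<T_1(\xi)$ is not about existence time of the NLS flow but about making the Strichartz constant $C(T)$ in \eqref{nls_estimate_2} uniform over the relevant $H^1$-ball; and the existence/uniqueness of the GP solution is drawn from \cite{CP} together with Proposition \ref{continuous}, rather than from \cite{chpa4} alone.
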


\begin{proof}
By \cite{CP} and Proposition \ref{continuous}, there exists a unique solution $\Gamma$ 
to the GP hierarchy in $\mathcal{W}^1_\xi([0,T])$ with initial data $\Gamma_0$.  

By the quantum de Finetti theorem (Theorem \ref{qdf}) and Lemma \ref{ae}, there 
exists a positive semidefinite Borel probability measure $\mu$ on the unit sphere in $L^2(\mathbb{R}^3)$ such that
\begin{align}
    \gamma_0^{(k)}=\int\,d\mu(\phi)(|\phi\ra\la\phi |)^{\otimes k}
\end{align}
and $\|\phi\|_{H^1}^2\le (\xi')^{-1}\|\Gamma_0\|_{\mathfrak{H}^1_{\xi'}}$ $\mu$-almost everywhere.  
Let $S_t$ be the flow map of the cubic defocusing NLS.  Since the NLS is well-posed in $H^1$, 
\begin{align}
    \tilde{\gamma}^{(k)}(t):=\int\,d\mu(\phi)(|S_t\phi\ra\la S_t\phi |)^{\otimes k}
\end{align}
is well-defined, positive semidefinite, and $\tilde{\Gamma}:=\{\tilde{\gamma}^{(k)}\}_{k=1}^\infty$ satisfies the cubic defocusing GP hierarchy.

Moreover, we claim that $\tilde{\Gamma}\in\mathcal{W}^1_\xi([0,T])$.
To prove this fact, let $\la\mathcal{K}^{(m)}\ra_{\Gamma(t)}$, $m\in\N$, denote the higher order energy 
functionals for the cubic GP hierarchy  
introduced in \cite{CPHE}. They are given by
\eqn\label{eq-Km-def-1}
    \langle K^{(m)}\rangle_{\Gamma(t)}:=\text{Tr}_{1,3,5,\dots,2m +1}(K^{(m)}\gamma^{(2m )}(t))\label{HE_def}
\eeqn
for $m\in\mathbb{N}$, where
\begin{align*}
    %k_p&:=2\;\;\;{\rm for \; the  \; cubic \; case}\\
    K_\ell&:=\frac{1}{2}(1-\Delta_{x_\ell})\text{Tr}_{\ell+1}+\frac{1}{4}B^+_{\ell;\ell+1} 
    \;\;\;,\;\;\;\ell\in\N\,, \\
    K^{(m)}&:=K_1K_{3}\cdots K_{2m-1}.
\end{align*}
In \cite{CPHE}, it is shown that these higher order energy functionals are conserved:
\begin{prop}(C-Pavlovi\'c \cite{CPHE})
Suppose that $\Gamma\in\mathfrak{H}^1_\xi$ is symmetric, admissible, and solves the GP hierarchy.  
Then, for all $m\in\mathbb{N}$, the higher order energy functionals \eqref{HE_def} are bounded and conserved, 
$\langle K^{(m)}\rangle_{\Gamma(t)}= \langle K^{(m)}\rangle_{\Gamma(0)}$.
\end{prop}

Using the de Finetti theorem, we can eliminate the requirement of admissibility.
We write 
\eqn
    E[\phi] := \tfrac{1}{2}\|\phi\|_{H^1}^2\|\phi\|_{L^2}^2+\tfrac{1}{4}\|\phi\|_{L^4}^4 = E[S_t\phi]
\eeqn
for the conserved energy of the solution of the NLS. 
%Clearly,
%\eqn
%    E[\phi] & \leq& \frac12 \|\phi\|_{H^1}^2\|\phi\|_{L^2}^2 +  B\|\phi\|_{H^1}^3\|\phi\|_{L^2}
 %   \nonumber\\
%    &\leq& C(\xi')^{-3}
%\eeqn 
%$\mu$-almost surely, where we used the Gagliardo-Nirenberg inequality.
Then, it can be easily checked that
\begin{align}
    \la\mathcal{K}^{(m)}\ra_{\tilde\Gamma(t)}
    &= \int d\mu(\phi)\Big(\, \frac12 \, + \, E[S_t\phi] \, \Big)^m \,.
\end{align}
We have that the sequence of higher energy functionals $\la\mathcal{K}^{(m)}\ra_{\Gamma(t)}$, 
for $m\in\N$, satisfies
\begin{align*}
    \|\Gamma(t)\|_{\mathfrak{H}^1_{\xi}}&
    \le\sum_{m\in\mathbb{N}}(2\xi)^m\la\mathcal{K}^{(m)}\ra_{\Gamma(t)}\\
    &=\sum_{m\in\mathbb{N}}(2\xi)^m\la\mathcal{K}^{(m)}\ra_{\Gamma(0)}\\
    &\le\|\Gamma(0)\|_{\mathfrak{H}^1_{\xi'}},
\end{align*}
by Theorem 6.2 in \cite{CPHE}.

As a consequence, we find that
\begin{align*}
    \|\tilde{\Gamma}(t)\|_{\mathcal{H}^1_\xi}
    &\le\|\tilde{\Gamma}(t)\|_{\mathfrak{H}^1_\xi}\\
    &\le\sum_{m\in\mathbb{N}}(2\xi)^m\la\mathcal{K}^{(m)}\ra_{\tilde{\Gamma}(t)}\\
    &=\sum_{m\in\mathbb{N}}(2\xi)^m\la\mathcal{K}^{(m)}\ra_{\tilde{\Gamma}(0)}\\
    &\le\|\Gamma_0\|_{\mathfrak{H}^1_{\xi'}}\\
    &<\infty \,.
\end{align*}
Moreover,
\begin{align}
    &\|B\tilde{\Gamma}\|_{L^2_{t\in [0,T]}H^1_{\xi}}
    \nonumber\\
    &\le\sum_{k=1}^\infty(\xi)^k\int\,d\mu(\phi)\|\la\nabla\ra(|S_t\phi|^2S_t\phi)\|_{L^2_{t\in [0,T]}
    L^2(\mathbb{R}^{3})}
    \|\la\nabla\ra S_t\phi\|_{L^\infty_{t\in [0,T]}L^2(\mathbb{R}^{3})}^{2k-1}
    \label{nls_estimate_1}\\
    &\le\sum_{k=1}^\infty(\xi)^k\int\,d\mu(\phi)\||S_t\phi|^2\|_{L^\infty_tL^3(\mathbb{R}^{3})}
    \|\la\nabla\ra S_t\phi\|_{L^2_{t\in [0,T]}L^6(\mathbb{R}^{3})}
    \|\la\nabla\ra S_t\phi\|_{L^\infty_{t\in [0,T]}L^2(\mathbb{R}^{3})}^{2k-1}
    \nonumber\\
    &\le\sum_{k=1}^\infty(\xi)^k\int\,d\mu(\phi)\|S_t\phi\|^2_{L^\infty_{t\in [0,T]}L^6(\mathbb{R}^{3})}
    \|\la\nabla\ra S_t\phi\|_{L^2_{t\in [0,T]}L^6(\mathbb{R}^{3})}
    \|\la\nabla\ra S_t\phi\|_{L^\infty_{t\in [0,T]}L^2(\mathbb{R}^{3})}^{2k-1} \,.
    \label{nls_estimate_1-1}
\end{align}
Here, we use the bound 
\eqn\label{nls_estimate_2}
    \|\la\nabla\ra S_t\phi\|_{L^2_{t\in [0,T]}L^6(\mathbb{R}^{3})} \leq C(T)\|\la\nabla\ra\phi\|_{L^2}
    \leq C(T) \sqrt{ 1+2E[\phi] }\,,
\eeqn
with $T>0$ as in \eqref{T_1_def}, below;
see for instance \cite{KM} or \cite{CW} for details.
Moreover,
\eqn 
    \|\la\nabla\ra S_t\phi\|_{L^\infty_{t\in [0,T]}L^2(\mathbb{R}^{3})}
    \leq
    \sup_{t\in[0,T]} \sqrt{1+2E[S_t\phi] }
    =
    \sqrt{  1+2E[\phi] }
\eeqn
We then obtain that
\begin{align}
    \eqref{nls_estimate_1-1}&\leq C\sum_{k=1}^\infty(2\xi)^k\int\,d\mu(\phi)
    \Big(\frac12+E[\phi]\Big)^{k+1} \nonumber\\
    %&\le\sup_{t\in [0,T]}2\sum_{k=1}^\infty(2\xi)^k\la\mathcal{K}^{(k+1)}\ra_{\tilde{\Gamma}(t)}\nonumber\\
    &=C\xi^{-1}\sum_{k=2}^\infty(2\xi)^k\la\mathcal{K}^{(k)}\ra_{\tilde{\Gamma}(0)}\nonumber\\
    &\le C\xi^{-1}\|\Gamma_0\|_{\mathfrak{H}^1_{\xi'}}\nonumber\\
    &<\infty.
\end{align}
Finally, we pick $T_1(\xi)>0$ sufficiently small that \eqref{nls_estimate_2} above holds for
\begin{align}
     \label{T_1_def}
    0<T<T_1(\xi)\,,
\end{align} 
noting that the constant $C(T)$ in \eqref{nls_estimate_2}  depends on 
$\|\phi\|_{H^1}<(\xi')^{-1/2}\|\Gamma_0\|_{\frH_{\xi'}^1}$ and thus on $\xi$, 
where $\xi$ and $\xi'$ are related as in
\eqref{xi_relationship}.   

Thus, we have shown that $\tilde{\Gamma}\in\mathcal{W}^1_\xi([0,T])$.
By uniqueness of solutions to the GP hierarchy in $\mathcal{W}^1_\xi([0,T])$, we 
conclude that $\Gamma=\tilde{\Gamma}$.  

In particular, we note that ${\Gamma}(t)$ is positive semidefinite for $t\in [0,T]$.
\end{proof}

Now that we have positive semidefinitenss of solutions to the GP hierarchy, we are able to to global well posedness of solutions to the GP hierarchy, using an induction argument as in \cite{CPHE} below.

\begin{thm}\label{gwp_again}
Suppose that $\Gamma_0=(\gamma_0^{(k)})_{k=1}^\infty\in\mathfrak{H}^1_{\xi'}$ is as in Theorem \ref{pos_again}.  
Then, for $0<\xi'<1$ and $\xi_1$ satisfying \eqref{xi_relationship}, there is a unique global solution 
$\Gamma\in\mathcal{W}^1_{\xi_1}(\mathbb{R})$ to the cubic defocusing GP hierarchy \eqref{gp} in $\mathbb{R}^3$ 
with initial data $\Gamma_0$.  Moreover, $\Gamma(t)$ is positive semidefinite and satisfies
\begin{align}
    \|\Gamma(t)\|_{\mathcal{H}_{\xi_1}^1}\le \|\Gamma_0\|_{\mathfrak{H}_{\xi'}^1}
\end{align}
for all $t\in\mathbb{R}$.
\end{thm}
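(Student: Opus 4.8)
The plan is to globalize the local solution of Theorem~\ref{pos_again} by iterating over consecutive time intervals of a \emph{fixed} length, using the conserved higher order energy functionals of \cite{CPHE} to keep the size of the data --- and hence the step length --- from deteriorating along the iteration; this is the induction argument of \cite{CPHE}. Equivalently, one checks directly that the de Finetti superposition
\begin{align*}
    \tilde{\gamma}^{(k)}(t) \, := \, \int d\mu(\phi)\,(|S_t\phi\ra\la S_t\phi|)^{\otimes k}\,,\qquad k\in\N\,,
\end{align*}
built from the global $H^1$-flow $S_t$ of the cubic defocusing NLS on $\R^3$, lies in $\cW^1_{\xi_1}(\R)$ and solves \eqref{gp} with initial data $\Gamma_0$; positivity and the stated norm bound then come for free.

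First I would set up the induction on the index $n\ge0$ of the intervals $[nT,(n+1)T]$. Fix a step length $0<T<\min\{T_0(\xi),T_1(\xi)\}$ as permitted by Theorem~\ref{pos_again}, and suppose $\Gamma$ has been constructed on $[0,nT]$ with $\Gamma(nT)=\int d\mu_n(\phi)(|\phi\ra\la\phi|)^{\otimes k}$ for the pushforward measure $\mu_n:=(S_{nT})_*\mu$. Since the cubic defocusing NLS in $H^1(\R^3)$ conserves the $L^2$-mass and the energy $E[\phi]$, the measure $\mu_n$ is again a Borel probability measure supported on the unit sphere (or ball) in $L^2(\R^3)$, and conservation of $E$ together with Lemma~\ref{ae} preserves the bound $\|\phi\|_{H^1}^2\le(\xi')^{-1}\|\Gamma_0\|_{\frH^1_{\xi'}}$ for $\mu_n$-a.e.\ $\phi$; in particular $\Gamma(nT)\in\frH^1_{\xi'}$. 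Thus the hypotheses of Theorem~\ref{pos_again} hold with initial time $nT$, producing a unique positive semidefinite solution on $[nT,(n+1)T]$ which, by conservation of the higher order energy functionals $\la\mathcal{K}^{(m)}\ra_{\Gamma(t)}=\la\mathcal{K}^{(m)}\ra_{\Gamma(nT)}$ and the bound $\|\Gamma(t)\|_{\cH^1_{\xi_1}}\le\|\Gamma(t)\|_{\frH^1_{\xi_1}}\le\sum_{m\in\N}(2\xi_1)^m\la\mathcal{K}^{(m)}\ra_{\Gamma(t)}$ (Theorem~6.2 of \cite{CPHE}), satisfies $\|\Gamma(t)\|_{\cH^1_{\xi_1}}\le\|\Gamma_0\|_{\frH^1_{\xi'}}$ throughout. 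Crucially, $T$ and all the constants entering here depend only on $\xi$, $\xi'$, and the $\mu$-a.e.\ bound on $\|\phi\|_{H^1}$, none of which changes with $n$; so the step never shrinks, and patching the pieces gives $\Gamma\in C(\R,\cH^1_{\xi_1})$ with the asserted bound for every $t\in\R$. By the local uniqueness below, $\Gamma=\tilde{\Gamma}$ (and the argument on $(-\infty,0]$ is identical).

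Next I would verify the remaining membership condition $\opB^+\Gamma,\opB^-\Gamma\in L^2_{loc}(\R,\cH^1_{\xi_1})$. On each interval $[nT,(n+1)T]$ this is exactly the computation already carried out in the proof of Theorem~\ref{pos_again}: invoking the Strichartz bound \eqref{nls_estimate_2}, whose constant depends only on $T$ and on the $\mu$-a.e.\ bound for $\|\phi\|_{H^1}$, one obtains $\|\opB\Gamma\|_{L^2_{t\in[nT,(n+1)T]}\cH^1_{\xi_1}}\le C\xi_1^{-1}\|\Gamma_0\|_{\frH^1_{\xi'}}$ with $C$ independent of $n$, so any compact time interval is covered by finitely many such estimates. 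Uniqueness in $\cW^1_{\xi_1}(\R)$ reduces to uniqueness on bounded subintervals, which is the local theory of \cite{CP}. Positive semidefiniteness of $\Gamma(t)=\tilde{\Gamma}(t)$ holds because each $(|S_t\phi\ra\la S_t\phi|)^{\otimes k}$ is positive semidefinite and $\mu\ge0$, and the inequality $\|\Gamma(t)\|_{\cH^1_{\xi_1}}\le\|\Gamma_0\|_{\frH^1_{\xi'}}$ is precisely the a priori bound just obtained.

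The main obstacle to anticipate is the standard difficulty in globalizing a local result: a priori the existence time shrinks with the size of the data, so a naive iteration could push the step length to zero within a bounded time horizon. The resolution is that the higher order energy functionals of \cite{CPHE} --- available for the (in general non-admissible) solution $\Gamma(t)$ only because the quantum de Finetti theorem (Theorem~\ref{qdf}) represents it as a superposition of factorized states, and Lemma~\ref{ae} upgrades the resulting moment bound to an $H^1$-bound a.e.\ --- provide a time-uniform control of $\|\Gamma(t)\|_{\frH^1_{\xi_1}}$ attached to a fixed value of $\xi_1$. This is exactly what freezes the step length and all the constants over the iteration, yielding the global solution.
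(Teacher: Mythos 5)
Your proposal is correct and follows essentially the same route as the paper: iterate Theorem~\ref{pos_again} over intervals $[nT,(n+1)T]$ of fixed length, using conservation of the higher-order energy functionals (Theorem~6.2 of \cite{CPHE}) to control $\|\Gamma(t)\|_{\frH^1_\xi}\le\|\Gamma_0\|_{\frH^1_{\xi'}}$ uniformly, so that the step length never degrades. The one place where you are more explicit than the paper is in observing that $\Gamma(nT)$ retains the required de Finetti form (with the pushforward measure $\mu_n=(S_{nT})_*\mu$, still a probability measure on the unit sphere/ball, still with the a.e.\ $H^1$ bound by mass and energy conservation of the NLS flow) --- a point the paper uses implicitly when re-invoking Theorem~\ref{pos_again} on each subsequent interval, since the identification $\Gamma=\tilde\Gamma$ on the preceding interval supplies it.
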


\begin{proof} 
Let $I_j$ be the time interval $[jT,(j+1)T]$, where $0<T<\min\{T_0(\xi_1),T_1(\xi_1)\}$ (see \eqref{T_0_def} and \eqref{T_1_def}) and $\xi,\xi_1$ satisfy \eqref{xi_relationship}.  By \cite{CP} and Proposition \ref{continuous}, we have that there is a unique solution $\Gamma$ to the GP hierarchy in $\mathcal{W}^1_\xi(I_0)$.  Moreover, by Theorem \ref{pos_again}, $\Gamma$ is positive semidefinite on $I_0$.  It follows as in the proof of Theorem 7.2 in \cite{CPHE} that the higher order energy functionals $\langle K^{(m)}\rangle_{\Gamma(t)}$, which are defined in equation \eqref{HE_def}, are conserved on $I_0$.  Thus, as in inequality (7.18) in \cite{CPHE}, we have that on $I_0$,
\begin{align}
\|\Gamma(t)\|_{\mathcal{H}^1_\xi}
&\le\|\Gamma(t)\|_{\mathfrak{H}^1_\xi}\label{energy1}\\
&\le\sum_{m\in\mathbb{N}}(2\xi)^m\langle K^{(m)}\rangle_{\Gamma(t)}\label{energy2}\\
&\le\sum_{m\in\mathbb{N}}(2\xi)^m\langle K^{(m)}\rangle_{\Gamma_0}\label{energy3}\\
&\le\|\Gamma_0\|_{\mathfrak{H}^1_{\xi'}.}\label{energy4}
\end{align}
Note that positive semidefiniteness of $\Gamma$ is needed to pass from \eqref{energy1} to \eqref{energy2} because the definition of $\|\Gamma(t)\|_{\mathfrak{H}^1_\xi}$ involves taking absolute values, but the definition of $\langle K^{(m)}\rangle_{\Gamma(t)}$ does not.  

Therefore $\Gamma(T)\in\mathfrak{H}^1_\xi$, and so by \cite{CP} and Proposition \ref{continuous}, there is a unique solution $\Gamma\in \mathcal{W}^1_{\xi_1}(I_1)$ of the GP hierarchy with initial data $\Gamma(T)$.  By another application of Theorem \ref{pos_again} and energy conservation \eqref{energy1} $\sim$ \eqref{energy4}, $\Gamma$ is positive semidefinite on $I_1$ and $\Gamma(2T)\in\mathfrak{H}^1_\xi$.  Thus, we can repeat the argument and find that we have a unique solution $\Gamma\in\mathcal{W}^1_{\xi_1}(\mathbb{R})$.  Moreover,
\begin{align*}
\|\Gamma(t)\|_{\mathcal{H}_{\xi_1}^1}\le
\|\Gamma(t)\|_{\mathcal{H}_\xi^1}\le \|\Gamma_0\|_{\mathfrak{H}_{\xi'}^1}
\end{align*}
for all $t\in\mathbb{R}$.
\end{proof}

$\;$\\

%\newpage

\section{Global derivation of the GP hierarchy}\label{global derivation}

In this section, we show that the validity of Theorem \ref{derivation} can be extended to arbitrarily large values of $T$, provided that $\Gamma_0\in\mathfrak{H}^1_{\xi'}$ has the form \eqref{eq-gamma0-deF-1}, and that $\xi$ is sufficiently small.
This is obtained from combining Theorem \ref{derivation} and Theorem \ref{gwp}
in a recursive manner.

We begin by observing that, in the statement of Theorem \ref{derivation}, instead of assuming that
\begin{align*}
\Gamma_0:=\lim_{N\rightarrow\infty} \Gamma^{\Phi_N}(0)
\end{align*}
holds in $\mathcal{H}^1_{\xi'}$, we may assume that
\begin{align*}
\Gamma_0:=\lim_{N\rightarrow\infty} P_{\le K(N)}\Gamma^{\Phi_N}(0)
\end{align*}
holds.  Indeed, the proof of Theorem \ref{derivation} is unaffected by this replacement.

We also note that initial condition $\la \Phi_N(0),H_N^k \Phi_N(0)\ra$ implies that $\Gamma^{\Phi_N}(t)\in\mathcal{H}^1_{\xi'}$ for any $t\in\mathbb{R}$, provided that $\xi'<(4(C+1))^{-1}$.  This follows from \eqref{bound3b}.  In fact, given $\xi'$, we have a bound $\widetilde C$, uniform in $N$ and $t$, such that
\begin{align}
\|\Gamma^{\Phi_N}(t)^{(k)}\|_{H^1}<\widetilde C^k.\label{ubound}
\end{align}

We also note that, by Theorem \ref{gwp}, the solution to the GP hierarchy $\Gamma(t)\in\mathcal{H}^1_{\xi_1}$ for all $t\in\mathbb{R}$, provided that $\xi_1$ sufficiently small.

Thus, under the assumptions of Theorem \ref{derivation}, at time $T$, we have
\begin{align}
\begin{cases}
&\Gamma^{\Phi_N}(T)\in\mathcal{H}^1_{\xi_1}\text{ and}\\
&\Gamma(T)=\lim_{N\rightarrow\infty} P_{\le K(N)}\Gamma^{\Phi_N}(T)\text{ in }\mathcal{H}^1_{\xi_1},
\end{cases}\label{T}
\end{align}
provided that $\xi_1$ is sufficiently small (note that we also require $\xi_1<(4(C+1))^{-1}$).
By another application of Theorem \ref{derivation}, we have that at time $2T$,
\begin{align}
&\Gamma^{\Phi_N}(2T)\in\mathcal{H}^1_{\xi_1}\text{ and}\nonumber\\
&\Gamma(2T)=\lim_{N\rightarrow\infty} P_{\le K(N)}\Gamma^{\Phi_N}(2T)\text{ in }\mathcal{H}^1_{\xi_2},\label{xi2 bound}
\end{align}
provided that $\xi_2<\xi_1$ is sufficiently small.  \eqref{xi2 bound} says that
\begin{align*}
\sum_{k=1}^\infty \xi_2^k \|\Gamma(2T)^{(k)}-P_{\le K(N)}\Gamma^{\Phi_N}(2T)^{(k)}\|_{H^1}\rightarrow 0\text{ as }N\rightarrow\infty.
\end{align*}
However, by \eqref{ubound} and the dominated convergence theorem for sequences, we actually have the stronger statement 
\begin{align*}
\sum_{k=1}^\infty \xi_1^k \|\Gamma(2T)^{(k)}-P_{\le K(N)}\Gamma^{\Phi_N}(2T)^{(k)}\|_{H^1}\rightarrow 0\text{ as }N\rightarrow\infty,
\end{align*}
where we have $\xi_1$ instead of $\xi_2$.  Thus, at time $2T$ we actually have
\begin{align}
\begin{cases}
&\Gamma^{\Phi_N}(2T)\in\mathcal{H}^1_{\xi_1}\text{ and}\\
&\Gamma(2T)=\lim_{N\rightarrow\infty} P_{\le K(N)}\Gamma^{\Phi_N}(2T)\text{ in }\mathcal{H}^1_{\xi_1}.
\end{cases}\label{2T}
\end{align}
Note that \eqref{2T} is the same as \eqref{T}, but with $T$ replaced by $2T$.  Thus, we may iterate the argument again, and conclude that Theorem \ref{derivation} holds for $T$ arbitrarily large,  provided that $\Gamma_0\in\mathfrak{H}^1_{\xi'}$ has the form \eqref{eq-gamma0-deF-1},  and that $\xi$ is sufficiently small.

$\;$\\

\appendix

\section{Strong vs weak-* convergence}
\begin{prop}\label {strong weak}
Suppose that $(\gamma_N^{(k)})_{N=1}^\infty$ is a sequence of operators on $L^2(\mathbb{R}^{k})$ such that $\gamma_N^{(k)}\rightarrow\gamma_\infty^{(k)}$ strongly in Hilbert Schmidt norm.  Suppose also that $\gamma_N^{(k)}$ and $\gamma^{(k)}_\infty$ are trace class operators such ${\rm Tr}
|\gamma_N^{(k)}|\le 1$ for all $N$.  Then $\gamma_N^{(k)}\rightarrow\gamma_\infty^{(k)}$ in the weak-* topology induced by the trace norm.
\begin{proof}
We follow the usual construction of a metric for the weak-* topology induced by the trace norm, as presented in \cite{esy2}, for example.  Let $\mathcal{K}_k$ be the space of compact operators on $L^2(\mathbb{R}^{k})$ equipped with the operator norm topology.  Let $\mathcal{L}^1_k$ be the space of trace class operators on $L^2(\mathbb{R}^{2k})$.  By \cite{rs}, we have that $\mathcal{L}^1_k=\mathcal{K}_k^*$.  Since $\mathcal{K}_k$ is separable, there exists a sequence $\{J_i^{(k)}\}_{i=1}^\infty\in\mathcal{K}_k$ of Hilbert Schmidt operators, dense in the unit ball of $\mathcal{K}_k$.  Note that Hilbert Schmidt operators are dense in the space of compact operators, because, by \cite{rs}, every compact operator on a Hilbert space is of the form $\lim_{N\rightarrow\infty}\sum_{n=1}^N \lambda_n\la\psi_n,\cdot\,\ra\phi_n$, with $\{\psi_n\}_{n=1}^\infty$ and $\{\phi_n\}_{n=1}^\infty$ orthonormal sets, and $\{\lambda_n\}_{n=1}^\infty$ positive real numbers such that $\lambda_n\rightarrow 0$.  On $\mathcal{L}^1_k$, we define the metric $\eta_k$ by
\begin{align*}
\eta_k(\gamma^{(k)},\widetilde\gamma^{(k)}):=\sum_{i=1}^\infty 2^{-i}\bigg|{\rm Tr} J_i^{(k)}\big(\gamma^{(k)}-\widetilde\gamma^{(k)}\big)\bigg|.
\end{align*}
By \cite{rudin}, the topology induced by the metric $\eta_k$ is equivalent to the weak-* toplology on $\mathcal{L}^1_k$.

Now, since $\{J_i^{(k)}\}_{i=1}^\infty\in\mathcal{K}_k$ are Hilbert Schmidt, we have
\begin{align}
    {\rm Tr}|J_i^{(k)}(\gamma_N^{(k)}-\gamma_\infty^{(k)})|&\le ({\rm Tr}(|J_i^{(k)}|^2))^{1/2}
    ({\rm Tr}(|\gamma_N^{(k)}-\gamma_\infty^{(k)}|^2))^{1/2}
    \\&\rightarrow 0\text{ as }N\rightarrow\infty.
\end{align}
Moreover,
\begin{align}
    {\rm Tr}|J_i^{(k)}(\gamma_N^{(k)}-\gamma_\infty^{(k)})|
    &\le \|J_i^{(k)}\|_{L^2\rightarrow L^2} \,\,{\rm Tr}|\gamma_N^{(k)}-\gamma_\infty^{(k)}|\\
    &\le 1+{\rm Tr}|\gamma_\infty^{(k)}|.
\end{align}
Thus, by the dominated convergence theorem for sequences, $\eta_k(\gamma_N^{(k)},\gamma_\infty^{(k)})\rightarrow 0$ as $N\rightarrow\infty$, and so $\gamma_N^{(k)}\rightarrow\gamma_\infty^{(k)}$ in the weak-* topology on $\mathcal{L}^1_k$.
\end{proof}
\end{prop}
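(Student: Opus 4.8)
The plan is to reduce weak-* convergence to a metric statement and then invoke the dominated convergence theorem for series. First I would recall the Schatten duality $\mathcal{L}^1_k = \mathcal{K}_k^*$, where $\mathcal{K}_k$ denotes the space of compact operators on $L^2(\R^{k})$ equipped with the operator-norm topology. Since $\mathcal{K}_k$ is separable, one may fix a countable family $\{J_i^{(k)}\}_{i\in\N}$ that is dense in the unit ball of $\mathcal{K}_k$, and define on $\mathcal{L}^1_k$ the metric
\begin{align*}
\eta_k(\gamma,\widetilde\gamma) := \sum_{i=1}^\infty 2^{-i}\,\big| \tr\big(J_i^{(k)}(\gamma - \widetilde\gamma)\big) \big|.
\end{align*}
On norm-bounded subsets of $\mathcal{L}^1_k$, the topology induced by $\eta_k$ coincides with the weak-* topology. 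The hypothesis $\tr|\gamma_N^{(k)}|\le 1$ places the whole sequence in such a bounded set, and $\gamma_\infty^{(k)}$ is trace class by assumption, so it suffices to prove $\eta_k(\gamma_N^{(k)},\gamma_\infty^{(k)})\to 0$ as $N\to\infty$.

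Second, I would exploit the freedom in the choice of the $J_i^{(k)}$. Every compact operator on a Hilbert space is a norm limit of finite-rank operators, hence of Hilbert--Schmidt operators; thus the Hilbert--Schmidt operators are dense in $\mathcal{K}_k$ and we may take each $J_i^{(k)}$ to be Hilbert--Schmidt. With this choice, for each fixed $i$ the Cauchy--Schwarz inequality for the Hilbert--Schmidt pairing gives
\begin{align*}
\big| \tr\big(J_i^{(k)}(\gamma_N^{(k)} - \gamma_\infty^{(k)})\big) \big| \, \le \, \|J_i^{(k)}\|_{HS}\,\|\gamma_N^{(k)} - \gamma_\infty^{(k)}\|_{HS} \, \longrightarrow \, 0
\end{align*}
as $N\to\infty$, directly from the hypothesis of strong Hilbert--Schmidt convergence.

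Third, to exchange the limit $N\to\infty$ with the infinite sum in $\eta_k$, I would produce a summable majorant independent of $N$: estimating the trace by the operator norm times the trace norm,
\begin{align*}
\big| \tr\big(J_i^{(k)}(\gamma_N^{(k)} - \gamma_\infty^{(k)})\big) \big| \, \le \, \|J_i^{(k)}\|_{L^2\to L^2}\,\tr|\gamma_N^{(k)} - \gamma_\infty^{(k)}| \, \le \, 1 + \tr|\gamma_\infty^{(k)}|,
\end{align*}
so the $i$-th term of $\eta_k$ is bounded by $2^{-i}\big(1 + \tr|\gamma_\infty^{(k)}|\big)$, which is summable. The dominated convergence theorem for series then yields $\eta_k(\gamma_N^{(k)},\gamma_\infty^{(k)})\to 0$, i.e. $\gamma_N^{(k)}\rightharpoonup^*\gamma_\infty^{(k)}$ in $\mathcal{L}^1_k$, as desired.

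The step requiring the most care is the first: verifying that $\eta_k$ genuinely metrizes the weak-* topology on the bounded set in question, and recognizing that the a priori bound $\tr|\gamma_N^{(k)}|\le 1$, together with trace-class-ness of the limit, is precisely what confines everything to a weak-* sequentially compact ball on which this metrization is legitimate. The remaining ingredients are the routine Hölder-type trace inequalities $|\tr(AB)|\le \|A\|_{HS}\|B\|_{HS}$ and $|\tr(AB)|\le \|A\|_{L^2\to L^2}\,\tr|B|$, and the density of Hilbert--Schmidt operators among the compact operators.
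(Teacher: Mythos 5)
Your proposal is correct and follows essentially the same route as the paper: duality $\mathcal{L}^1_k = \mathcal{K}_k^*$, a countable Hilbert--Schmidt-dense family in the unit ball of $\mathcal{K}_k$ defining the metric $\eta_k$, Cauchy--Schwarz for the Hilbert--Schmidt pairing to get pointwise (in $i$) convergence, a uniform bound $2^{-i}(1 + \tr|\gamma_\infty^{(k)}|)$, and dominated convergence for series. If anything your statement is slightly more careful than the paper's, since you correctly restrict the metrizability claim to norm-bounded subsets of $\mathcal{L}^1_k$ (the weak-* topology on the whole dual is not metrizable), whereas the paper asserts equivalence on all of $\mathcal{L}^1_k$; this is immaterial here because the hypothesis $\tr|\gamma_N^{(k)}|\le 1$ keeps the sequence in a bounded set, as you point out.
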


\newpage

\section{Conservation of admissibility for the GP hierarchy}
In this part of the appendix, we prove that the GP hierarchy conserves admisibility.  This result has been used in many papers, but we have not found an explicit proof.  For the convenience of the reader, we present it here.

\begin{prop}\label{adm_prop}
Suppose that $\Gamma_0=(\gamma_0^{(k)})_{k=1}^\infty\in\mathcal{H}^1_{\xi'}$ is admissible and satisfies $\rm{Tr}\,\gamma_0^{(k)}=1$ for all $k\in\mathbb{N}$.  Then, for $0<\xi'<1$ and $\xi$ satisfying \eqref{xi_relationship}, the unique solution $\Gamma\in\mathcal{W}^1_\xi(I)$ to the GP hierarchy obtained in \cite{CP} is admissible for all $t\in I$, provided that $A:=\{A^{(k)}\}_{k=1}^\infty\in\mathcal{W}^1_\xi(I)$, where
\begin{align}
A^{(k)}(t,\underline{x}_k;\underline{x}_k'):=
-\gamma^{(k)}(t,\underline{x}_k;\underline{x}_k')+
\int \gamma^{(k+1)}(t,\underline{x}_k,x_{k+1};\underline{x}_k',x_{k+1})\,dx_{k+1}.\label{a}
\end{align}
\end{prop}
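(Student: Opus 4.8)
The strategy is to show that the ``admissibility defect'' $A=(A^{(k)})_{k\in\N}$ defined in \eqref{a}, namely $A^{(k)}(t)=\tr_{k+1}\gamma^{(k+1)}(t)-\gamma^{(k)}(t)$, is itself a solution of the GP hierarchy \eqref{gp}, and then to appeal to the uniqueness of solutions in $\mathcal{W}^1_\xi(I)$ proved by Chen and Pavlovi\'c in \cite{CP}. Since $\Gamma_0$ is admissible, $A^{(k)}(0)=\tr_{k+1}\gamma_0^{(k+1)}-\gamma_0^{(k)}=0$ for every $k$, so once we know that $A$ solves the hierarchy and lies in $\mathcal{W}^1_\xi(I)$ (the latter being part of the hypothesis), uniqueness forces $A\equiv0$ on $I$, which is precisely the assertion that $\Gamma(t)$ is admissible for all $t\in I$.

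To verify that $A$ solves the hierarchy, I would start from the Duhamel representation of the solution $\Gamma$,
\begin{align*}
\gamma^{(k)}(t)=U^{(k)}(t)\gamma_0^{(k)}+i\int_0^t U^{(k)}(t-s)\,B_{k+1}\gamma^{(k+1)}(s)\,ds\,,
\end{align*}
and apply the partial trace $\tr_{k+1}$ to the identity satisfied by $\gamma^{(k+1)}$. Two commutation facts are needed. First, $\tr_{k+1}$ intertwines the free evolutions, $\tr_{k+1}\bigl(U^{(k+1)}(t)\gamma^{(k+1)}\bigr)=U^{(k)}(t)\bigl(\tr_{k+1}\gamma^{(k+1)}\bigr)$: this follows by writing $U^{(k+1)}(t)$ as conjugation by $e^{it\Delta_{\ux_{k+1}}}$, splitting $\Delta_{\ux_{k+1}}=\Delta_{\ux_k}+\Delta_{x_{k+1}}$, pulling the first-$k$-variables part outside the partial trace, and using cyclicity of $\tr_{k+1}$ (valid because $e^{it\Delta_{x_{k+1}}}$ is bounded) to cancel the conjugation in the $(k+1)$-st variable. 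Second, and this is the combinatorial core,
\begin{align*}
\tr_{k+1}\bigl(B_{k+2}\gamma^{(k+2)}\bigr)=B_{k+1}\bigl(\tr_{k+2}\gamma^{(k+2)}\bigr)\,.
\end{align*}
To see this, write $B_{k+2}=\sum_{j=1}^{k+1}\bigl(B^+_{j;k+2}-B^-_{j;k+2}\bigr)$. For $j\le k$, using the permutation symmetry \eqref{symmetry} of $\gamma^{(k+2)}$ to interchange the $(k+1)$-st and $(k+2)$-nd variables, one checks directly on kernels that $\tr_{k+1}\bigl(B^{\pm}_{j;k+2}\gamma^{(k+2)}\bigr)=B^{\pm}_{j;k+1}\bigl(\tr_{k+2}\gamma^{(k+2)}\bigr)$. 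For $j=k+1$, the two ``diagonal'' terms $B^{+}_{k+1;k+2}\gamma^{(k+2)}$ and $B^{-}_{k+1;k+2}\gamma^{(k+2)}$ both have partial trace equal to $\int \gamma^{(k+2)}(\ux_k,x_{k+1},x_{k+1};\ux_k',x_{k+1},x_{k+1})\,dx_{k+1}$, hence they cancel in the difference $B^{+}_{k+2}-B^{-}_{k+2}$. Summing the surviving $j\le k$ contributions gives the identity, since $B_{k+1}=\sum_{j=1}^k B_{j;k+1}$.

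Combining the two facts, applying $\tr_{k+1}$ to the Duhamel formula for $\gamma^{(k+1)}$ yields
\begin{align*}
\tr_{k+1}\gamma^{(k+1)}(t)=U^{(k)}(t)\,\tr_{k+1}\gamma_0^{(k+1)}+i\int_0^t U^{(k)}(t-s)\,B_{k+1}\bigl(\tr_{k+2}\gamma^{(k+2)}(s)\bigr)\,ds\,,
\end{align*}
and subtracting the Duhamel formula for $\gamma^{(k)}$ gives, for every $k\in\N$,
\begin{align*}
A^{(k)}(t)=U^{(k)}(t)\,A^{(k)}(0)+i\int_0^t U^{(k)}(t-s)\,B_{k+1}A^{(k+1)}(s)\,ds\,.
\end{align*}
Thus $A$ satisfies the GP hierarchy in integral form with $A(0)=0$; since $A\in\mathcal{W}^1_\xi(I)$ by hypothesis and the zero sequence is trivially a solution with the same initial data, the uniqueness theorem of \cite{CP} in $\mathcal{W}^1_\xi(I)$ (for $0<\xi'<1$ and $\xi$ as in \eqref{xi_relationship}) forces $A\equiv0$ on $I$, i.e., $\Gamma(t)$ is admissible for all $t\in I$.

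I expect the main obstacle to be the rigorous justification of the partial-trace manipulations: elements of $\mathcal{H}^1_\xi$ are of Hilbert--Schmidt rather than trace-class type, so a priori one must argue that $\tr_{k+1}\gamma^{(k+1)}$ is well defined and that $\tr_{k+1}$ may be interchanged with $U^{(k+1)}(t)$, with the time integral, and with $B_{k+2}$. This is precisely what the hypothesis $A\in\mathcal{W}^1_\xi(I)$ is designed to supply on the level of the limit object; where it is more convenient, one may first carry out these interchanges for the regularized approximants used in the Chen--Pavlovi\'c construction of $\Gamma$ in \cite{CP}, whose kernels carry enough extra regularity to make the relevant partial traces literally well defined, and then pass to the limit in $\mathcal{H}^1_\xi$.
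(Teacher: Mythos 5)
Your proposal is correct and follows the same core strategy as the paper's own proof: show that the admissibility defect $A$ itself solves the GP hierarchy with initial data $A(0)=0$, then invoke the $\mathcal{W}^1_\xi(I)$ uniqueness result from \cite{CP} to conclude $A\equiv 0$. The implementation differs slightly in form but not in substance. You work in the Duhamel/integral representation for general $k$ and isolate two clean intertwining identities, $\tr_{k+1}\circ U^{(k+1)}=U^{(k)}\circ\tr_{k+1}$ and $\tr_{k+1}\bigl(B_{k+2}\gamma^{(k+2)}\bigr)=B_{k+1}\bigl(\tr_{k+2}\gamma^{(k+2)}\bigr)$, the latter obtained by cancelling the two $j=k+1$ diagonal terms and using permutation symmetry to handle $j\le k$. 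The paper instead differentiates $A^{(1)}$ in time using the PDE form of the hierarchy, cancels the extra Laplacian in the trace via the elementary identity $\int\bigl((\Delta_{x_1}-\Delta_{x_2})f\bigr)(x,x)\,dx=0$ (which is the generator-level version of your intertwining with $U$), uses the symmetry of $\gamma^{(3)}$ exactly where you use it, and then asserts the general $k$ case is similar. Your formulation is somewhat more systematic, and your closing remark about justifying the partial-trace manipulations by density of nicer approximants parallels the paper's appeal to density of $\mathcal{S}$ in $H^1$ when applying the Laplacian vanishing lemma.
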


\begin{proof} 
We first note that for $f\in \mathcal{S}(\mathbb{R}^n\times\mathbb{R}^n)$, we have
\begin{equation}\label{lma}
\int((\Delta_{x_1}-\Delta_{x_2})f)(x,x)\,dx=0.
\end{equation} 
Indeed, this follows from
\begin{align*}
&\int((\Delta_{x_1}-\Delta_{x_2})f)(x,x)\,dx\\
&=\int\int\delta(x_1-x_2)(\Delta_{x_1}-\Delta_{x_2})f(x_1,x_2)\,dx_1\,dx_2\\
&=\int\int\int\int\delta(x_1-x_2)e^{iu_1x_1+iu_2x_2}((u_2)^2\hat{f}(u_1,u_2)-u_1^2\hat{f}(u_1,u_2))\,du_1\,du_2\,dx_1\,dx_2\\
&=\int\int\int e^{ix_1(u_1+u_2)}((u_2)^2\hat{f}(u_1,u_2)-u_1^2\hat{f}(u_1,u_2))\,du_1\,du_2\,dx_1\\
&=\int\int \delta(u_1+u_2)((u_2)^2\hat{f}(u_1,u_2)-u_1^2\hat{f}(u_1,u_2))\,du_1\,du_2\\
&=\int (u_1^2-u_1^2)\hat{f}(u_1,-u_1)\,du_1\\
&=0,
\end{align*}
which implies \eqref{lma}.

Next, we note that the definition of admissibility implies that $\gamma^{(k)}$ is admissible at time $t$ if and only if
\begin{align*}
A^{(k)}(t,\underline{x}_k,\underline{x}_k')=0.
\end{align*}
Since $\Gamma$ satisfies the GP hierarchy, we have that
\begin{align}
i\partial_t A^{(1)}&(x_1;x_1')
\nonumber\\
&=(\Delta_{x_1}-\Delta_{x_1'})\gamma^{(1)}(x_1;x_1')\label{a11}\\
&\hspace{1cm} -\kappa_0\bigg[\gamma^{(2)}(x_1,x_1;x_1',x_1)-\gamma^{(2)}(x_1,x_1';x_1',x_1')\bigg]\label{a21}\\
&\hspace{1cm} +\int\bigg[ \left((-\Delta_{\underline{x}_2}+\Delta_{\underline{x}_2'})\gamma^{(2)}\right)(x_1,x_2;x_1',x_2)\label{L1}\\
&\hspace{2cm} +\kappa_0\gamma^{(3)}(x_1,x_2,x_1;x_1',x_2,x_1)\nonumber\\
&\hspace{2cm} -\kappa_0\gamma^{(3)}(x_1,x_2,x_1';x_1',x_2,x_1')\nonumber\\
&\hspace{2cm} +\kappa_0\gamma^{(3)}(x_1,x_2,x_2;x_1',x_2,x_2)\nonumber\\
&\hspace{2cm} -\kappa_0\gamma^{(3)}(x_1,x_2,x_2;x_1',x_2,x_2)\bigg]\,dx_2\nonumber\\
&=\int(\Delta_{x_1}-\Delta_{x_1'})\gamma^{(2)}(x_1,x_2;x_1',x_2)\,dx_2\label{a12}\\
&\hspace{1cm}-(\Delta_{x_1}-\Delta_{x_1'})A^{(1)}(x_1;x_1')\nonumber\\
&\hspace{1cm} -\kappa_0\int\bigg[\gamma^{(3)}(x_1,x_1,x_2;x_1',x_1,x_2)-\gamma^{(3)}(x_1,x_1',x_2;x_1',x_1',x_2)\bigg]\,dx_2\label{a22}\\
&\hspace{1cm}+\kappa_0A^{(2)}(x_1,x_1;x_1',x_1)-\kappa_0A^{(2)}(x_1,x_1';x_1',x_1')\nonumber\\
&\hspace{1cm} +\int\bigg[ \left((-\Delta_{x_1}+\Delta_{x_1'})\gamma^{(2)}\right)(x_1,x_2;x_1',x_2)\label{L2}\\
&\hspace{2cm} +\kappa_0\gamma^{(3)}(x_1,x_2,x_1;x_1',x_2,x_1)\nonumber\\
&\hspace{2cm} -\kappa_0\gamma^{(3)}(x_1,x_2,x_1';x_1',x_2,x_1')\bigg]\,dx_2\nonumber\\
&=-(\Delta_{x_1}-\Delta_{x_1'})A^{(1)}(x_1;x_1'),\label{sym2}\\%
&\hspace{1cm}+\kappa_0A^{(2)}(x_1,x_1;x_1',x_1)-\kappa_0A^{(2)}(x_1,x_1';x_1',x_1')\nonumber
\end{align}
where \eqref{a} was used to pass from \eqref{a11} to \eqref{a12} and from \eqref{a21} to \eqref {a22}. 
Moreover, \eqref{lma} and density of $\mathcal{S}$ in $H^1$ was used to pass from \eqref{L1} to \eqref{L2}.  Symmetry of $\gamma^{(k)}$ was used to pass to \eqref{sym2}.

Observe that \eqref{sym2} is precisely the right hand side of the first equation in the GP hierarchy.  Thus $A^{(1)}$, and similarly $A^{(k)}$ for $k>1$, satisfies the GP hierarchy.  $A(0)=0$, so by uniquenss of solutions to the GP hierarchy \cite{CP}, $A=0$.
\end{proof}

$\;$\\

\section{Continuity of solutions to the GP hierarchy}
In \cite{CP}, it is shown that there is a unique soution $\Gamma$ to the GP hierarchy \eqref{gp} in $\{\Gamma\in L^\infty_{t\in [0,T]}\mathcal{H}^\alpha_\xi\,|\, B^+\Gamma,B^-\Gamma\in L^2_{t\in [0,T]}\mathcal{H}^\alpha_\xi\}$.  In this part of the appendix, we show that this solution $\Gamma$ is an element of $C([0,T],\mathcal{H}^1_\xi)$.
\begin{lemma}\label{free1}
If $\gamma^{(k)}\in L^2(\mathbb{R}^{dk}\times\mathbb{R}^{dk})$, then
\begin{align*}
\lim_{t\rightarrow 0}\|\left(U^{(k)}(t)-U^{(k)}(0)\right)\gamma^{(k)}\|_{L^2(\mathbb{R}^{dk}\times\mathbb{R}^{dk})}=0.
\end{align*}
\end{lemma}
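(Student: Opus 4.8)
The plan is to reduce this to the dominated convergence theorem after passing to the Fourier side. Note first that $U^{(k)}(0)$ is the identity operator, so the assertion is precisely strong continuity at $t=0$ of the one-parameter family $U^{(k)}(t)$ acting on $L^2(\R^{dk}\times\R^{dk})$.

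Taking the Fourier transform in the variables $(\ux_k,\ux_k')$ and recalling that $e^{it\Delta}$ acts on the Fourier side by multiplication with $e^{-it|\xi|^2}$, one sees (as already used in \eqref{free_Fourier}) that $U^{(k)}(t)$ becomes multiplication by the unimodular symbol
\[
    m_t(\uxi_k,\uxi_k') \,:=\, e^{-it(|\uxi_k|^2-|\uxi_k'|^2)} \,.
\]
Hence, by the Plancherel theorem,
\[
    \big\| \big(U^{(k)}(t)-U^{(k)}(0)\big)\gamma^{(k)} \big\|_{L^2(\R^{dk}\times\R^{dk})}^2
    \,=\, \int \big| m_t(\uxi_k,\uxi_k')-1 \big|^2 \, \big|\widehat{\gamma^{(k)}}(\uxi_k,\uxi_k')\big|^2 \, d\uxi_k\, d\uxi_k' \,.
\]

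The integrand is dominated by $4\,|\widehat{\gamma^{(k)}}|^2$, which is integrable because $\gamma^{(k)}\in L^2$; and for every fixed $(\uxi_k,\uxi_k')$ one has $m_t(\uxi_k,\uxi_k')\to 1$ as $t\to 0$, so the integrand tends to $0$ pointwise a.e. The dominated convergence theorem then yields that the integral tends to $0$ as $t\to 0$, which is the claim.

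There is essentially no real obstacle here; this is the standard proof that the free Schr\"odinger propagator is strongly continuous, carried over to the ``doubled'' setting of density matrices, and the only point to keep track of is the sign convention in the symbol $m_t$. If one prefers to avoid the Fourier transform, the same conclusion follows from the density of $\mathcal{S}(\R^{dk}\times\R^{dk})$ in $L^2$ together with the uniform bound $\|U^{(k)}(t)\|_{L^2\to L^2}=1$: for Schwartz $\gamma^{(k)}$ the convergence is immediate by continuity of $t\mapsto e^{it\Delta}$, and the general case follows by a routine $\varepsilon/3$ approximation argument.
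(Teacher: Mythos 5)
Your proof is correct. The paper's own argument is shorter: it simply observes that $-\Delta_{\ux_k}+\Delta_{\ux_k'}$ is a self-adjoint operator on $L^2(\R^{dk}\times\R^{dk})$ and invokes Stone's theorem (Theorem VIII.7 in Reed--Simon) to conclude that $U^{(k)}(t)$ is a strongly continuous one-parameter unitary group, from which the lemma is immediate. Your argument instead unpacks this by diagonalizing $U^{(k)}(t)$ on the Fourier side as multiplication by the unimodular symbol $e^{-it(|\uxi_k|^2-|\uxi_k'|^2)}$ and applying dominated convergence, which is in effect the proof of Stone's theorem specialized to this multiplication operator; your alternate density-plus-$\varepsilon/3$ remark is the other standard route. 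What the paper's citation buys is brevity; what your computation buys is a self-contained, hands-on verification that also makes explicit the structure (the ``doubled'' free propagator as a Fourier multiplier) that is used elsewhere in the paper, e.g. in \eqref{free_Fourier}. Either is acceptable here.
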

\begin{proof}
We recall that $U^{(k)}(t)=e^{-it(-\Delta_{\underline{x}_k}+\Delta_{\underline{x}_k'})}$.  Since $-\Delta_{\underline{x}_k}+\Delta_{\underline{x}_k'}$ is a self-adjoint operator, we have from theorem VIII.7 in \cite{rs} that $U^{(k)}(t)$ is a strongly continuous one-parameter unitary group, and the lemma follows.
\end{proof}

\begin{lemma}\label{free2}
If $\Gamma\in \mathcal{H}^\alpha_\xi(\mathbb{R}^{dk}\times\mathbb{R}^{dk})$, then
\begin{align*}
\lim_{t\rightarrow 0}\|\left(U(t)-U(0)\right)\Gamma\|_{\mathcal{H}^\alpha_\xi}=0.
\end{align*}
\end{lemma}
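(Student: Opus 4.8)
The plan is to reduce the statement to the componentwise strong continuity already established in Lemma \ref{free1}, and then to interchange the limit $t\to 0$ with the infinite sum defining the $\mathcal{H}^\alpha_\xi$ norm. First I would write
\begin{align*}
\|(U(t)-U(0))\Gamma\|_{\mathcal{H}^\alpha_\xi}=\sum_{k=1}^\infty \xi^k\,\big\|(U^{(k)}(t)-U^{(k)}(0))\gamma^{(k)}\big\|_{H^\alpha}\,.
\end{align*}
On the Fourier transform side, $U^{(k)}(t)$ is multiplication by $e^{it(|\underline{\xi}_k'|^2-|\underline{\xi}_k|^2)}$ while $S^{(k,\alpha)}$ is multiplication by $\prod_{j=1}^k\langle\xi_j\rangle^\alpha\langle\xi_j'\rangle^\alpha$, so the two operators commute and, by the definition of $\|\cdot\|_{H^\alpha}$,
\begin{align*}
\big\|(U^{(k)}(t)-U^{(k)}(0))\gamma^{(k)}\big\|_{H^\alpha}
=\big\|(U^{(k)}(t)-U^{(k)}(0))\,S^{(k,\alpha)}\gamma^{(k)}\big\|_{L^2}\,.
\end{align*}
Since $\Gamma\in\mathcal{H}^\alpha_\xi$, each $S^{(k,\alpha)}\gamma^{(k)}$ belongs to $L^2$, so Lemma \ref{free1} applied to $S^{(k,\alpha)}\gamma^{(k)}$ in place of $\gamma^{(k)}$ gives, for every fixed $k$, that the $k$-th summand tends to $0$ as $t\to 0$.

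To conclude I would pass the limit through the sum by a dominated-convergence argument, exactly as elsewhere in the paper. Unitarity of $U^{(k)}(t)$ on $L^2$ (hence norm one as an operator on $H^\alpha$, again by the commutation just noted) yields the $t$-independent majorant $\xi^k\|(U^{(k)}(t)-U^{(k)}(0))\gamma^{(k)}\|_{H^\alpha}\le 2\xi^k\|\gamma^{(k)}\|_{H^\alpha}$, whose sum is $2\|\Gamma\|_{\mathcal{H}^\alpha_\xi}<\infty$. Testing against an arbitrary sequence $t_n\to0$ and invoking the dominated convergence theorem for series then shows the sum tends to $0$; since $t_n\to 0$ was arbitrary, the continuous limit as $t\to 0$ vanishes as well. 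Equivalently, one may split the sum at a large index $K$, bound the tail $\sum_{k>K}2\xi^k\|\gamma^{(k)}\|_{H^\alpha}$ by $\varepsilon/2$ uniformly in $t$, and then use Lemma \ref{free1} finitely many times to make the head $<\varepsilon/2$ for $|t|$ small.

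I do not anticipate a genuine obstacle here: the only two points requiring (minor) care are the commutation of $U^{(k)}(t)$ with $S^{(k,\alpha)}$, which is immediate on the Fourier side, and the justification of the limit–sum interchange, which is precisely the situation covered by dominated convergence with the $t$-independent summable bound $2\xi^k\|\gamma^{(k)}\|_{H^\alpha}$. Finally I would remark that translation invariance in $t$ upgrades this to continuity of $t\mapsto U(t)\Gamma$ at every $t_0\in\mathbb{R}$, which together with the Duhamel representation of the solution constructed in \cite{CP} is what is ultimately needed to place that solution in $C([0,T],\mathcal{H}^1_\xi)$.
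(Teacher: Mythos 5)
Your proposal is correct and follows essentially the same route as the paper: reduce to the componentwise statement of Lemma \ref{free1} by commuting $U^{(k)}(t)$ with $S^{(k,\alpha)}$ on the Fourier side, and interchange limit and sum via dominated convergence for series using the uniform bound $2\xi^k\|\gamma^{(k)}\|_{H^\alpha}$ furnished by unitarity. The extra detail you give on the commutation and the alternative $\varepsilon/2$ tail-splitting argument just makes explicit what the paper leaves implicit.
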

\begin{proof}
\begin{align*}
&\|\left(U(t)-U(0)\right)\Gamma\|_{\mathcal{H}^\alpha_\xi}\\
&=\sum_{k=1}^\infty\xi^k\|\left(U^{(k)}(t)-U^{(k)}(0)\right)S^{(k,\alpha)}\gamma^{(k)}\|_{L^2}\\
&\rightarrow 0\text{ as }t\rightarrow 0
\end{align*}
by Lemma \ref{free1}, the fact that $\|U^{(k)}(t)\|_{L^2\rightarrow L^2}\le 1$, and the dominated convergence theorem for series.
\end{proof}
\begin{prop}\label{continuous}
The solution $\Gamma$ to the GP hierarchy constructed in \cite{CP} lies in $C([0,T],\mathcal{H}^1_\xi)$.
\end{prop}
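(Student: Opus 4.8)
The plan is to exploit the Duhamel (integral) representation of the solution together with the strong continuity of the free evolution $U(t)$ established in Lemmas \ref{free1} and \ref{free2}. Recall that the solution $\Gamma$ constructed in \cite{CP}, with initial data $\Gamma_0\in\cH^1_\xi$, satisfies the GP hierarchy \eqref{gp} in the integral form
\[
	\Gamma(t)\,=\,U(t)\Gamma_0\,+\,i\int_0^t U(t-s)\,\opB\Gamma(s)\,ds\,,
\]
and lies in the space $\{\Gamma\in L^\infty_{t\in[0,T]}\cH^1_\xi\,|\,\opB^+\Gamma,\opB^-\Gamma\in L^2_{t\in[0,T]}\cH^1_\xi\}$; in particular $\opB\Gamma=\opB^+\Gamma-\opB^-\Gamma\in L^2_{t\in[0,T]}\cH^1_\xi$. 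It therefore suffices to show that the two terms on the right-hand side each define a continuous map $[0,T]\to\cH^1_\xi$.

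For the free term, fix $t_0\in[0,T]$ and use the group property to write $U(t)\Gamma_0-U(t_0)\Gamma_0=U(t_0)\big(U(t-t_0)-U(0)\big)\Gamma_0$. Since each component $U^{(k)}(t_0)$ is an $L^2$-contraction and commutes with $S^{(k,1)}$ (both being Fourier multipliers), one has $\|U(t_0)\,\cdot\,\|_{\cH^1_\xi}\le\|\cdot\|_{\cH^1_\xi}$, whence
\[
	\|U(t)\Gamma_0-U(t_0)\Gamma_0\|_{\cH^1_\xi}\,\le\,\big\|\big(U(t-t_0)-U(0)\big)\Gamma_0\big\|_{\cH^1_\xi}\,\longrightarrow\,0\quad(t\to t_0)
\]
by Lemma \ref{free2}.

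For the Duhamel term $D(t):=\int_0^t U(t-s)\,\opB\Gamma(s)\,ds$, take $0\le t_0<t\le T$ and decompose
\[
	D(t)-D(t_0)\,=\,\int_{t_0}^t U(t-s)\,\opB\Gamma(s)\,ds\,+\,\int_0^{t_0}\big(U(t-s)-U(t_0-s)\big)\,\opB\Gamma(s)\,ds\,.
\]
The first integral is estimated in $\cH^1_\xi$, using the contractivity of each $U^{(k)}$ and the Cauchy--Schwarz inequality in $s$, by $\int_{t_0}^t\|\opB\Gamma(s)\|_{\cH^1_\xi}\,ds\le|t-t_0|^{1/2}\,\|\opB\Gamma\|_{L^2_{t\in[0,T]}\cH^1_\xi}$, which tends to $0$ as $t\to t_0$. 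For the second integral, write $U(t-s)-U(t_0-s)=U(t_0-s)\big(U(t-t_0)-U(0)\big)$; then its integrand is bounded in $\cH^1_\xi$ by $\big\|\big(U(t-t_0)-U(0)\big)\opB\Gamma(s)\big\|_{\cH^1_\xi}$, which converges to $0$ for a.e.\ $s\in[0,t_0]$ as $t\to t_0$ (by Lemma \ref{free2}, since $\opB\Gamma(s)\in\cH^1_\xi$ for a.e.\ $s$), and is dominated by $2\|\opB\Gamma(s)\|_{\cH^1_\xi}$, an $L^1([0,t_0])$ function because $L^2([0,T])\subset L^1([0,T])$. The dominated convergence theorem then forces the second integral to $0$ as well. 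The case $t<t_0$ is entirely analogous, and continuity of $D$, hence of $\Gamma=U(\cdot)\Gamma_0+iD$, follows.

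The argument is essentially routine once Lemmas \ref{free1}--\ref{free2} are in hand; the only point requiring a little care is the domination step for the second integral, where one invokes the a.e.\ finiteness and $L^1$-integrability on $[0,T]$ of $s\mapsto\|\opB\Gamma(s)\|_{\cH^1_\xi}$, both of which are consequences of $\opB^\pm\Gamma\in L^2_{t\in[0,T]}\cH^1_\xi$.
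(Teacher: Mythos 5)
Your argument is correct and follows essentially the same strategy as the paper: write $\Gamma$ via Duhamel and split the increment $\Gamma(t)-\Gamma(t_0)$ into the free term, a ``middle'' integral over $[0,t_0]$, and a short ``tail'' integral over $[t_0,t]$, handling them with Lemma~\ref{free2}, Lemma~\ref{free2} again, and Cauchy--Schwarz together with $\opB\Gamma\in L^2_{t\in[0,T]}\mathcal{H}^1_\xi$, respectively. The only minor divergence is in the middle term: the paper commutes $U(h)-U(0)$ outside the $s$-integral and applies Lemma~\ref{free2} once to the single fixed element $i\kappa_0\int_0^{t_0}U(t_0-s)\,\opB\Gamma(s)\,ds\in\mathcal{H}^1_\xi$ (which equals $\Gamma(t_0)-U(t_0)\Gamma_0$), whereas you apply Lemma~\ref{free2} pointwise in $s$ and then invoke dominated convergence using the $L^1$ majorant $2\|\opB\Gamma(s)\|_{\mathcal{H}^1_\xi}$; both steps are sound, but the paper's factorization is slightly more economical as it avoids the a.e.\ limit and the DCT step entirely.
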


\begin{proof}
As proven in \cite{CP}, the solution $\Gamma$ satisfies
\begin{align}
\Gamma&\in L^\infty_{t\in [0,T]}\mathcal{H}^1_\xi,\label{gamma_space}\\
B\Gamma&\in L^2_{t\in [0,T]}\mathcal{H}^1_\xi,\text{ and}\label{bgamma_space}\\
\Gamma(t)&=U(t)\Gamma_0+i\kappa_0\int_0^tU(t-s)B\Gamma(s)\,ds.\label{gamma_equation}
\end{align}
Thus, in $\mathcal{H}^1_\xi$, we have that
\begin{align}
&\lim_{h\rightarrow 0}\bigg[\Gamma(t+h)-\Gamma(t)\bigg]\nonumber\\
&=\lim_{h\rightarrow 0}\bigg[U(t+h)\Gamma_0+i\kappa_0\int_0^{t+h}U(t+h-s)B\Gamma(s)\,ds\nonumber\\
&\hspace{1cm}-U(t)\Gamma_0-i\kappa_0\int_0^tU(t-s)B\Gamma(s)\,ds\bigg]\nonumber\\
&=\lim_{h\rightarrow 0}U(h)\Gamma_0\label{term1}\\
&\hspace{1cm}+\lim_{h\rightarrow 0}(U(h)-U(0))\underbrace{i\kappa_0\int_0^tU(t-s)B\Gamma(s)\,ds}_{[*]}\label{term2}\\
&\hspace{1cm}+\lim_{h\rightarrow 0}i\kappa_0\int_t^{t+h}U(t+h-s)B\Gamma(s)\,ds\label{term3}.
\end{align}
By Lemma \ref{free2}, $\eqref{term1}=0$.  By \eqref{gamma_space} and \eqref{gamma_equation}, $[*]\in\mathcal{H}^1_\xi$, so it follows from Lemma \ref{free2} that $\eqref{term2}=0$.  Now
\begin{align*}
&\left\|\int_t^{t+h}U(t+h-s)B\Gamma(s)\,ds\right\|_{\mathcal{H}^1_\xi}\\
&\le\int_t^{t+h}\left\|U(t+h-s)B\Gamma(s)\right\|_{\mathcal{H}^1_\xi}\,ds\\
&\le\sqrt{h}\|U(t+h-s)B\Gamma(s)\|_{L^2_{s\in [t,t+h]}\mathcal{H}^1_\xi}\\
&=\sqrt{h}\|B\Gamma(s)\|_{L^2_{s\in [t,t+h]}\mathcal{H}^1_\xi}\\
&\rightarrow 0\text{ as }h\rightarrow 0
\end{align*}
by \eqref{bgamma_space}, so $\eqref{term3}=0$.
\end{proof}

\section{Iterated Duhamel formula and boardgame argument}

In this part of the appendix, we recall a technical result from \cite{CPBBGKY}
that is used in parts of this paper. It corresponds to Lemma B.3 in \cite{CPBBGKY}. 
 
Let $\Xi=({\Xi^{(k)}})_{n\in\N}$ denote 
a sequence of functions 
${\Xi^{(k)}}\in L^2_{t\in[0,T]}H^1(\R^{3k}\times\R^{3k})$, for  $T>0$.
Then, we define the associated sequence $\duh_j(\Xi)$ 
of  {\em $j$-th level iterated Duhamel terms} based on $B_N^{main}$
(see Section \ref{ssec-BBGKY-1} for notations), with components given by
\eqn\label{eq-Duh-j-def-1}
	\lefteqn{
	\duh_j({\Xi})^{(k)}(t) 
	}
	\\
	& := & i^j\int_0^t dt_1 \cdots \int_0^{t_{j-1}}dt_j
	B_{N;k+1}^{main}
	e^{i(t-t_1)\Delta_\pm^{(k+1)}}
	B_{N;k+2}^{main}e^{i(t_1-t_2)\Delta_\pm^{(k+2)}}
	\nonumber\\
	&&\quad\quad\quad\quad\quad\quad
	B_{N;k+2}^{main}\cdots 
	\cdots e^{i (t_{j-1}-t_j) \Delta_\pm^{(k+j)}}   
	( \, {\Xi} \, )^{(k+j)}(t_j) \,, \;\;
	\nonumber
\eeqn 
with the conventions $t_0:=t$, and 
\eqn
	\duh_0(\Xi)^{(k)}(t) \, := \, ( \, \Xi \, )^{(k)}(t)
\eeqn
for $j=0$.  Using the boardgame estimates of \cite{esy1,esy2,KM}, one obtains:

\begin{lemma}
\label{lm-boardgame-est-1} 
For $\Xi=({\Xi^{(k)}})_{k\in\N}$ as above,
\eqn\label{eq-BGamma-Duhj-combin-bd-1}
	\lefteqn{
	\| \, \duh_j(\Xi)^{(k)}(t) 
	\, \|_{L^2_{t\in I}H^1(\R^{3k}\times\R^{3k})} 
	}
	\\
	&&\hspace{1cm}
	\, \leq \, k \, C_0^k \, (c_0 T)^{\frac {j}2} \|{\Xi^{(k+j)}}
	\|_{L^2_{t\in I}H^1(\R^{3(k+j)}\times\R^{3(k+j)})}  \,,
	\nonumber
\eeqn 
where the constants $c_0,C_0$ depend only on $d,p$.  For this work, the dimension is given by $d=3$ and the nonlinearity is given by $p=2$ (cubic GP hierarchy).
\end{lemma}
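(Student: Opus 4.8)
This estimate is Lemma~B.3 of \cite{CPBBGKY}, and the plan is to reproduce that argument, which combines two standard ingredients: the Klainerman--Machedon space-time estimate for a single interaction vertex, and the Erd\"os--Schlein--Yau / Klainerman--Machedon ``boardgame'' reorganization of the iterated Duhamel expansion. First I would expand each operator $B^{main}_{N;k+i}=B^{+,main}_{N;k+i}-B^{-,main}_{N;k+i}$ occurring in \eqref{eq-Duh-j-def-1} into its individual vertex pieces $B^{\pm,main}_{N;\ell;k+i}$, $1\le\ell\le k+i-1$. This writes $\duh_j(\Xi)^{(k)}(t)$ as a sum, over all choices of contraction indices $(\ell_1,\dots,\ell_j)$ and of the signs, of time-ordered integrals
\[
	\int_{0\le t_j\le\dots\le t_1\le t} B^{\pm}_{N;\ell_1;k+1}e^{i(t-t_1)\Delta_\pm^{(k+1)}}B^{\pm}_{N;\ell_2;k+2}\cdots B^{\pm}_{N;\ell_j;k+j}e^{i(t_{j-1}-t_j)\Delta_\pm^{(k+j)}}\Xi^{(k+j)}(t_j)\,dt_j\cdots dt_1\,.
\]
The naive number of such terms is $\prod_{i=1}^{j}2(k+i-1)$, which is factorially large in $j$, so a direct triangle-inequality estimate would fail.

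The key step is to collapse this sum using the permutation symmetry \eqref{symmetry} of each $\gamma^{(m)}$: many of the above terms coincide after relabelling, and the ``boardgame'' lemma of \cite{KM} --- a reformulation of the Feynman-graph counting of \cite{esy1,esy2} --- shows that the full sum can be rewritten as a sum over a family of at most $C^{k+j}$ equivalence classes, each represented by a single term of the above form integrated over a (larger) union-of-simplices domain $D\subseteq[0,t]^j$, with the outermost interaction $B^{main}_{N;k+1}=\frac{N-k}{N}\sum_{\ell=1}^{k}\bigl(B^{+,main}_{N;\ell;k+1}-B^{-,main}_{N;\ell;k+1}\bigr)$ kept as a sum over the $k$ external particle labels. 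Here $C$ depends only on $d=3$; this is the combinatorial input that removes the factorial, and I would import it verbatim from \cite{CPBBGKY}.

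For each representative I would then estimate the $L^2_{t\in I}H^1(\R^{3k}\times\R^{3k})$ norm by peeling off one vertex at a time, from the innermost outward. The analytic engine is the Klainerman--Machedon estimate
\[
	\big\|B^{\pm,main}_{N;\ell;m}\,e^{is\Delta_\pm^{(m)}}g\big\|_{L^2_{s\in\R}H^1(\R^{3(m-1)}\times\R^{3(m-1)})}\ \le\ C\,\|g\|_{H^1(\R^{3m}\times\R^{3m})}\,,
\]
uniform in $\ell$, $m$, and $N$; this is the Strichartz-type bound that rests on the uniform boundedness of the $J$-integral of Proposition~2.1 in \cite{KM}, exactly as used in the proof of Lemma~\ref{lm-R1bound-1-new} (uniformity in $N$ relies only on $\|\widehat V\|_{L^\infty}<\infty$). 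Applying it together with Minkowski's integral inequality to bring the $L^2$-norm in time inside the innermost $dt_j$-integral, followed by Cauchy--Schwarz in $t_j$ over an interval of length $\le T$, bounds the $L^2_tH^1$ norm of the level-$j$ object by $C\sqrt{T}$ times that of the level-$(j-1)$ object; iterating through all $j$ levels produces the factor $(C\sqrt{T})^{j}$, while the retained outermost sum over $\ell\in\{1,\dots,k\}$ contributes the prefactor $k$. Thus each representative is bounded by $k\,(C\sqrt{T})^{j}\,\|\Xi^{(k+j)}\|_{L^2_{t\in I}H^1}$.

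Summing over the at most $C^{k+j}$ representatives then gives
\[
	\|\duh_j(\Xi)^{(k)}(t)\|_{L^2_{t\in I}H^1(\R^{3k}\times\R^{3k})}\ \le\ k\,C_0^{k}\,(c_0T)^{j/2}\,\|\Xi^{(k+j)}\|_{L^2_{t\in I}H^1(\R^{3(k+j)}\times\R^{3(k+j)})}
\]
with $C_0$ and $c_0$ suitable powers of $C$ depending only on $d=3$ and $p=2$, which is the claim. I expect the \emph{main obstacle} to be the boardgame reduction of the second step: passing from the factorially many time-ordered vertex terms to a merely geometrically growing family of representatives, which is exactly where the permutation symmetry of the density matrices is essential. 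Once that combinatorial fact is granted --- as recalled from \cite{esy1,esy2,KM,CPBBGKY} --- the remaining space-time estimate and the bookkeeping of the powers of $T$ are routine.
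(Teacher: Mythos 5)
Your proposal reproduces the standard two-ingredient argument (Klainerman--Machedon spacetime vertex estimate plus the Erd\"os--Schlein--Yau/Klainerman--Machedon boardgame reorganization), which is exactly what this paper relies on: the lemma is stated in the appendix purely as a recall of Lemma~B.3 of \cite{CPBBGKY}, with the one-line attribution ``Using the boardgame estimates of \cite{esy1,esy2,KM}, one obtains.'' Your sketch correctly identifies the vertex decomposition, the factorial growth problem, the combinatorial collapse to a geometrically growing family of representatives, and the peel-off estimate producing the $(c_0T)^{j/2}$ factor and the prefactor $k\,C_0^k$, so it is consistent with the cited source and with the paper's (referential) treatment.
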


Lemma \ref{lm-boardgame-est-1} is used for the proof of the next result (by suitably exploiting the
splitting $\opB_N=\opB_N^{main}+\opB_N^{error}$), which 
corresponds to Lemma B.3 in \cite{CPBBGKY}.

\begin{lemma}
\label{lm-BGamma-Cauchy-1}
Let $\delta'>0$ be defined by
\eqn
	\beta & = & \frac{1-\delta'}{4}  \, .
\eeqn
Assume that $N$ is sufficiently large that the condition
\eqn
	K & < & \frac{\delta'}{\log C_0} \, \log N \,,
\eeqn
holds, where the constant $C_0$ is as in Lemma \ref{lm-boardgame-est-1}.

Assume that  
$\Xi_N^K\in L^2_{t\in I}\cH_{\xi'}^1$ for some $0<\xi'<1$, 
and that $\xi$ is small enough that $0<\xi<\eta\xi'$,  with
\eqn\label{eq-eta-ineq-def-1}
	\eta \, < \, (\max\{1,C_0\})^{-1} \,.
\eeqn
Let $\Theta_N^K$ and $\Xi_N^K$ satisfy the integral equation
\eqn\label{eq-tildTheta-eq-1}
	\Theta_N^K(t) \, = \, \Xi_N^K(t) 
	\, + \, i \int_0^t \opB_N \, U(t-s) \, \Theta_N^K(s) ds 
\eeqn
The superscript "$K$"  in  $\Theta_N^K$ and $\Xi_N^K$ means that only the first $K$ components  are nonzero,
and $\opB_N=\opB_N^{main}+\opB_N^{error}$.

Then, the estimate
\eqn 
	\| \Theta_N^K\|_{L^2_{t\in I}\cH_\xi^1} 
	\, \leq \, C_1(T,\xi,\xi') \,  \|\Xi_N^K\|_{L^2_{t\in I}\cH_{\xi'}^1}
	\label{eq-Gammadiff-Cauchy-aux-2}
\eeqn
holds for a finite constant $C_1(T,\xi,\xi')>0$ independent of $K,N$. 
\end{lemma}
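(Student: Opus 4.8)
The plan is to solve the linear Volterra equation \eqref{eq-tildTheta-eq-1} for $\Theta_N^K$ by iterating the Duhamel formula, exploiting that $\Theta_N^K$ and $\Xi_N^K$ have only their first $K$ components nonzero, so that the iteration terminates at bounded depth in each component. Split $B_N=B_N^{main}+B_N^{error}$ as in Section~\ref{ssec-BBGKY-1}: in the Duhamel iteration for the $k$-th component, $B_N^{main}$ pulls in the $(k+1)$-st component, while $B_N^{error}$, being block-diagonal (level-preserving), pulls in the $k$-th component again. Substituting \eqref{eq-tildTheta-eq-1} into itself repeatedly produces a Neumann series for $\Theta_N^K$ in terms of $\Xi_N^K$; each term is labelled by a ``history'' of main/error choices, and since $\Xi_N^K$ vanishes above level $K$, a history contributing to the $k$-th component contains at most $K-k$ main operators (but arbitrarily many error operators).

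First I would record the smallness of the error operator. Since $V_N(x)=N^{3\beta}V(N^\beta x)$, the elementary bounds $\|V_N\|_{L^\infty}\le CN^{3\beta}$ and $\|\widehat{\nabla V_N}\|_{L^1}\le CN^{4\beta}$ (as in the proof of Proposition~\ref{7.1}), combined with the prefactor $\tfrac1N$ and the $\binom k2$ summation in $B_{N;k}^{error}=\tfrac1N\sum_{i<j}B_{N;i,j;k}^{error}$, give that $B_{N;k}^{error}$ has operator norm at most $Ck^2N^{4\beta-1}=Ck^2N^{-\delta'}$ on $H^1(\R^{3k}\times\R^{3k})$, using $\beta=\tfrac{1-\delta'}4$. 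Since $k\le K<\tfrac{\delta'}{\log C_0}\log N$, this is $o(1)$ as $N\to\infty$ uniformly in $k\le K$, hence $\le\tfrac12$ for $N$ large. Thus a history with $b$ error operators contributes a prefactor $\lesssim (Ck^2N^{-\delta'})^{b}$, and for a fixed main skeleton the sum over all such histories converges geometrically.

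Next I would estimate the main operators using the boardgame bound of Lemma~\ref{lm-boardgame-est-1}, which controls a string of $a$ main operators acting on level $k$ by $k\,C_0^k\,(c_0T)^{a/2}\,\|\Xi^{(k+a)}\|_{L^2_{t\in I}H^1}$; this is finite precisely because $a\le K-k$ is bounded. One must check that interspersing the error operators --- which are multiplication operators at a fixed level --- leaves the boardgame grouping of the time-ordered integrals intact, so that a mixed history with $a\le K-k$ main and $b$ error operators is bounded, up to harmless combinatorial constants, by $(Ck^2N^{-\delta'})^{b}$ times the pure-main boardgame bound with $a$ operators. Summing over histories then gives $\|(\Theta_N^K)^{(k)}\|_{L^2_{t\in I}H^1}\lesssim\sum_{a=0}^{K-k} k\,C_0^k\,(c_0T)^{a/2}\,\|(\Xi_N^K)^{(k+a)}\|_{L^2_{t\in I}H^1}$.

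Finally I would sum over $k$ with the weight $\xi^k$ and reindex by the top level $m=k+a$. The factor $C_0^k$ is dominated against $(\xi')^m$ exactly because $\xi<\eta\xi'$ with $\eta<(\max\{1,C_0\})^{-1}$: one has $\sum_{k\le m}\xi^k\,k\,C_0^k\,(c_0T)^{(m-k)/2}\le C_1(T,\xi,\xi')\,(\xi')^m$ since $\xi C_0<\xi'$ and, per component, only finitely many powers of $(c_0T)^{1/2}$ occur with the bulk of the mass at small exponent; the constant $C_1(T,\xi,\xi')$ is finite and independent of $K$ and $N$. Summing over $m$ yields \eqref{eq-Gammadiff-Cauchy-aux-2}. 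This is essentially the computation carried out in detail for Lemma~B.3 of \cite{CPBBGKY} (there with $\delta=0$); the new point here is merely to track the truncation parameter $K$ throughout. The main obstacle is precisely this bookkeeping for the mixed main/error histories: one must ensure that the exponentially many histories at a given depth are tamed, which works because the number of main operators per history --- where the dangerous combinatorics of the boardgame lives --- is capped at $K$, while each individual error operator is genuinely small, of size $N^{-\delta'}$ up to polylogarithmic-in-$N$ factors.
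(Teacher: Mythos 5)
You should first note that the paper itself does not prove Lemma \ref{lm-BGamma-Cauchy-1}: the Appendix explicitly presents it as a recall of Lemma~B.3 of \cite{CPBBGKY}, so there is no in-paper argument against which to compare. That said, your overall strategy --- iterate the Volterra/Duhamel equation, observe that the $K$-truncation bounds the number of $B_N^{main}$ applications in each history by $K-k$, control the pure-main skeleton with Lemma \ref{lm-boardgame-est-1}, absorb the error operators by smallness, and sum $\sum_k\xi^k(\cdots)$ using $\xi C_0<\xi'$ --- is indeed the intended route.

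There is, however, a concrete gap in the smallness estimate for the error operator. You claim $\|B_{N;k}^{error}\|_{H^1\to H^1}\lesssim k^2 N^{4\beta-1}=k^2N^{-\delta'}$, invoking the bounds $\|V_N\|_{L^\infty}\lesssim N^{3\beta}$ and $\|\widehat{\nabla V_N}\|_{L^1}\lesssim N^{4\beta}$ from Proposition \ref{7.1}. But Proposition~\ref{7.1} deals with the \emph{main} interaction $V_N(x_\ell-x_{K+1})$, where only the single weight $\langle\nabla_{x_\ell}\rangle$ lands on the potential (the variable $x_{K+1}$ is traced out and carries no $S^{(k,1)}$ weight). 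The error operator multiplies by $V_N(x_i-x_j)$ with both $i,j\leq k$, so both $\langle\nabla_{x_i}\rangle$ and $\langle\nabla_{x_j}\rangle$ hit $V_N$. On the Fourier side the relevant multiplier bound is $\int\langle\eta\rangle^2|\widehat{V}_N(\eta)|\,d\eta\sim N^{5\beta}$ rather than $N^{4\beta}$, so the naive $H^1$ operator norm of $B_{N;k}^{error}$ is of order $k^2N^{5\beta-1}$. Since the lemma requires $\delta'=1-4\beta>0$, i.e., $\beta<\tfrac14$ but not $\beta<\tfrac15$, this naive bound is useless in the regime $\beta\in(\tfrac15,\tfrac14)$: the error operator is not small in operator norm there. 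The fix, which your sketch omits, is that the error operator never appears without an attached free evolution $U(t-s)$ inside the Duhamel integral, and a space--time (Klainerman--Machedon/Strichartz-type) estimate in the variable $x_i-x_j$ recovers a factor $N^{-\beta}$, giving the usable bound of size $k^2N^{4\beta-1}=k^2N^{-\delta'}$ for the \emph{space--time} norm. This is in fact the content of Proposition~A.2 of \cite{CPBBGKY}, which the present paper invokes in the proof of Lemma \ref{lm-R1bound-1-new}. So the smallness you need is genuinely a dispersive estimate, not an operator-norm estimate, and the distinction is load-bearing. The remaining point you flag --- that interspersing level-preserving multiplication operators must be shown compatible with the boardgame grouping --- is indeed nontrivial and needs to be carried out, but your instinct that the capped number of main operators per history tames the combinatorics is correct.
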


$\;$ 

\subsection*{Acknowledgements} 
The work of T.C. was supported by NSF grants DMS-1009448
and DMS-1151414 (CAREER).

\end{document}